\newtheorem{remark}{Remark}
\definecolor{blueviolet}{RGB}{60,50,200}
\definecolor{oliveg}{RGB}{40,200,30}
\theoremstyle{definition}
\newtheorem{definition}{Definition}[section]
\newtheorem{problem}{Problem}
\theoremstyle{plain}
\newtheorem{lemma}{Lemma}[section]
\newtheorem{theorem}{Theorem}
\newtheorem{fact}{Fact}
\newtheorem{corollary}{Corollary}[section]
\newtheorem{claim}{Claim}[section]
\newtheorem{observation}{Observation}[section]
\newtheorem{proposition}{Proposition}[section]
\newcommand{\der}[3]{\partial_{#1}^{#2}{~#3}}
\newcommand{\app}[4]{\left\langle \pi_{#1}(\der{#2}{#3}{#4}) \right\rangle}
\newcommand{\diag}{\textnormal{diag}}
\newcommand{\CAL}[1]{\mathcal{#1}}
\newcommand{\F}{\mathbb{F}}
\newcommand{\N}{\mathbb{N}}
\newcommand{\Q}{\mathbb{Q}}
\newcommand{\Z}{\mathbb{Z}}
\newcommand{\cL}{\mathcal{L}}
\newcommand{\cR}{\mathcal{R}}
\newcommand{\cD}{\mathcal{D}}
\newcommand{\cN}{\mathcal{N}}
\newcommand{\C}{\mathbb{C}}
\newcommand{\R}{\mathbb{R}}
\newcommand{\poly}{\textnormal{poly}}
\newcommand{\qpoly}{\textnormal{quasi-poly}}
\newcommand{\rank}{\textnormal{rank}}
\newcommand{\adj}{\textnormal{adj}}
\renewcommand{\char}{\textnormal{char}}
\newcommand{\VP}{\mathsf{VP}}
\newcommand{\VNP}{\mathsf{VNP}}
\newcommand{\APP}{\mathsf{APP}}
\newcommand{\GL}{\text{GL}}
\newcommand{\veca}{{\mathbf{a}}}
\newcommand{\vecx}{{\mathbf{x}}}
\newcommand{\vecy}{{\mathbf{y}}}
\newcommand{\vecz}{{\mathbf{z}}}
\newcommand{\vecu}{{\mathbf{u}}}
\newcommand{\vecv}{{\mathbf{v}}}
\newcommand{\vecw}{{\mathbf{w}}}
\mathchardef\mhyphen="2D
\newcommand{\ext}{\text{ext}}
\newcommand\footnoteref[1]{\protected@xdef\@thefnmark{\ref{#1}}\@footnotemark}
\begin{document}
\title{Learning sums of powers of low-degree polynomials in the non-degenerate case}
\author{Ankit Garg\\
\normalsize{Microsoft Research India}\\
\normalsize{\tt{garga@microsoft.com}}
\and {Neeraj Kayal}\\
\normalsize{Microsoft Research India}\\
\normalsize{\tt{neeraka@microsoft.com}}
\and {Chandan Saha}\\
\normalsize{Indian Institute of Science}\\
\normalsize{{\tt chandan@iisc.ac.in}}}
\maketitle
\thispagestyle{empty}

\begin{abstract}

We develop algorithms for writing a polynomial as sums of powers of low degree polynomials. Consider an $n$-variate degree-$d$ polynomial $f$ which can be written as
$$f = c_1Q_1^{m} + \ldots + c_s Q_s^{m},$$
where each $c_i\in \F^{\times}$, $Q_i$ is a homogeneous polynomial of degree $t$, and $t m = d$. In this paper, we give a $\poly((ns)^t)$-time learning algorithm for finding the $Q_i$'s given (black-box access to) $f$, if the $Q_i's$ satisfy certain \emph{non-degeneracy} conditions and $n$ is larger than $d^2$. The set of degenerate $Q_i$'s (i.e., inputs for which the algorithm does not work) form a non-trivial variety and hence if the $Q_i$'s are chosen according to any reasonable (full-dimensional) distribution, then they are non-degenerate with high probability (if $s$ is not too large). This problem generalizes symmetric tensor decomposition, which corresponds to the $t = 1$ case and is widely studied, having many applications in machine learning. Our algorithm (for $t=2$) allows us to solve the moment problem for mixtures of zero-mean Gaussians in the non-degenerate case.

Our algorithm is based on a scheme for obtaining a learning algorithm for an arithmetic circuit model from lower bound for the same model, provided certain non-degeneracy conditions hold. The scheme reduces the learning problem to the problem of decomposing two vector spaces under the action of a set of linear operators, where the spaces and the operators are derived from the input circuit and the complexity measure used in a typical lower bound proof. The non-degeneracy conditions are certain restrictions on how the spaces decompose. Such a scheme is present in a rudimentary form in an earlier work \cite{KayalS19}. Here, we make it more general and detailed, and potentially applicable to learning other circuit models.

An exponential lower bound for the representation above (also known as homogeneous $\Sigma \wedge \Sigma \Pi^{[t]}$ circuits) is known using the shifted partials measure. However, the number of linear operators in shifted partials is exponential and also the non-degeneracy condition emerging out of this measure is unlikely to be satisfied by a random $\Sigma \wedge \Sigma \Pi^{[t]}$ circuit when the number of variables is large with respect to the degree. We bypass this hurdle by proving a lower bound (which is nearly as strong as the previous bound) using a novel variant of the partial derivatives measure, namely \emph{affine projections of partials} ($\APP$). The non-degeneracy conditions appearing from this new measure are satisfied by a random $\Sigma \wedge \Sigma \Pi^{[t]}$ circuit. The $\APP$ measure could be of independent interest for proving other lower bounds.     
\end{abstract}

\clearpage

\newpage
\thispagestyle{empty}
\tableofcontents
\newpage

\pagenumbering{arabic}

\section{Introduction} \label{sec:intro}

Arithmetic circuits form a natural model for computing polynomials. They compute polynomials using basic arithmetic operations such as addition and multiplication.\footnote{One can also allow division, but it is a classical result that one can eliminate division operations from an arithmetic circuit without too much blow up in the circuit size \cite{strassen1973vermeidung}.} Formally, an arithmetic circuit is a directed acyclic graph such that the sources are labelled with variables or constants from the underlying field, the internal nodes (gates) are labelled with the arithmetic operations and the sink(s) outputs the polynomial(s) computed by the circuit.\footnote{Sometimes, one can let the edges going into addition gates to be labelled with field constants to allow scalar linear combinations. But, all these models can be interconverted to each other without too much blowup in the circuit size.} \emph{Size} of a circuit is the number of edges in the underlying graph, and \emph{depth} is the length of a longest path from a source to a sink node. The three main questions of interest regarding arithmetic circuits are the following:

\begin{itemize}
\item \textbf{Lower bounds.} Is there an "explicit" polynomial that requires super-polynomial sized arithmetic circuits to compute? This\footnote{One could also consider various other notions of explicitness while framing the lower bounds question.} is the famed $\VP$ vs $\VNP$ question (an arithmetic analogue of the P vs NP question\footnote{Or rather $\#$P vs NC.}). 
\item \textbf{Polynomial Identity Testing (PIT).} Here the question is, given\footnote{Either as a black-box or explicitly.} an arithmetic circuit, determine if its output is identically zero. There is an easy randomized algorithm for this problem (plug in random values and check if the output is zero). Finding a deterministic algorithm is a major open question in this field.
\item \textbf{Reconstruction.} Here the question is, given\footnote{Again, either as a black box or explicitly.} a polynomial, find the smallest (or approximately smallest) arithmetic circuit computing it.
\end{itemize}

For all the above questions, there is very little progress on them for general arithmetic circuits. So, a lot of effort has gone into studying them for restricted classes of arithmetic circuits (like constant depth, multilinear, set-multilinear, non-commutative circuits etc.). We refer the interested reader to the excellent surveys \cite{ShpilkaY10, ChenKW11, saptharishi2015survey} on this topic.
\\

A lot of interconnections are known between the three above-mentioned problems, some of which we touch upon below.
\begin{itemize}
\item \textbf{Lower bounds and PIT.} There are several connections that go both ways between lower bounds and PIT. It is known that lower bounds for general arithmetic circuits would imply PIT algorithms via the hardness vs randomness tradeoff (in the algebraic setting) \cite{kabanets2004derandomizing, dvir2010hardness}. Furthermore, non-trivial (deterministic) PIT algorithms also imply lower bounds \cite{heintz1980testing, kabanets2004derandomizing, agrawal2005proving}. While these concrete connections are not always present for restricted circuit models, several PIT algorithms have been inspired by corresponding lower bounds, e.g., \cite{raz2005deterministic, ForbesS13, oliveira2016subexponential, forbes2015deterministic}.
\item \textbf{Lower bounds and reconstruction.} It is known that worst case reconstruction of a circuit model implies lower bound for the same model \cite{FortnowK09, Volkovich16} (also see the discussion in Section \ref{sec:hardness of learning}). In the other direction, several reconstruction algorithms are inspired by lower bounds for the corresponding models \cite{KlivansS06, ForbesS13, GuptaKL11,GKQ14, KayalNST17, KayalNS19, KayalS19} (see also the discussion in Section \ref{sec:previous work}).
\item \textbf{PIT and reconstruction.} In one direction, a deterministic (worst case) reconstruction algorithm clearly implies a deterministic PIT algorithm (both in the black-box model) since the reconstruction algorithm would have to output an extremely small circuit when the circuit computes the zero polynomial. Randomized (or average-case) reconstruction algorithms may not have anything to do with deterministic PIT algorithms, of course. In the other direction, as discussed in \cite{ShpilkaY10}, black-box PIT algorithms seemingly can help in designing reconstruction algorithms. 
This is because a black-box PIT algorithm outputs a list of evaluation points such that any circuit from the class being considered evaluates to a non-zero value on at least one of points, and hence any two circuits in the class computing different polynomials evaluate to a different value on at least one of the points.\footnote{Assuming the class is closed under subtractions.} So, the list of evaluation points determines the circuit and now it remains to be seen if one can efficiently reconstruct the circuit from these evaluations.\footnote{Of course, a random set of points forms a hitting set and it seems hard to reconstruct the circuit given its evaluations on random points. However, the hitting sets constructed for deterministic PIT algorithms typically have a lot of special structures which could be exploited for reconstruction.} The reconstruction algorithms for sparse polynomials and constant top fan-in depth three circuits and read-once algebraic branching programs are some examples of reconstruction using PIT ideas \cite{KlivansS01, Shpilka09, KarninS09, ForbesS13}. Of course, deterministic PIT algorithms can also be sometimes used to get deterministic reconstruction algorithms when randomized ones are known \cite{KlivansS06, ForbesS13}.
\end{itemize}

To summarize, the three main problems in arithmetic complexity are richly interrelated and progress on one question spurs progress on the others. Hence, it is imperative to find more connections between these problems. This paper continues the line of work in \cite{KayalS19} on building a new connection between lower bounds and reconstruction. We build on the work of \cite{KayalS19} to further develop a meta framework\footnote{In \cite{KayalS19}, this framework is present in a rudimentary form.} that yields reconstruction algorithms in the \emph{non-degenerate} setting from lower bounds for the corresponding circuit models. In addition to developing this framework further, we implement this framework to learn sums of powers of low degree polynomials in the non-degenerate case (described in Section \ref{sec:model and results}). We remark that assuming some kind of non-degeneracy conditions might be essential for designing efficient learning algorithms; otherwise for most circuit models, one will have to assume constant top fan-in to get polynomial time algorithms. This is because of various hardness results about reconstruction in the worst case (see Section \ref{sec:hardness of learning}). The usefulness of assuming non-degeneracy conditions is best illustrated by the following example.
\\

Consider the model of homogeneous depth three powering circuits. This corresponds to the representation
$$
f(\vecx) = \sum_{i=1}^s \ell_i(\vecx)^d,
$$
where $\ell_i$'s are linear polynomials. Finding such a decomposition with the minimum possible $s$ is NP-hard even for degree $d = 3$ (this corresponds to symmetric tensor decomposition) \cite{Hastad90, Shitov16}. Regardless of the NP-hardness, one can design algorithms for this model under reasonable assumptions and such algorithms are widely used in machine learning (see Section \ref{sec:gaussians_intro}). One such algorithm, attributed to Jennrich \cite{harshman1970foundations, leurgans1993decomposition}, says that given $f(\vecx) = \sum_{i=1}^s \ell_i(\vecx)^3$ with $s \le n$ and $\ell_i$'s linearly independent, we can find the $\ell_i$'s in polynomial time.\footnote{There are natural extensions of this result for larger values of $d$ that can handle a larger number of components (roughly matching the best lower bounds we can prove for this model), e.g., see \cite{de2007fourth, AndersonBGRV14, BhaskaraCMV14, KayalS19}. Other algorithms include \cite{Kayal11, Garcia-MarcoKP18}.} A couple of things to notice about the assumptions are:
\begin{itemize}
\item $s \le n$: The number of summands that the algorithm can handle is (up to a small constant) the best known lower bound we can prove for this model (sums of cubes of linear forms or order-$3$ symmetric tensor decomposition).
\item The set of inputs for which the algorithm does not work, i.e., when $\ell_i$'s are linearly dependent, form a non-trivial variety (if $s \le n$). So, the algorithm would work for "random" $\ell_i$'s with high probability.
\end{itemize}

We hope to generalize the above kind of non-degenerate case learning algorithms to other circuit models. The circuit size which one might be able to handle if one implements the meta framework will depend on the lower bound one can prove for the circuit model. Since tensor decomposition algorithms (which corresponds to reconstruction for a very simple arithmetic circuit model) are widely used in machine learning, our meta framework raises the exciting possibility of importing techniques from arithmetic complexity to machine learning via reconstruction of various circuit models in the non-degenerate case. We mention one such possibility in Section \ref{sec:gaussians_intro}.
\\


Let us briefly describe the roadmap for the rest of this section now. In Section \ref{sec:model and results}, we describe our main results about learning sums of powers of low degree polynomials in the non-degenerate case. In Section \ref{sec:learning from lower bound}, we describe our techniques: the meta framework for turning lower bounds into reconstruction algorithms, the implementation for sums of powers of low degree polynomials and the non-degeneracy conditions needed for our algorithm to work. In Section \ref{sec:gaussians_intro}, we describe the connection to mixtures of Gaussians. Finally, in Sections \ref{sec:hardness of learning} and \ref{sec:previous work}, we review hardness results about reconstruction and some previous work.

\subsection{The model and our results} \label{sec:model and results}

We study the learning problem for an interesting subclass of depth four arithmetic circuits which is a generalization of depth three powering circuits or symmetric tensors. A circuit in this class, computing an $n$-variate degree-$d$ polynomial $f \in \F[\vecx]$, is an expression
\begin{equation} \label{eqn:sum of powers}
f = c_1Q_1^m + \ldots + c_s Q_s^m,
\end{equation}
where each $c_i\in \F^{\times}$, $Q_i$ is a homogeneous polynomial\footnote{The result in this paper holds even if $f = c_1Q_1^{m_1} + \ldots + c_s Q_s^{m_s}$, where each $Q_i$ (not necessarily homogeneous) has degree $t_i \leq t$ and $t_im_i = d$ for all $i\in[s]$. We present the analysis assuming homogeneity and uniform exponent $m$ for simplicity of exposition.} of degree $t$, and $tm = d$. Such a circuit is called a homogeneous $\Sigma \wedge \Sigma \Pi^{[t]}(s)$ circuit.\footnote{Technically, the expressions of this kind are known as $\Sigma \wedge \Sigma \Pi^{[t]}(s)$ \emph{formulas}. But, there is only a minor distinction between formulas and circuits in the constant depth case. Furthermore, in the random circuit setting, even this minor distinction is not there.} The parameter $t$ is typically much smaller than $d$. \\

We show that a homogeneous $\Sigma \wedge \Sigma \Pi^{[t]}$ circuit can be reconstructed efficiently if it satisfies certain \emph{non-degeneracy} conditions. We defer stating these conditions precisely to the end of this section, but it is worth mentioning that a \emph{random} $\Sigma \wedge \Sigma \Pi^{[t]}$ circuit is non-degenerate with high probability. In other words, if the coefficients of the monomials in $Q_1, \ldots, Q_s$, in Equation \eqref{eqn:sum of powers}, are chosen uniformly at random from a sufficiently large subset of $\F$ then the resulting circuit is non-degenerate with high probability. In this sense, \emph{almost all} homogeneous $\Sigma \wedge \Sigma \Pi^{[t]}$ circuits can be reconstructed efficiently. The following theorem is proved in Section \ref{sec:learning}. We will assume that factoring univariate polynomials over $\F$ can be done in randomized polynomial time\footnote{Univariate polynomials over finite fields can be factored in randomized polynomial time \cite{Berlekamp70}, and over $\Q$, they can be factored in deterministic polynomial time \cite{LenstraLL82}.}.

\begin{theorem}[Learning \emph{non-degenerate} sums of powers of low degree polynomials] \label{thm:learning sums of powers} 
Let $n,d,s,t \in \N$ such that $n \geq d^2$, $2 \leq t \leq \sqrt{\frac{\log d}{10 \cdot \log \log d}}$, $|\F| \geq (ns)^{150 \cdot t}$, $\char(\F) = 0$ or $> 2d$ and \linebreak $s \leq \min(n^{\frac{d}{1100 \cdot t^2}}, \exp(n^{\frac{1}{30 \cdot t^2}}))$. Then, there is a randomized algorithm which when given black-box access to an $n$-variate degree-$d$ non-degenerate polynomial $f = c_1Q_1^m + \ldots + c_s Q_s^m$, where each $c_i\in \F^{\times}$, $Q_i$ is a homogeneous polynomial of degree $t$, and $tm = d$ and the total number of monomials in $Q_i$'s is $\sigma$, outputs (with high probability) $Q_1',\ldots, Q_s'$ such that there exist a permutation $\pi: [s] \rightarrow [s]$ and non-zero constants $c_1',\ldots, c_s'$ so that $Q_i' = c_i' Q_{\pi(i)}$ for all $i \in [s]$. The running time of the algorithm is $\poly(n, \sigma, s^t)$.\footnote{Here, $\exp(x) = 2^x$. Once we know $Q'_1, \ldots, Q'_s$, we can determine the non-zero constants $d_1, \ldots, d_s$ such that $f = d_1Q_1'^m + \ldots + d_s Q_s'^m$ in randomized polynomial time.}
\end{theorem}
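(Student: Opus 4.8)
I would prove this by instantiating the meta-framework discussed above: use the $\APP$ measure to produce two vector spaces and a family of linear operators relating them, solve the resulting vector-space-decomposition problem by linear algebra, and read off the $Q_i$'s. First, using black-box access to $f$ one can evaluate, at any chosen point, the full vector of order-$k$ partial derivatives of $f$ (evaluate $f$ on shifted inputs $\veca+\vecz$ and interpolate out the degree-$k$ part in $\vecz$). So for a suitable order $k<m$ and a projection $\pi$ substituting for each $x_i$ a generic linear form $\ell_i(\vecy)$ in $n'$ fresh variables, I would compute (in the evaluation representation) the spaces
$$
U=\spa\bigl\{\pi(\partial^{\vece}f):|\vece|=k\bigr\},\qquad
V=\spa\bigl\{\pi(\partial^{\vece}f):|\vece|=k+1\bigr\},
$$
together with the operators $L_1,\dots,L_n:U\to V$, $L_i=\pi\circ\partial_{x_i}$. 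The numbers $k,n'$ are chosen as functions of $n,d,s,t$ so that $\dim U,\dim V=\poly(n,\sigma,s^t)$ --- shrinking the ``shift-like'' blow-up $n^{k(t-1)}$ to $(n')^{k(t-1)}$ is exactly what the projection buys us --- while keeping the non-degeneracy hypothesis strong enough for the next step.

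Next I would exhibit the hidden decomposition. Since each $Q_i$ is homogeneous of degree $t$, $Q_i^{m-k}\mid\partial^{\vece}(Q_i^m)$ for $|\vece|=k\le m$; writing $W_i^{(k)}$ for the span of the degree-$k(t-1)$ cofactors, $\partial^{=k}(Q_i^m)=Q_i^{m-k}\cdot W_i^{(k)}$. Hence $f=\sum_ic_iQ_i^m$ gives
$$
U=\sum_{i=1}^sU_i,\quad U_i:=\pi(Q_i)^{m-k}\cdot\pi\bigl(W_i^{(k)}\bigr),\qquad
V=\sum_{i=1}^sV_i,\quad V_i:=\pi(Q_i)^{m-k-1}\cdot\pi\bigl(W_i^{(k+1)}\bigr),
$$
and each $L_i$ maps $U_j$ into $V_j$. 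The non-degeneracy conditions on $f$ are exactly what I would use to guarantee that these sums are \emph{direct}, $U=\bigoplus_iU_i$ and $V=\bigoplus_iV_i$; that each summand $U_i\oplus V_i$ is \emph{indecomposable} under the action of the $L_i$'s and has endomorphism ring $\F$; and that the $s$ resulting modules are pairwise non-isomorphic.

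Then I would recover the decomposition and the $Q_i$. Given $U,V,\{L_i\}$ with such a hidden decomposition, compute the \emph{commutant}: the pairs $(\varphi,\psi)$ with $\varphi:U\to U$, $\psi:V\to V$ and $\psi L_i=L_i\varphi$ for all $i$ --- the solution space of a homogeneous linear system of size $\poly(\dim U,\dim V)$. The properties above force this algebra to be essentially $\F^s$, block-diagonal with one block per $U_i\oplus V_i$, so a generic $\varphi$ has $s$ eigenspaces with pairwise distinct eigenvalues (here $|\F|\ge(ns)^{150t}$ is used), and these eigenspaces are $U_1,\dots,U_s$ up to reordering. As $k<m$ and the $\pi(W_i^{(k)})$ are generically setwise coprime, $\gcd$ of the polynomials in $U_i$ equals $\pi(Q_i)^{m-k}$ up to a scalar, so an $(m-k)$-th root recovers $\pi(Q_i)$ up to a scalar. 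Repeating over $\poly(n)$ overlapping projections $\pi$ and matching their clusters on shared coordinates, I would interpolate each $Q_i$ (whose coefficients depend linearly on those of $\pi(Q_i)$) up to a scalar and a global permutation; a final linear solve gives the constants.

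The hard part is the interaction of the previous two steps: showing that the non-degeneracy conditions --- which must, at the same time, be mild enough that a \emph{random} $\Sigma\wedge\Sigma\Pi^{[t]}$ circuit satisfies them --- really do force the sums to be direct, each block indecomposable with trivial endomorphisms, and the blocks pairwise non-isomorphic, so that the commutant genuinely collapses to (essentially) $\F^s$. This is where the $\APP$ lower bound enters, to pin down the dimensions $\dim U_i,\dim V_i$ and rule out any collapse, alongside genericity arguments for the forms $\ell_i$ and the random $Q_i$'s; it is also what forces the hypotheses $n\ge d^2$, $t\le\sqrt{\log d/(10\log\log d)}$ and the bound on $s$. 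The secondary difficulties are choosing $n'$ small enough for efficiency yet large enough that non-degeneracy survives $\pi$, and making the cross-projection stitching unambiguous; once these are handled, the running time is dominated by linear algebra on spaces of dimension $\poly(n,\sigma,s^t)$.
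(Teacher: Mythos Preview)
Your high-level strategy matches the paper's meta-framework: project, take partial derivatives, set up a vector-space decomposition, compute the adjoint, recover the $Q_i$'s. But there is a genuine gap in the operators, and your route diverges from the paper's at exactly the point you flag as ``the hard part''.

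\textbf{The operators $L_i=\pi\circ\partial_{x_i}$ are not well-defined on $U$.} Elements of $U$ are polynomials in the $\vecy$-variables, so $\partial_{x_i}$ does not act on them. You presumably intend $L_i(\pi(g)):=\pi(\partial_{x_i}g)$ for $g\in\langle\partial^k f\rangle$, but this requires $\pi$ to be injective on $\langle\partial^k f\rangle$, which it is not: the entire point of projecting to $n'\ll n$ variables is to collapse $\dim\langle\partial^k Q_i^m\rangle$ from about $\binom{n+k(t-1)-1}{k(t-1)}$ down to $\binom{n'+k(t-1)-1}{k(t-1)}$, so $\pi$ creates many new linear dependencies that $\partial_{x_i}$ has no reason to respect. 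The paper avoids this by differentiating in the \emph{projected} variables (the operators in $\CAL{L}_2$ are $\pi_P\circ\partial_{\vecz}^k$, with $\partial_{\vecz}$ acting on $\vecz$-polynomials), which is well-defined.

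\textbf{The paper does not attempt your adjoint analysis.} Even with corrected operators, proving the commutant is $\F^s$ when each $U_i$ has dimension $\binom{n'+k(t-1)-1}{k(t-1)}$ is precisely the difficulty you postpone, and the paper sidesteps it. After forming $U=\bigoplus_i U_i$, the paper inserts a \emph{multi-gcd step} (Proposition~\ref{prop: multi-gcd}): a linear system extracts the $s$-dimensional space $V=\langle G_1^e,\dots,G_s^e\rangle$ with $G_i=\pi_L(Q_i)$, so each component is now \emph{one-dimensional}. Only then does it apply a second batch of order-$k$ derivatives in $\vecz$ followed by a \emph{second} random projection $\pi_P$ to $m_0\ll n_0$ variables, obtaining $W=\bigoplus_i W_i$. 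With $\dim V_i=1$, the first component of the adjoint sits inside $M_s(\F)$, and showing it equals the diagonal matrices (Proposition~\ref{prop: adjoint is diagonalizable}) is short. The recovery step is also different: rather than gcd-plus-root inside each $U_i$, the paper reruns the decomposition $d$ times with projections $\tilde L(y)$ that agree on the slice $z_1=0$ (so the $s$ outputs can be matched across runs via non-degeneracy Condition~\ref{cond4}) and interpolates along a line to undo the projection.
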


\textbf{Remarks.}
\begin{enumerate}
\item \textit{Non-degeneracy.}~ The non-degeneracy conditions are explicitly mentioned in Section \ref{subsubsec:non-degeneracy}.
\item \textit{Bounds on $t$ and $s$.}~ The upper bounds on the parameters $t$ and $s$ in Theorem \ref{thm:learning sums of powers} originate from our analysis (especially, the part in Section \ref{sec: random formula} showing that a random $\Sigma \wedge \Sigma \Pi^{[t]}$ circuit is non-degenerate with high probability). We have not optimized this analysis in an attempt to keep it relatively simple. Given that the lower bounds (stated in Theorem \ref{thm:lower bound}) hold for a large range of $t$ and $s$, it may be possible to tighten our analysis significantly.
\item \textit{$t > 1$ substantially different from $t = 1$.} While we state the theorem for slightly super-constant values of $t$, one should think of $t$ being a constant as the main setting (in which case, the running time of our algorithm is polynomial). Even the $t = 2$ case, which was open before, is substantially different from the $t = 1$ case (as discussed in Section \ref{subsubsec:implementation}) and is relevant to the problem of mixtures of Gaussians (see Section \ref{sec:gaussians_intro}).
\item \textit{Uniqueness of $Q_i$'s.} A corollary of the analysis of our algorithm is that for a non-degenerate $f = c_1Q_1^m + \ldots + c_s Q_s^m$ (which holds if the $Q_i$'s are chosen randomly), this representation is of the smallest size and also unique. That is, if $f = \widetilde{c}_1 \widetilde{Q}_1^m + \ldots + \widetilde{c}_s \widetilde{Q}_{\widetilde{s}}^m$ with $\widetilde{s} \le s$, then $\widetilde{s} = s$ and there exist a permutation $\pi: [s] \rightarrow [s]$ and non-zero constants $d_1,\ldots, d_s$ such that $\widetilde{Q}_i = d_i Q_{\pi(i)}$ for all $i \in [s]$.
\end{enumerate}

Non-degeneracy conditions are satisfied if we choose the coefficients of the $Q_i$'s randomly. This gives us the following corollary.

\begin{corollary}[Learning \emph{random} sums of powers of low degree polynomials]
Let $n,d,s,t \in \F$ be as in Theorem \ref{thm:learning sums of powers}. There is a randomized algorithm which when given black-box access to an $n$-variate degree-$d$ polynomial $f = c_1Q_1^m + \ldots + c_s Q_s^m$, where each $c_i\in \F^{\times}$, $Q_i$ is a homogeneous polynomial of degree $t$, and $tm = d$ and the coefficients of $Q_i$'s are chosen uniformly and independently at random from a set $S \subset \F$ of size $|S| \ge (ns)^{150 \cdot t}$, outputs (with high probability) $Q_1',\ldots, Q_s'$ such that there exist a permutation $\pi: [s] \rightarrow [s]$ and non-zero constants $c_1',\ldots, c_s'$ so that $Q_i' = c_i' Q_{\pi(i)}$ for all $i \in [s]$. The running time of the algorithm is $\poly((ns)^t)$.
\end{corollary}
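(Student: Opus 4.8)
The plan is to derive the Corollary directly from Theorem~\ref{thm:learning sums of powers} by showing that a random choice of the coefficients of $Q_1, \ldots, Q_s$ lands in the non-degenerate locus with high probability, and then invoking the algorithm of the theorem verbatim. First, I would recall from Section~\ref{subsubsec:non-degeneracy} that the non-degeneracy conditions are a finite list of requirements, each of which asserts that a certain polynomial expression in the coefficients of the $Q_i$'s (for instance, the ranks of certain structured matrices whose entries are the coefficients, or the condition that two explicitly constructed vector spaces decompose compatibly under the action of the affine projections of partials operators) is nonzero, i.e. has full rank / is not identically zero. Equivalently, the degenerate locus $\mathcal{D} \subseteq \F^{\sigma}$ (where $\sigma$ is the total number of monomials across $Q_1,\dots,Q_s$) is contained in the zero set of some nonzero polynomial $\Delta$ of moderate degree in the coefficient variables.

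Second, I would appeal to the analysis of Section~\ref{sec: random formula}, which is exactly the part of the paper establishing that $\Delta$ is \emph{not} the identically zero polynomial for the given ranges of $n, d, s, t$, and moreover that its degree is bounded by $\poly((ns)^t)$ (this degree bound is what drives the hypothesis $|S| \ge (ns)^{150 \cdot t}$). Given that $\Delta \not\equiv 0$ and $\deg \Delta \le (ns)^{c \cdot t}$ for a suitable constant $c < 150$, the Schwartz--Zippel lemma implies that if the coefficients of the $Q_i$'s are drawn uniformly and independently from a set $S \subset \F$ with $|S| \ge (ns)^{150 \cdot t}$, then
\[
\Pr_{\text{coeffs} \sim S}[\,\text{the resulting circuit is degenerate}\,] \;\le\; \Pr[\Delta = 0] \;\le\; \frac{\deg \Delta}{|S|} \;\le\; \frac{(ns)^{c \cdot t}}{(ns)^{150 \cdot t}} \;=\; (ns)^{-(150 - c)\cdot t},
\]
which is negligible. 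Hence with high probability $f = c_1 Q_1^m + \cdots + c_s Q_s^m$ is non-degenerate in the sense required by Theorem~\ref{thm:learning sums of powers}.

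Third, conditioned on non-degeneracy, I would simply run the randomized algorithm of Theorem~\ref{thm:learning sums of powers} on black-box access to $f$; it outputs $Q_1', \ldots, Q_s'$ that agree with the $Q_{\pi(i)}$'s up to a permutation $\pi$ and nonzero scalars, with the claimed high probability. For the running time, note that when the coefficients are random elements of $S$, each $Q_i$ may have up to $\binom{n+t-1}{t} = \poly(n^t)$ monomials, so $\sigma = \poly(s \cdot n^t)$, and therefore the bound $\poly(n, \sigma, s^t)$ from the theorem collapses to $\poly((ns)^t)$. A union bound over the (small) failure probabilities of the degeneracy event and of the internal randomness of the algorithm keeps the overall success probability high. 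The only genuinely substantive ingredient here is the claim $\Delta \not\equiv 0$ with a good degree bound --- but that is precisely the content of Section~\ref{sec: random formula} and may be assumed; everything else is a routine Schwartz--Zippel argument plus bookkeeping of the monomial count. The main obstacle, were one to prove it from scratch, would be exhibiting an explicit witness circuit (a specific assignment of coefficients) on which every non-degeneracy condition is simultaneously satisfied, together with controlling the degree of $\Delta$; since the excerpt grants us the results of Section~\ref{sec: random formula}, this obstacle is already handled.
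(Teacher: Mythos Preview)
Your proposal is correct and follows essentially the same route as the paper: the Corollary is stated immediately after Theorem~\ref{thm:learning sums of powers} as a direct consequence of combining that theorem with Lemma~\ref{lem:non-degenerate_random} (proved in Section~\ref{sec: random formula}), which is precisely the Schwartz--Zippel argument you outline---exhibit a witness circuit to show the non-degeneracy polynomial is not identically zero, bound its degree, and conclude that random coefficients avoid the degenerate locus with probability $1 - o(1)$. Your running-time bookkeeping (observing that $\sigma = \poly(s \cdot n^t)$ so $\poly(n,\sigma,s^t) = \poly((ns)^t)$) is also exactly the right observation.
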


\subsection{Techniques: Learning from lower bounds} \label{sec:learning from lower bound}

The novelty of our approach lies in the use of lower bound techniques in the design of learning algorithms. Such connections are known for certain classes of Boolean circuits, in particular $\text{AC}^0$ and $\text{AC}^{0}[p]$ circuits \cite{LinialMN93, CarmosinoIKK16}. The influence of lower bound techniques on learning is also apparent in the case of ROABP reconstruction \cite{BeimelBBKV00,KlivansS06}. However, our approach differs substantially from these previous works and also uses lower bounds to design algorithms in the \emph{non-degenerate} case. At a high level, our technique can be summarized as a fancy {\em reduction to linear algebra}. In Section \ref{subsubsec:typical_lb}, we will see how lower bounds are typically proven in arithmetic complexity. Section \ref{subsubsec:learning_recipe} describes our meta framework of turning lower bounds into learning algorithms. In Section \ref{subsubsec:implementation}, we discuss how implement the framework for learning sums of powers of low degree polynomials. Finally in Section \ref{subsubsec:non-degeneracy}, we state the non-degeneracy conditions we require explicitly.

\subsubsection{A typical lower bound proof}\label{subsubsec:typical_lb}

Many of the circuit classes for which good lower bounds are known are of the form $T_1 + \ldots + T_s$, where each polynomial $T_i$ is ``simple'' in some sense\footnote{For example, in case of $\Sigma \wedge \Sigma \Pi^{[t]}(s)$ circuits, $T_i$ is a power of a degree-$t$ polynomial. One can also get such representations from general circuits by various depth reduction theorems \cite{AgrawalV08, Koiran12, Tavenas13}.}. The lower bound problem for such a class $\CAL{C}$ is to find an explicit polynomial $f$ such that any representation of the form $f = T_1 + T_2 + \ldots + T_s$,	where each $T_i$ is a simple polynomial, requires $s$ to be large. A typical lower bound strategy finds such an $f$ by constructing a set of linear maps $\mathcal{L}$ from the vector space of polynomials to some appropriate vector space such that the following properties hold: 
	\begin{itemize}
		\item $\dim (\langle \mathcal{L} \circ T \rangle)$ is {\em small} (say $\le r$) for every simple polynomial $T$, \footnote{Here, $\langle S \rangle$ denotes the $\F$-linear span of a set of polynomials $S$.}
		\item $\dim (\langle \mathcal{L} \circ f \rangle)$ is {\em large} (say $\ge R$).
	\end{itemize}
Then, we have
\begin{align}
&\: f = T_1 + T_2 + \ldots + T_s \nonumber \\
&\Rightarrow \langle \mathcal{L} \circ f \rangle \subseteq \langle \mathcal{L} \circ T_1 \rangle + \cdots + \langle \mathcal{L} \circ T_s \rangle \label{eqn:lower_bound} \\
&\Rightarrow \dim (\langle \mathcal{L} \circ f \rangle) \le \dim (\langle \mathcal{L} \circ T_1 \rangle) + \cdots + \dim (\langle \mathcal{L} \circ T_s \rangle). \nonumber
\end{align}
This implies that $ s\geq R/r$. Now, let us see how the set of linear maps $\CAL{L}$ could potentially play a key role in learning class $\CAL{C}$. 

\subsubsection{Reduction to vector space decomposition - a recipe for learning}\label{subsubsec:learning_recipe}

The corresponding learning problem for $\CAL{C}$ is the following: Given a polynomial $f$ that can be expressed as $f = T_1 + \ldots + T_s$, where each $T_i$ is a simple polynomial, can we efficiently recover the $T_i$'s? It turns out that the set of linear maps $\CAL{L}$ (used to prove lower bounds) can now be used to devise an efficient learning algorithm via the following meta-algorithm, which works if the expression $T_1 + \ldots + T_s$ is ``non-degenerate''. Let us explain what we mean by non-degeneracy. One might expect that if the $T_i$'s are chosen randomly, then for some choice of linear maps $\cL$, the subspace condition in Equation (\ref{eqn:lower_bound}) becomes an equality and the sums become direct sums, i.e.,
\begin{align}
\langle \mathcal{L} \circ f \rangle = \langle \mathcal{L} \circ T_1 \rangle \oplus \cdots \oplus \langle \mathcal{L} \circ T_s \rangle. \label{eqn:direct_sum}
\end{align}
Existence of linear maps $\cL$ satisfying Equation (\ref{eqn:direct_sum}) for random $T_i$'s is the starting point for our learning framework. A couple of things are important to state here:
\begin{itemize}
\item If Equation (\ref{eqn:direct_sum}) is satisfied for even a single choice of $T_i$'s, then it is satisfied for random $T_i$'s because of the Schwarz-Zippel lemma \cite{Schwartz80, Zippel79}.
\item Equation (\ref{eqn:direct_sum}) implies a \emph{tight separation} within class $\CAL{C}$. So, a prerequisite for a lower bound method to be useful for our learning framework is that it should be able to prove a tight separation for that model. In fact, Equation (\ref{eqn:direct_sum}) is usually proven by exhibiting an explicit polynomial for which the linear maps in question yield a tight separation (see Lemma \ref{lem: tight lower bound}).
\end{itemize}

We will also need that $\cL = \cL_2\circ \cL_1$, i.e., $\cL$ is a combination of two sets of linear maps\footnote{For example, $k^{\text{th}}$ order partial derivatives are a composition of $(k-1)^{\text{th}}$ order partial derivatives and first order partial derivatives, i.e., $\partial_{\vecx}^{k} = \partial_{\vecx}^{k-1} \circ \partial_{\vecx}$.} and Equation (\ref{eqn:direct_sum}) holds for both $\cL_1$ and $\cL$. We say that the expression $T_1 + \cdots + T_s$ is non-degenerate if Equation (\ref{eqn:direct_sum}) holds for both $\cL_1$ and $\cL$. Now given\footnote{This framework should be applicable given only black-box access to $f$ using standard tricks (as we show for our problem) and we won't go into these details in this overview.} $f = T_1 + \cdots + T_s$, we have access to
\begin{eqnarray} \label{eqn:meta-algo nondegeneracy}
						U &:=& \langle \CAL{L}_1 \circ f \rangle = \langle \CAL{L}_1 \circ T_1 \rangle ~\oplus~ \ldots ~\oplus~ \langle \CAL{L}_1 \circ T_s \rangle  ~~~~~\text{and} \nonumber \\ 
						V &:=& \langle \CAL{L}_2 \circ \cL_1 \circ f \rangle = \langle \CAL{L}_2 \circ (\CAL{L}_1 \circ T_1) \rangle ~\oplus~ \ldots ~\oplus~ \langle \CAL{L}_2 \circ (\CAL{L}_1 \circ T_s) \rangle . \label{eqn:direct_sum_L}
				\end{eqnarray}
If we can recover the $\langle \cL_1 \circ T_i \rangle$ for all $i$, then usually one can recover the $T_i$'s\footnote{For example, one can recover a homogeneous polynomial if given all its degree-$k$ partial derivatives.}. Towards this, a crucial property of the linear maps $\cL_2$ (from $U$ to $V$) is that $\cL_2$ maps each component space $\langle \cL_1 \circ T_i \rangle$ to the corresponding component space of $V$. This motivates the following problem.

\begin{problem}[Vector space decomposition] Given two vector spaces $U$ and $V$ and a set of linear maps $\cL$ from $U$ to $V$, find a decomposition
\begin{eqnarray} \label{eqn:meta-algo nondegeneracy}
						&U  = U_1 ~\oplus~ \ldots ~\oplus~ U_s   \nonumber \\ 
						&V  = V_1 ~\oplus~ \ldots ~\oplus~ V_s , \nonumber
				\end{eqnarray}
such that $\langle \cL \circ U_i \rangle \subseteq V_i$ for all $i \in [s]$ (if such a decomposition exists). Moreover, we can ask that each of the pairs $(U_i, V_i)$ be further indecomposable with respect to $\cL$.
\end{problem}

Amazingly, polynomial time algorithms are known for a symmetric version of this problem, where $U = V$ and we require $U_i = V_i$.\footnote{The symmetric version is known as \emph{module decomposition} in the literature.} The algorithm was discovered in \cite{ChistovIK97} based on the algorithms developed for decomposition of algebras (e.g., see \cite{FriedlR85,Ronyai90,Eberly91}). The algorithm works over finite fields, $\C$ and $\R$ (if the input is over $\Q$, then the algorithm outputs a decomposition over an extension field). We give a simple reduction in Section \ref{sec: space to module decomposition} that reduces the vector space decomposition problem to the symmetric version. However, since we are in a specialized setting, we can design a simpler algorithm using the ideas in \cite{ChistovIK97} that also works over $\Q$ (this is important for some potential applications like mixtures of Gaussians). \\

Thus, we are capable of doing a decomposition like the one in Equation (\ref{eqn:direct_sum_L}). But, why should we end up with the same decomposition? Certainly there are cases where the decompositions are not unique. For example, if $U = V$ and $\cL$ just consists of the identity map, then any decomposition into one-dimensional spaces is a valid one. However, there is a characterization of all decompositions in the symmetric setting (Krull-Schmidt theorem, Theorem \ref{thm:Krull-Schmidt} in Section \ref{sec:appendix adjoint}) and it extends to vector space decomposition via our reduction (Corollary \ref{corrLuniqueness_vector_space_decomp}). In many settings (including the one in this paper), this characterization helps in proving the uniqueness of decomposition. \\

Finally, the meta algorithm is stated in Algorithm \ref{metaalg:learning from lower bounds}, which works under the assumptions:
\begin{enumerate}
\item\label{item:meta_framwork_1} The following direct sum structure holds,
\begin{eqnarray}
						U &:=& \langle \CAL{L}_1 \circ f \rangle = \langle \CAL{L}_1 \circ T_1 \rangle ~\oplus~ \ldots ~\oplus~ \langle \CAL{L}_1 \circ T_s \rangle  ~~~~~\text{and} \nonumber \\ 
						V &:=& \langle \CAL{L}_2 \circ \cL_1 \circ f \rangle = \langle \CAL{L}_2 \circ (\CAL{L}_1 \circ T_1) \rangle ~\oplus~ \ldots ~\oplus~ \langle \CAL{L}_2 \circ (\CAL{L}_1 \circ T_s) \rangle . \label{eqn:direct_sum_L_2}
\end{eqnarray}
\item\label{item:meta_framwork_2} (\ref{eqn:direct_sum_L_2}) is the \emph{unique} indecomposable decomposition for the vector spaces $U$ and $V$ w.r.t. $\cL_2$.
\item\label{item:meta_framwork_3} One can recover $T_i$ from $\langle \cL_1 \circ T_i \rangle$ efficiently.
\end{enumerate}

Next, we will discuss how we prove these assumptions for our setting, namely sums of powers of low degree polynomials.

\begin{algorithm}
	\caption{Meta algorithm: Learning from lower bounds} \label{metaalg:learning from lower bounds}
	\begin{algorithmic}
		\STATE \textbf{Input}: $f = T_1 + \cdots + T_s$.
		\STATE \textbf{Output}: $T_1',\ldots, T_s'$ such that there exists a permutation $\sigma: [s] \rightarrow [s]$ so that $T_{i}' = T_{\sigma(i)}$.
	\end{algorithmic}
	\begin{algorithmic}[1]
		\STATEx
		\STATE Take an appropriate set of linear maps $\cL = \cL_2 \circ \cL_1$. Compute $U := \langle \CAL{L}_1 \circ f \rangle$ and $V := \langle \CAL{L}_2 \circ \cL_1 \circ f \rangle$.
		\STATE Obtain a (further indecomposable) vector space decomposition of $U$ and $V$ with respect to $\cL_2$, namely $U = U_1 \oplus \cdots \oplus U_s$ and $V = V_1 \oplus \cdots \oplus V_s$.
		\STATE Compute $T_i'$ from $U_i$ (assuming $U_i = \langle \CAL{L}_1 \circ T_i' \rangle$). 				
	\end{algorithmic}
\end{algorithm}

\subsubsection{Implementation for sums of powers of low degree polynomials}\label{subsubsec:implementation}

In this section, we discuss how we implement the meta algorithm described above for sums of powers of low degree polynomials. As discussed, the main ingredient in the learning algorithm is the lower bound. So, at first we need to understand how lower bounds are proven for this model \cite{Kayal12eccc, GKKS14, KayalSS14}. Let us first consider the setting of sums of powers of linear forms\footnote{Also known in the literature as symmetric tensor decomposition or the Waring rank problem.}, i.e., $t = 1$ \cite{Nisan91, NisanW97}. Our goal is to find a degree-$d$ homogeneous polynomial $f$ such that any expression of the form:
$$
f = \ell_1^d + \cdots + \ell_s^d,
$$
with $\ell_i$'s linear, requires a large value of $s$. The set of linear maps here will be $\cL = \partial_\vecx^{\lfloor d/2 \rfloor}$, i.e., all partial derivatives of order $\lfloor d/2 \rfloor$. Then, it is easy to see that $\dim(\langle \cL \circ \ell^d\rangle) \le 1$ for all linear polynomials $\ell$. Thus, any $f$ has a lower bound of $s \ge \dim(\langle \cL \circ f\rangle)$. One can easily design polynomials with large dimension for the partial derivatives, e.g., the elementary symmetric polynomial of degree $d$ in $n$ variables satisfies $\dim(\langle \cL \circ f\rangle) = {n \choose \lfloor d/2 \rfloor}$.\footnote{If we use this lower bound with the recipe in Section \ref{subsubsec:learning_recipe}, then one gets a close variant of Jennrich's algorithm for symmetric tensor decomposition.} \\

For a long time, it was not known how to generalize these super-polynomial lower bounds even to the $t = 2$ case. What goes wrong is that $\dim(\langle \cL \circ Q^m\rangle)$ is no longer small, for $\cL = \partial_\vecx^{k}$, when $Q$ is a degree-$t$ homogeneous polynomial with $t \ge 2$. For example, $\dim(\langle \cL \circ (x_1^2 + \cdots + x_n^2)^m\rangle) \ge {n \choose k}$ for $k \le m \le n$. However, one can still say something about the partial derivatives. If we take any $\alpha \in \Z_{\ge}^n$ with $|\alpha| := \sum_{i=1}^n \alpha_i = k$, then $Q^{m - k}$ divides $\partial_\vecx^\alpha Q^m$ for $k \le m$. Hence, any $\partial_\vecx^\alpha Q^m$ is of the form $Q^{m-k} R$, where $R$ is homogeneous of degree $k(t-1)$. Now, the main observation in \cite{Kayal12eccc} was that we can make use of this special property of powers of low degree polynomials by using the \emph{shifted} partial derivatives measure which is defined as follows,
$$\mathsf{SP}_{k, \ell}(f) := \dim \left \langle \vecx^{\ell} \cdot \partial^k_{\vecx}~ f\right \rangle.$$ 
That is, we take all $k^{\text{th}}$ order partial derivatives of $f$ and then multiply by all degree-$\ell$ monomials and then take the dimension of their span. Now, for any $\alpha, \beta$ such that $|\alpha| = k$ and $\beta = \ell$, $\vecx^{\beta} \partial_\vecx^{\alpha} Q^m$ is of the form $Q^{m-k} R$, where $R$ is a homogeneous polynomial of degree $\ell + k(t-1)$. Hence
$$
\mathsf{SP}_{k, \ell}(Q^m) \le {n+ \ell + k(t-1) -1 \choose \ell + k(t-1)},
$$
where if $k,\ell$ are not too large, then one can expect $\mathsf{SP}_{k, \ell}(f)$, for an appropriately chosen $f$, to be close to the number of operators ${n+ \ell - 1 \choose \ell} {n + k -1 \choose k}$. Indeed, with appropriate choices of $k$ and $\ell$, this can be used to prove exponential lower bounds for the model sums of powers of low degree polynomials \cite{Kayal12eccc} and other more general models as well \cite{GKKS14, KayalSS14, KumarS14, FournierLMS15, KayalLSS17, KumarS17}. In all of these lower bounds, the value of $\ell$ is chosen to be comparable or larger than $n$. This makes the number of linear maps exponential and hence not suitable for designing an efficient algorithm. In fact, even ignoring the large number of linear maps, with such a large value of $\ell$, the shifted partials measure is unlikely to satisfy the direct sum property in Equation (\ref{eqn:direct_sum_L_2}) (see Section \ref{sec: limitation of SP}) when the number of variables is large with respect to the degree. A natural way to decrease the number of linear maps is to project to a smaller number of variables. To our surprise, if we project down to a smaller number of variables, one does not need shifts at all to prove the lower bound! We call this new measure \emph{affine projections} of partials, which we define next.\footnote{The word affine is added to avoid confusion with another kind of projection (namely multilinear projection) which is usually done in the literature to prove lower bounds for depth four arithmetic circuits.} \\

\textbf{Affine projections of partials -- a novel adaptation of the partial derivatives measure.}~ Let $f$ be a polynomial in variables $\vecx = (x_1, \ldots, x_n)$ and $L = (\ell_1(\vecz), \ldots, \ell_n(\vecz))$ a tuple of linear forms\footnote{More generally, $\ell_1(\vecz), \ldots, \ell_n(\vecz)$ are affine forms, but for this work it suffices to take them as linear forms.} in variables $\vecz = (z_1, \ldots, z_{n_0})$. The parameter $n_0$ would be much smaller than $n$ in this paper. Let $\pi_L$ be the following affine projection map from $\F[\vecx]$ to $\F[\vecz]$,
$$\pi_L(f) := f(\ell_1(\vecz), \ldots, \ell_n(\vecz)).$$
For a set $S \subseteq \F[\vecx]$, the projection $\pi_L$ is naturally defined as,
$$\pi_L(S) := \{\pi_L(f): f \in S\}.$$
Recall that $\der{\vecx}{k}{f}$ is the set of all $k$-th order partial derivatives of $f$ and $\langle S \rangle$ is the $\F$-linear span of a set of polynomials $S$. The \emph{affine projections of partials} ($\APP$) measure is defined as,
\begin{flalign} \label{eqn:the measure}
\text{(The measure)} \quad \quad\quad\quad \quad \quad \quad \APP_{k,n_0}(f) := \max_L~ \dim \app{L}{\vecx}{k}{f}, &&
\end{flalign}
where the maximum is taken over all $n$-tuple $L = (\ell_1(\vecz), \ldots, \ell_n(\vecz))$ of linear forms in $\F[\vecz]$. It is easy to verify that for any $f,g \in \F[\vecx]$ the following linearity property is satisfied,
\begin{flalign} \label{eqn:linearity of the measure}
\text{(Linearity of the measure)} \quad \quad \APP_{k,n_0}(f+g) \leq \APP_{k,n_0}(f) + \APP_{k,n_0}(g). &&
\end{flalign}

The $\APP$ measure can be alternatively defined using a random affine projection $\pi_L$. The observation below is an easy consequence of the Schwartz-Zippel lemma \cite{Schwartz80,Zippel79}.
\begin{observation} \label{obs:random L}
If $|\F| \geq 10 \cdot (d-k)\cdot {n+k-1 \choose k}$ and every coefficient of the linear forms in $L = (\ell_1(\vecz), \ldots, \ell_n(\vecz))$ is chosen from a set $S \subseteq \F$ of size $10 \cdot (d-k)\cdot {n+k-1 \choose k}$ then with probability at least $0.9$, 
$$\APP_{k,n_0}(f) = \dim \app{L}{\vecx}{k}{f}.$$
\end{observation}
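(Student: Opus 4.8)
The plan is a routine Schwartz--Zippel argument: exhibit a single explicit polynomial (a minor of a coefficient matrix) that is nonzero precisely when the random projection attains the maximum, bound its degree, and invoke the polynomial identity lemma. First I would set up the coefficient matrix. Write $d = \deg f$. For each multi-index $\alpha$ with $|\alpha| = k$ — there are $\binom{n+k-1}{k}$ of them — the derivative $\partial_{\vecx}^{\alpha} f$ has degree at most $d-k$, and $\pi_L(\partial_{\vecx}^{\alpha} f) = (\partial_{\vecx}^{\alpha} f)(\ell_1(\vecz), \ldots, \ell_n(\vecz))$ is a polynomial of degree at most $d-k$ in $\vecz$. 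Let $M(L)$ be the matrix whose rows are indexed by such $\alpha$, whose columns are indexed by the monomials $\vecz^{\gamma}$ of degree at most $d-k$, and whose $(\alpha, \gamma)$ entry is the coefficient of $\vecz^{\gamma}$ in $\pi_L(\partial_{\vecx}^{\alpha} f)$. By definition, $\dim \app{L}{\vecx}{k}{f} = \rank M(L)$. Since each $\ell_i(\vecz)$ is linear in $\vecz$, substituting the $\ell_i$'s into a monomial of degree at most $d-k$ yields coefficients that are polynomials of degree at most $d-k$ in the (at most $n n_0$) coefficients of $\ell_1, \ldots, \ell_n$; hence every entry of $M(L)$ is such a polynomial.

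Next I would pick a witness for the maximum. Put $r := \APP_{k,n_0}(f)$ and let $L^{\ast}$ be an $n$-tuple of linear forms attaining it, so $\rank M(L^{\ast}) = r$. Then $M$ has an $r \times r$ submatrix whose determinant $P$, regarded as a polynomial in the coefficients of $L$, satisfies $P(L^{\ast}) \neq 0$; in particular $P$ is a nonzero polynomial. Being a signed sum of products of $r$ entries of $M$, $P$ has degree at most $r(d-k)$, and since $r \le \binom{n+k-1}{k}$ this is at most $(d-k)\binom{n+k-1}{k}$.

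Finally I would apply the Schwartz--Zippel lemma \cite{Schwartz80, Zippel79} to $P$. If every coefficient of every $\ell_i$ is chosen independently and uniformly from a set $S \subseteq \F$ with $|S| = 10\,(d-k)\binom{n+k-1}{k} \ge 10 \deg P$ (such $S$ exists because $|\F| \ge 10\,(d-k)\binom{n+k-1}{k}$), then $\Pr[P(L) = 0] \le \deg P / |S| \le 1/10$. On the complementary event, $\rank M(L) \ge r$, whereas $\rank M(L) = \dim \app{L}{\vecx}{k}{f} \le \APP_{k,n_0}(f) = r$ always holds by maximality; hence $\rank M(L) = r$, which is the assertion. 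The only step needing any care is the degree estimate on the minor $P$ — everything else is bookkeeping — so there is essentially no genuine obstacle; the observation is recorded mainly for later use.
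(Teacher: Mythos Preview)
Your proposal is correct and is exactly the standard Schwartz--Zippel argument the paper has in mind; indeed the paper does not spell out a proof at all, simply noting that the observation ``is an easy consequence of the Schwartz--Zippel lemma''. Your explicit setup of the coefficient matrix $M(L)$, the degree bound $\deg P \le r(d-k) \le (d-k)\binom{n+k-1}{k}$ on a witnessing $r\times r$ minor, and the ensuing probability bound are precisely how one unpacks that sentence.
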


\begin{remark}
\emph{Similarity with the skewed partials measure.} The $\APP$ measure is akin to the skewed partials ($\mathsf{SkP}$) measure introduced in \cite{KayalNS16} -- SkP is a special case of $\APP$. We show that the lower bound proof works with the $\mathsf{SkP}$ measure, but to ensure that the hard polynomial $f_{n,d}$ is multilinear we need affine projections (particularly, p-projections). However, the main reason for us to work with general/random affine projections (instead of $\mathsf{SkP}$ or p-projections) is to make the learning algorithm require as weak a non-degeneracy condition as possible. 
\end{remark}

Let us give some intuition as to why the $\APP$ measure can yield lower bounds for sums of powers of low degree polynomials. As we say above, any $\partial_\vecx^\alpha Q^m$, with $|\alpha| = k$, is of the form $Q^{m-k} R$, where $R$ is homogeneous of degree $k(t-1)$ (recall $Q$ is homogeneous of degree $t$). This implies that $\APP(Q^m) \le {n_0 + k(t-1) - 1 \choose k(t-1)}$. However, for an appropriately chosen homogeneous degree-$d$ polynomial $f$, one can expect that $\APP(f)$ could be as large as $\min\{{n_0 + d-k-1 \choose d-k}, {n+k-1 \choose k}\}$. The first upper bound is from the fact that after derivatives and projection, we are in the space of degree-$(d-k)$ polynomials in $n_0$ variables and the second is from the fact that we have ${n+k-1 \choose k}$ linear maps in $\APP$. With appropriate choices of $n_0, k$ and the polynomial $f$, we can prove the following lower bound result which comes close to the best known lower bounds via shifted partials.

\vspace{2mm}
\begin{theorem} [Lower bound for homogeneous $\Sigma \Pi \Sigma \Pi^{[t]}$ circuits using $\APP$] \label{thm:lower bound}
The $\APP$ measure, defined above, can be used to prove the following lower bound for homogeneous $\Sigma \Pi \Sigma \Pi^{[t]}$ circuits.
\begin{itemize}
\item \underline{High $t$ case}: Let $n,d,t \in \N$ such that $n \geq d^2$ and $\ln \frac{n}{d} \leq t \leq \frac{d}{4\cdot e^{10} \cdot \ln d}$. There is a family of $n$-variate degree-$d$ multilinear polynomials $\{f_{n,d}\}$ in $\VNP$ such that any homogeneous $\Sigma \Pi \Sigma \Pi^{[t]}(s)$ circuit computing $f_{n,d}$ must have $$s = \left(\frac{n}{d}\right)^{\Omega\left(\frac{d}{t \ln t}\right)}.$$ 
\item \underline{Low $t$ case}: Let $n,d,t \in \N$ such that $n \geq d^{20}$ and $1 \leq t \leq \min \left\{ \frac{\ln n}{6e \cdot \ln d},~ d \right\}$. There is a family of $n$-variate degree-$d$ multilinear polynomials $\{f_{n,d}\}$ in $\VNP$ such that any homogeneous $\Sigma \Pi \Sigma \Pi^{[t]}(s)$ circuit computing $f_{n,d}$ must have $$s = n^{\Omega\left(\frac{d}{t}\right)}.$$ 
\end{itemize}
\end{theorem}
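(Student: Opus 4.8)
\textbf{Proof proposal for Theorem \ref{thm:lower bound}.}

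The plan is to construct an explicit hard polynomial $f_{n,d}$ and a good choice of projection tuple $L$ together with a set of partial derivatives, so that the two-sided estimate sketched in the excerpt becomes a genuine lower bound. First I would fix the upper bound side: for $Q$ homogeneous of degree $t$ and any $\alpha$ with $|\alpha| = k$ (where $k$ is a parameter $\le m$), the derivative $\partial_\vecx^\alpha Q^m$ equals $Q^{m-k} R_\alpha$ with $\deg R_\alpha = k(t-1)$; after applying $\pi_L$ we land inside $\pi_L(Q)^{m-k} \cdot \F[\vecz]_{\le k(t-1)}$, a space of dimension at most $\binom{n_0 + k(t-1) - 1}{k(t-1)}$. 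Hence for any homogeneous $\Sigma\Pi\Sigma\Pi^{[t]}(s)$ circuit, the linearity of the $\APP$ measure (Equation \eqref{eqn:linearity of the measure}) gives $\APP_{k,n_0}(f) \le s \cdot \binom{n_0 + k(t-1)-1}{k(t-1)}$, so it suffices to exhibit $f_{n,d}$ with $\APP_{k,n_0}(f_{n,d})$ large and then divide.

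For the lower bound on $\APP_{k,n_0}(f_{n,d})$ I would take $f_{n,d}$ to be a suitable multilinear polynomial in $\VNP$ — the natural candidate is a Nisan--Wigderson-style design polynomial (a sum of monomials indexed by low-degree univariate polynomials / by a combinatorial design), chosen so that its $k$-th order partials are ``spread out''. The target is to show that for a random (or explicit, e.g. p-projection) choice of $L = (\ell_1(\vecz),\ldots,\ell_n(\vecz))$ the projected partials $\pi_L(\partial_\vecx^\alpha f_{n,d})$, as $\alpha$ ranges over $|\alpha| = k$, span a space of dimension $\min\{\binom{n_0+d-k-1}{d-k}, \binom{n+k-1}{k}\}$ up to the constants hidden in the exponents. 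The cleanest route is: (i) show the partials $\{\partial_\vecx^\alpha f_{n,d}\}$ are themselves linearly independent and in fact ``generic enough'' (the NW design guarantees that distinct $\alpha$'s produce partials with disjoint or nearly-disjoint monomial supports), and (ii) argue that a random linear projection $\pi_L$ to $n_0$ variables preserves the dimension of this span as long as the number of partials does not exceed the dimension $\binom{n_0 + d-k-1}{d-k}$ of the ambient target space — this is where an evaluation-dimension / Schwartz--Zippel argument over a large enough field enters, bounding the probability that a random $\pi_L$ drops the rank. Then optimizing $k$ and $n_0$ against the two binomial bounds, and plugging into $s \ge \APP_{k,n_0}(f_{n,d}) / \binom{n_0+k(t-1)-1}{k(t-1)}$, should yield $s \ge (n/d)^{\Omega(d/(t\ln t))}$ in the high-$t$ regime (by choosing $n_0 \approx d/t$-ish so that $k(t-1)$ stays a constant factor below $d-k$) and $s \ge n^{\Omega(d/t)}$ in the low-$t$ regime (where $n_0$ can be taken polynomially large in $d$).

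The main obstacle I expect is step (ii): controlling the effect of the affine projection $\pi_L$ on the span of the partials. Linear independence of the partials of the design polynomial is standard, but after substituting $n$ linear forms in only $n_0$ variables there is massive collapsing, and one must guarantee that the rank does not drop below the information-theoretic ceiling $\binom{n_0+d-k-1}{d-k}$. The right handle is probably to view the span of $\{\partial_\vecx^\alpha f_{n,d} : |\alpha|=k\}$ as the column space of a coefficient matrix, observe that applying $\pi_L$ multiplies this matrix by a (structured, polynomially-parametrized) substitution matrix whose entries are polynomials in the coefficients of $L$, and then show the relevant maximal minor is a nonzero polynomial in those coefficients — nonzero because some specific choice of $L$ (e.g. a p-projection picking out a subset of variables, or a carefully chosen monomial substitution) already achieves full rank. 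Establishing that ``some $L$ works'' is essentially the content of the tight-separation Lemma \ref{lem: tight lower bound} alluded to in the framework discussion, so the technical heart is really exhibiting one clean witness projection for the design polynomial and then invoking Schwartz--Zippel to upgrade it to a random $L$, with the field-size hypothesis $|\F|$ large ensuring the probability bound. A secondary nuisance is the bookkeeping to make $f_{n,d}$ genuinely multilinear (forcing the affine, rather than merely linear/skewed, projections, as flagged in the remark) and to place it in $\VNP$, but this is routine once the design is fixed.
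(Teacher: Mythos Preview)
Your proposal has one genuine gap and one significant divergence from the paper's argument.

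\textbf{The gap: wrong upper bound for the wrong model.} The theorem is about homogeneous $\Sigma\Pi\Sigma\Pi^{[t]}$ circuits, where each term is a \emph{product} $T_i = Q_{i1}\cdots Q_{im_i}$ of degree-$\le t$ polynomials, not a power $Q_i^m$. Your derivative computation $\partial_\vecx^\alpha Q^m = Q^{m-k}R_\alpha$ only handles the powering model $\Sigma\wedge\Sigma\Pi^{[t]}$. For a genuine product, $\partial_\vecx^\alpha(Q_{i1}\cdots Q_{im_i})$ is a sum over all ways of distributing the $k$ differentiations among the $m_i$ factors; after grouping factors so that all but one have degree in $[t,2t]$ (so $m_i \le m := \lfloor d/t\rfloor + 1$), the paper gets
\[
\APP_{k,n_0}(T_i) \;\le\; \binom{m}{k}\binom{n_0+2kt}{n_0},
\]
and the extra $\binom{m}{k}$ factor (roughly $(e/\delta)^k \cdot t^k$) is precisely what costs the $\ln t$ in the high-$t$ exponent $\Omega(d/(t\ln t))$. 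Without it your ratio calculation would give the wrong bound and, more to the point, would not prove the theorem as stated.

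\textbf{The divergence: the hard polynomial and the witness projection.} You propose a Nisan--Wigderson design polynomial and then face the obstacle you correctly identify --- showing that \emph{some} projection $L$ to $n_0$ variables keeps the partials independent. The paper sidesteps this entirely by building the projection into the polynomial. It splits the variables as $\vecx = \vecy \uplus \vecu$ with $\vecu = \vecu_1 \uplus \cdots \uplus \vecu_{d-k}$ (each $|\vecu_j|=n_0$), chooses set-multilinear $\vecu$-monomials $\beta$ in bijection with degree-$(d-k)$ monomials in $\vecz = (z_1,\ldots,z_{n_0})$ via the p-projection $u_{j,l}\mapsto z_l$, pairs each $\beta_i$ with a distinct multilinear degree-$k$ $\vecy$-monomial $\mu_i$, and sets $f_{n,d,t} = \sum_i \mu_i\beta_i$. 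Then $\partial_\vecy^k f_{n,d,t}$ recovers the $\beta_i$'s exactly, and the explicit p-projection sends them onto \emph{all} of $\vecz^{d-k}$, giving $\APP_{k,n_0}(f_{n,d,t}) = \binom{d-k+n_0-1}{n_0-1}$ on the nose --- no Schwartz--Zippel, no rank argument, no randomness. Your route via a generic NW polynomial plus a rank-preservation lemma might be made to work, but it is strictly harder than what is needed here, and the ``main obstacle'' you anticipate simply does not arise in the paper's construction.
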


\begin{remark}
 \emph{More general circuit model.} The above lower bound (proved in Section \ref{sec:lowerbound}) is for the class of homogeneous $\Sigma \Pi \Sigma \Pi^{[t]}$ circuits, which contains the class of homogeneous $\Sigma \wedge \Sigma \Pi^{[t]}$ circuits. In fact, the $\APP$ measure can be used to give a super-polynomial lower bound for general homogeneous depth four circuits -- we skip the proof of this fact here.
\end{remark}

Of course, an $n^{\Omega\left(\frac{d}{t}\right)}$ lower bound is already known for homogeneous $\Sigma \Pi \Sigma \Pi^{[t]}$ circuit using the shifted partials measure \cite{KayalSS14, FournierLMS15}. We get nearly the same lower bound by replacing ``shifts'' by an affine projection. As discussed above, this change is essential to satisfy the direct sum property in Equation (\ref{eqn:direct_sum_L_2}). In terms of lower bounds, this means that we show an explicit polynomial which can be computed by a homogeneous $\Sigma \wedge \Sigma \Pi^{[t]}(s)$ circuit but not by \emph{any} homogeneous $\Sigma \wedge \Sigma \Pi^{[t]}(s-1)$ circuit.

\begin{lemma} [Tight separation for homogeneous $\Sigma \wedge \Sigma \Pi^{[t]}$ circuits] \label{lem: tight lower bound}
Suppose $n, d, s, t, \F$ satisfy the conditions in Theorem \ref{thm:learning sums of powers}. Then, there is a family of explicit\footnote{That is, in time $\poly(n,s)$, we can output the $\Sigma \wedge \Sigma \Pi^{[t]}(s)$ circuit computing the polynomial.} $n$-variate degree-$d$ polynomials computable by homogeneous $\Sigma \wedge \Sigma \Pi^{[t]}(s)$ circuits but not by \emph{any} homogeneous $\Sigma \wedge \Sigma \Pi^{[t]}(s-1)$ circuit.
\end{lemma}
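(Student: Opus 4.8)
The plan is to derive the tight separation for homogeneous $\Sigma \wedge \Sigma \Pi^{[t]}(s)$ circuits from the $\APP$-based lower bound of Theorem~\ref{thm:lower bound}, exactly along the lines of the ``typical lower bound proof'' of Section~\ref{subsubsec:typical_lb}, but pushing it to the point of a \emph{strict} separation rather than just an order-of-magnitude bound. Concretely, I would exhibit an explicit polynomial $g = c_1 Q_1^m + \cdots + c_s Q_s^m$ computed by a $\Sigma \wedge \Sigma \Pi^{[t]}(s)$ circuit such that $\APP_{k,n_0}(g)$ is strictly larger than $s-1$ times the per-summand bound $\APP_{k,n_0}(Q^m) \le \binom{n_0 + k(t-1) - 1}{k(t-1)}$. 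By the linearity property~\eqref{eqn:linearity of the measure}, this immediately forces any $\Sigma \wedge \Sigma \Pi^{[t]}(s-1)$ representation of $g$ to be impossible, which is precisely the claimed tight separation.

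\textbf{The construction.} I would take the hard multilinear family $f_{n,d} \in \VNP$ from Theorem~\ref{thm:lower bound} (or rather the restriction of the argument to the $\Sigma \wedge \Sigma \Pi^{[t]}$ subclass) as the source of ``large-$\APP$'' behavior, but I need an explicit \emph{member of the class}, so instead I would build $g$ directly as a sum of $s$ powers $Q_i^m$ of generic (or suitably structured explicit) degree-$t$ forms, and argue that for the right choice of $Q_i$'s the spaces $\langle \pi_L(\partial_\vecx^k Q_i^m) \rangle$ are in direct sum inside $\F[\vecz]_{d-k}$, so that
\[
\APP_{k,n_0}(g) \;=\; \sum_{i=1}^s \dim \app{L}{\vecx}{k}{Q_i^m}
\]
for a generic projection $L$. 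Here I would reuse the Schwartz--Zippel observation (the bullet in Section~\ref{subsubsec:learning_recipe} and Observation~\ref{obs:random L}): directness for one explicit choice of the $Q_i$'s propagates to generic $Q_i$'s, so it suffices to write down one witnessing instance — and that witnessing instance can be read off from the lower-bound polynomial in Theorem~\ref{thm:lower bound}, which is exactly designed so that the per-summand $\APP$ values add up. The parameters $n_0, k$ are chosen as in the proof of Theorem~\ref{thm:lower bound}, so that $s \cdot \binom{n_0 + k(t-1) - 1}{k(t-1)}$ lies strictly between $\APP_{k,n_0}(g)$ and $(s-1)$ times nothing — i.e., I need the single-summand bound to be a genuine gap factor, which the parameter regime $s \le \min(n^{d/1100t^2}, \exp(n^{1/30t^2}))$ and $n \ge d^2$ is meant to guarantee.

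\textbf{The separation argument.} Given such a $g$, suppose toward contradiction that $g = \sum_{j=1}^{s-1} c_j' {Q_j'}^m$. Then by~\eqref{eqn:linearity of the measure},
\[
\APP_{k,n_0}(g) \;\le\; \sum_{j=1}^{s-1} \APP_{k,n_0}({Q_j'}^m) \;\le\; (s-1)\binom{n_0 + k(t-1) - 1}{k(t-1)},
\]
and the point is that the construction forces $\APP_{k,n_0}(g) > (s-1)\binom{n_0+k(t-1)-1}{k(t-1)}$, a contradiction. The explicitness claim (footnote: output the circuit in $\poly(n,s)$ time) is immediate since $g$ is given as an explicit list of $s$ degree-$t$ forms with small coefficients.

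\textbf{Main obstacle.} The delicate part is not the linearity bound but verifying that the $\APP$ value of the \emph{witness} $g$ is large enough — i.e., that the $s$ component spaces $\langle \pi_L(\partial_\vecx^k Q_i^m)\rangle$ really are in direct sum and each is full-dimensional (equal to $\binom{n_0+k(t-1)-1}{k(t-1)}$, or at least within a constant factor), simultaneously, for a single explicit choice. This is essentially re-running the core of the Theorem~\ref{thm:lower bound} lower bound but tracking the direct-sum structure rather than just a dimension count; I expect it to follow by choosing the $Q_i$'s so that the ``leading parts'' $R_i$ (with $\partial_\vecx^\alpha Q_i^m = Q_i^{m-k} R_i$) together with the distinct factors $Q_i^{m-k}$ separate the summands after projection, then invoking Schwartz--Zippel to reduce to that one instance. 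A secondary point to be careful about is that Theorem~\ref{thm:lower bound} is stated for the larger class $\Sigma\Pi\Sigma\Pi^{[t]}$ and for a $\VNP$ family, whereas here I need the hard object itself to be a small $\Sigma\wedge\Sigma\Pi^{[t]}(s)$ circuit; this forces the construction to go through explicit powers $Q_i^m$ rather than through $f_{n,d}$ directly, so I must check that the $\APP$ lower bound technique still delivers the near-optimal value when applied to this specific structured family in the stated parameter range of Theorem~\ref{thm:learning sums of powers}.
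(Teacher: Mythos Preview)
Your proposal is correct and takes essentially the same approach as the paper: the lemma's proof is declared ``implicit in Section~\ref{sec: random formula}'', where an explicit $f = Q_1^m + \cdots + Q_s^m$ is constructed satisfying non-degeneracy Condition~\ref{cond1}, i.e., $\APP_{k,n_0}(f) = s \cdot \binom{n_0 + k(t-1) - 1}{k(t-1)}$, and then subadditivity of $\APP$ rules out any $\Sigma\wedge\Sigma\Pi^{[t]}(s-1)$ representation exactly as you wrote. The concrete resolution of your ``main obstacle'' in the paper is a Nisan--Wigderson design construction: each $Q_i = R_i(\vecy,\vecz) + G_i(\vecz)$, where the $R_i$'s use disjoint-enough $\vecy$-variable sets (size-$kb$ sets with pairwise intersection $\le k-1$) so that appropriate $k$-th order $\vecy$-derivatives isolate each summand and yield $U_i = \langle \vecz^{k(t-1)} \cdot G_i^{m-k}\rangle$ (Proposition~\ref{prop: equality of cond1}), and the $G_i$'s are powers of sums of variables over a second NW design on $\vecz$, giving the direct sum $U_1 \oplus \cdots \oplus U_s$ (Proposition~\ref{prop: direct sum of cond1 and cond2}); your intuition about ``leading parts $R_i$'' and ``distinct factors $Q_i^{m-k}$ separating the summands'' matches this precisely.
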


The above lemma (whose proof is implicit in Section \ref{sec: random formula}) allows us to argue that a random $\Sigma \wedge \Sigma \Pi^{[t]}$ circuit is non-degenerate with high probability (Item \ref{item:meta_framwork_1} in the assumptions for Algorithm \ref{metaalg:learning from lower bounds}). Let us now briefly explain how we prove the uniqueness of decomposition and describe our algorithm for recovering a term from the corresponding vector space of polynomials (Items \ref{item:meta_framwork_2} and \ref{item:meta_framwork_3}).
\\

\textbf{Uniqueness of vector space decomposition.} The \emph{adjoint algebra} helps us prove uniqueness of decomposition in our setting. Let us discuss what that is. Recall that we have vector spaces $U, V$ and a set of linear maps $\cL_2$ from $U$ to $V$. The adjoint algebra of $\cL_2$ is defined as follows:
\begin{equation}
\adj(\CAL{L}_2) := \left\{ (D,E) ~~:~~D: U \rightarrow U,~ E: V \rightarrow V \text{ and } K \circ D=E\ \circ K ~\text{ for all }~ K \in \CAL{L}_2\right\}.
\end{equation}
Suppose $U = U_1 \oplus \cdots \oplus U_s$, $V = V_1 \oplus \cdots \oplus V_s$ is an indecomposable decomposition with respect to $\cL_2$. If it so happens that for every $(D,E) \in \adj(\cL_2)$, there exist constants $\alpha_1,\ldots, \alpha_s$ such that $Du_i = \alpha_i u_i$ for all $u_i \in U_i$, then the decomposition is unique (this follows from Corollary \ref{cor:diagonalizable adjoint} and \ref{corrLuniqueness_vector_space_decomp}). We show this is the case in our setting. We deviate from the framework in Section \ref{subsubsec:learning_recipe} a little bit to simplify the analysis. Due to this, the vector spaces in our case turn out to be $V = \left\langle \pi_L(Q_1)^{m-k}, \ldots, \pi_L(Q_s)^{m-k}\right\rangle$ and $W = W_1 \oplus \ldots \oplus W_s$, $W_i := \app{P}{\vecz}{k}{\pi_L(Q_i)^{m-k}}$, and $\cL_2 = \pi_{P}(\partial_{\vecz}^k ~)$ are linear maps from $V$ to $W$ (changing notation to match with Section \ref{sec:learning}). Here, $L$ is a random projection onto $n_0$ variables $\vecz$, and $P$ is a random projection onto $m_0$ variables $\vecw$. \\

\textbf{Recovery of the terms from the corresponding vector spaces.} Due to the above-mentioned deviation from the framework, the final problem we have to solve is the following: Given access to the random projections $\pi_L(Q_1),\ldots, \pi_L(Q_s)$, recover the $Q_i$'s. First of all, if one is given multiple projections $\pi_L(Q)$, then it is not hard to recover $Q$. However, we have multiple polynomials and for each random projection, we could be given the polynomials in an arbitrary order. This makes the recovery slightly non-trivial. For details of how we solve this, see the analysis of Steps \ref{mainalgo:step7}-\ref{mainalgo:step8} of Algorithm \ref{alg:learning sums of powers} in Section \ref{sec:analysis}.
\\

Next, we explicitly state the non-degeneracy conditions we require. The details of our algorithm and analysis can be found in Section \ref{sec:learning}. As mentioned, we deviate from the general recipe to simplify the analysis, but the recipe provides the intuition and forms the backbone of the algorithm.

\subsubsection{Non-degeneracy conditions}\label{subsubsec:non-degeneracy}
In this section, we state the non-degeneracy conditions that our algorithm requires. Let $C$ be a homogeneous $\Sigma \wedge \Sigma \Pi^{[t]}(s)$ circuit computing an $n$-variate degree-$d$ polynomial
\begin{equation} \label{eqn:non-degenerate formula}
f = c_1Q_1^m + \ldots + c_sQ_s^m.
\end{equation}
\textbf{Notations.}~ Let $\vecz = (z_1, \ldots, z_{n_0})$ be a set of $n_0 = \lfloor n^{\frac{1}{3 \cdot t}}\rfloor$ variables and $\vecw = (w_1, \ldots, w_{m_0})$ a set of $m_0 = \lfloor {n}^{\frac{1}{15 \cdot t^2}}\rfloor$ variables. Let $L = (\ell_1(\vecz), \ldots, \ell_{n}(\vecz))$ be a tuple of $n$ linear forms in $\F[\vecz]$ and $P = (p_1(\vecw), \ldots, p_{n_0}(\vecw))$ a tuple of $n_0$ linear forms in $\F[\vecw]$. For every such tuples of linear forms $L$ and $P$, we can define the following spaces and polynomials by setting $k = \left\lceil \frac{130 \cdot t \cdot \log s}{\log n} \right\rceil$:   
\begin{itemize}
\item $U := \app{L}{\vecx}{k}{f}$ and $U_i := \app{L}{\vecx}{k}{Q_i^m}$,
\item $G_i := \pi_L(Q_i)$ and $g_0(\vecz) := G_1^{e} + \ldots + G_s^{e}$, where $e = m-k$,
\item $\widetilde{U_i} := \left \langle \vecz^{2k(t-1)} \cdot G_i^{e} \right \rangle$, where $\vecz^{2k(t-1)}$ is the set of all $\vecz$-monomials of degree $2k(t-1)$,
\item $W := \app{P}{\vecz}{k}{g_0}$ and $W_i := \app{P}{\vecz}{k}{G_i^{e}}$.
\end{itemize}
Observe that
$$U \subseteq U_1 + \ldots + U_s, ~~~\text{and}~~~  U_i \subseteq \left \langle \vecz^{k(t-1)} \cdot \pi_L(Q_i)^{m-k} \right \rangle ~~~\text{implying}~~~ \dim{U_i} \leq {n_0 + k(t-1) -1 \choose k(t-1)},$$
$$W \subseteq W_1 + \ldots + W_s, ~~~\text{and}~~~  W_i \subseteq \left \langle \vecw^{k(t-1)} \cdot \pi_P(G_i)^{e-k} \right \rangle ~~~\text{implying}~~~ \dim{W_i} \leq {m_0 + k(t-1) -1 \choose k(t-1)}.$$

\begin{definition} [Non-degeneracy] \label{defn: non-degeneracy}
The circuit $C$, given by Equation \eqref{eqn:non-degenerate formula}, is \emph{non-degenerate} if there exist $L$ and $P$ (as above) such that the following conditions are satisfied:
\begin{enumerate}
\item \label{cond1} $U = U_1 \oplus \ldots \oplus U_s$ ~and~ $\dim U_i = {n_0 + k(t-1) -1 \choose k(t-1)}$ for all $i \in [s]$,
\item \label{cond2} $W = W_1 \oplus \ldots \oplus W_s$ ~and~ $\dim W_i = {m_0 + k(t-1) -1 \choose k(t-1)}$ for all $i \in [s]$,
\item \label{cond3} $\widetilde{U_1} + \ldots + \widetilde{U_s} = \widetilde{U_1} \oplus \ldots \oplus \widetilde{U_s}$ for all $i \in [s]$,
\item \label{cond4} $[G_1^e]_{z_1=0}, \ldots, [G_s^e]_{z_1=0}$ are $\F$-linearly independent.
\end{enumerate}
\end{definition}
\vspace{2mm}
Condition \ref{cond1} and \ref{cond2} constitute the main part of non-degeneracy. Condition \ref{cond3} and \ref{cond4} have been added to aid our analysis and keep it relatively simple; it may be possible to dispense with these conditions completely perhaps by altering Conditions \ref{cond1} and \ref{cond2} slightly.  \\

We prove the following lemma in Section \ref{sec: random formula}, which says that if we choose the $Q_i$'s randomly in Equation (\ref{eqn:non-degenerate formula}), then the circuit is non-degenerate with high probability.

\begin{lemma} [Random $\Sigma\wedge\Sigma\Pi^{[t]}$ circuits are non-degenerate] \label{lem:non-degenerate_random}
Suppose the coefficients of $Q_i$'s are chosen uniformly and independently at random from a set $S \subset \F$ of size $|S| \ge (n s)^{150 \cdot t}$. Then, with probability $1 - o(1)$, the non-degeneracy conditions in Definition \ref{defn: non-degeneracy} are satisfied.
\end{lemma}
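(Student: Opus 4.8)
The plan is to show that the four conditions of Definition~\ref{defn: non-degeneracy}, regarded as conditions on the joint tuple of coefficients $(Q_1,\dots,Q_s,L,P)$, cut out a non-empty Zariski-open set; since a tuple $Q_\bullet$ is non-degenerate exactly when it extends to a point of this set, it then suffices --- by a Schwartz--Zippel argument \cite{Schwartz80,Zippel79} --- to exhibit a \emph{single} tuple $(Q^{(0)}_\bullet,L^{(0)},P^{(0)})$ in the set: a random choice of $Q_\bullet$ extends to some $(Q_\bullet,L,P)$ in it with probability $1-o(1)$.

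First I would make each condition explicitly polynomial. Write $r:=\binom{n_0+k(t-1)-1}{k(t-1)}$ and $r_0:=\binom{m_0+k(t-1)-1}{k(t-1)}$. Since $\pi_L$ and $\partial_{\vecx}^k$ are linear, the containments $U\subseteq U_1+\dots+U_s$ and $\dim U_i\le r$ hold unconditionally; hence Condition~\ref{cond1} is \emph{equivalent} to $\dim U=sr$, i.e.\ to the non-vanishing of some $sr\times sr$ minor of the coefficient matrix of $\{\pi_L(\partial_{\vecx}^\alpha f):|\alpha|=k\}$ in the monomial basis of $\F[\vecz]_{d-k}$, whose entries are polynomials of degree $\le m$ in the coefficients of the $Q_i$'s and $\le d-k$ in those of $L$. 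The same reformulation one level down handles Condition~\ref{cond2} (an $sr_0\times sr_0$ minor built from the $k$-th $\vecz$-derivatives of $g_0=\sum_i G_i^e$, $G_i=\pi_L(Q_i)$, $e=m-k$, composed with $P$); Condition~\ref{cond4} is the non-vanishing of an $s\times s$ minor of the coefficient matrix of $[G_1^e]_{z_1=0},\dots,[G_s^e]_{z_1=0}$; and Condition~\ref{cond3} is likewise Zariski-open, being implied by the $\widetilde{U_i}$ having their generic dimensions together with the non-vanishing of one minor that picks independent rows out of each block (this is why the witness should be chosen with the $\widetilde{U_i}$ of generically-maximal dimension). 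In every case the certifying polynomial has degree at most $(ns)^{O(t)}$ in the $Q_i$-coefficients and at most $(ns)^{O(t)}$ in the $(L,P)$-coefficients, using that $sr\le (ns)^{O(t)}$ (since $n_0\le n^{1/(3t)}$ and $k=O(t\log s/\log n)$). Given a witness, pick for each condition a minor $M_j$ nonzero there and set $M:=M_1M_2M_3M_4\not\equiv 0$; if the $Q_i$-coefficients are drawn from $S$ with $|S|\ge(ns)^{150t}$, then with probability $1-(ns)^{O(t)}/|S|=1-o(1)$ the polynomial $M(Q_\bullet,\cdot,\cdot)$ is not identically zero in $(L,P)$, and for any such $Q_\bullet$ there exist $L,P$ with $M(Q_\bullet,L,P)\ne 0$, whence all four conditions hold.

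It remains to produce one non-degenerate $s$-term circuit, for which I would reuse the construction underlying the $\APP$ lower bound (Theorem~\ref{thm:lower bound}) and the tight separation (Lemma~\ref{lem: tight lower bound}). Starting from a Nisan--Wigderson-type design --- $s$ subsets of $[n]$ with small pairwise intersections, of which there can be as many as $n^{\Omega(d/t)}$ --- one attaches to the $i$-th subset a degree-$t$ form $Q_i^{(0)}$ supported on it and fixes explicit projections $L^{(0)}$ (onto $n_0$ variables) and $P^{(0)}$ (onto $m_0$ variables). Since the $k$-th derivatives of $(Q_i^{(0)})^m$ are, up to a lower-order factor, supported on monomials confined to the $i$-th subset, the small intersections force $U_1,\dots,U_s$ into direct sum, while the single-term count --- the inequality $\dim U_i\le r$ made tight for the chosen $Q_i^{(0)},L^{(0)}$ --- gives $\dim U_i=r$; this is Condition~\ref{cond1}. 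Re-running the argument on the forms $(G_i^{(0)})^e$ with the projection $P^{(0)}$ yields Condition~\ref{cond2}, and the disjoint-support mechanism also makes the shifted spaces $\widetilde{U_i}$ independent (Condition~\ref{cond3}) and the restrictions $[(G_i^{(0)})^e]_{z_1=0}$ linearly independent (Condition~\ref{cond4}), after checking that no subset is annihilated by setting $z_1=0$.

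The main obstacle is precisely this witness construction: one must upgrade the one-sided bounds that suffice for lower bounds into the \emph{exact} equalities Conditions~\ref{cond1} and~\ref{cond2} demand --- the $\APP$ argument only needs $\dim U$ to be large, whereas here it must be \emph{exactly} $sr$, so the $s$ blocks must be in honest direct sum \emph{and} each must simultaneously attain the maximal dimension $r$ --- and one must do this with $s$ as large as $n^{\Omega(d/t)}$, which rules out a naive ``$s$ disjoint variable blocks'' construction and forces the Nisan--Wigderson combinatorics together with a careful choice of $n_0,m_0,k,L^{(0)},P^{(0)}$ so that the residual overlap between distinct terms genuinely vanishes after projection. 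Once the witness is in hand, the Zariski-openness, the degree bookkeeping, and the final Schwartz--Zippel step are routine.
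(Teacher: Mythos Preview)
Your overall strategy --- reduce to exhibiting a single witness via Schwartz--Zippel, then build the witness using Nisan--Wigderson designs --- is exactly the paper's approach, and your reformulation of Condition~\ref{cond1} as $\dim U = sr$ is correct and clean. The divergence is in the witness construction, where the paper's mechanism is more intricate than your sketch.

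Your proposal uses a \emph{single} NW design on $[n]$ for the $Q_i$-supports and argues that small pairwise intersections force the $U_i$ into direct sum. But the $U_i = \langle \pi_L(\partial_{\vecx}^k Q_i^m)\rangle$ live in $\F[\vecz]$ \emph{after projection}, and a projection $L$ does not preserve support separation: two $Q_i,Q_j$ supported on nearly disjoint $\vecx$-subsets can map to $G_i,G_j\in\F[\vecz]$ whose $e$-th powers are far from independent. The paper instead splits $\vecx = \vecy \uplus \vecz$ with $|\vecz|=n_0$, takes $L$ to be the projection killing $\vecy$, and writes $Q_i = R_i(\vecy,\vecz) + G_i(\vecz)$ using \emph{two separate designs}. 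An NW design on $\vecy$ (sets $S_i$ with $|S_i\cap S_j|\le k-1$) controls the $R_i$ and guarantees that suitable $k$-th $\vecy$-derivatives of $f$ isolate the $i$-th term, yielding $U_i = \langle \vecz^{k(t-1)}\cdot G_i^{m-k}\rangle \subset U$ with full dimension $r$ (Proposition~\ref{prop: equality of cond1}). A second design on $\vecz$ (sets $\vecz_i$) with $G_i = (\sum_{z\in\vecz_i} z)^t$ then forces the direct sum via a monomial-separation argument: one finds in each $P_i\, G_i^{m-k}$ a monomial that no other term can produce (Proposition~\ref{prop: direct sum of cond1 and cond2}). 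Finally, the paper does \emph{not} exhibit one witness for all four conditions: the $G_i$'s certifying Conditions~\ref{cond1},~\ref{cond3},~\ref{cond4} are chosen differently from those certifying Condition~\ref{cond2} (the latter re-runs the two-design construction one level down in the $\vecz$-variables), and the lemma follows by taking the product of the four certifying minors --- which, as you note, only requires each to be individually nonzero somewhere.
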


\subsection{Moment problem for mixtures of Gaussians}\label{sec:gaussians_intro}

In this section, we discuss an application to learning parameters of mixtures of Gaussians from the moments.
Symmetric tensor decomposition (and also general tensor decomposition) has a lot of applications in both supervised and unsupervised machine learning, e.g., in independent component analysis, learning latent variable models, hidden Markov models, topic models and mixture of Gaussians \cite{MosselR05, comon2010handbook, hsu2012spectral, AnandkumarHK12, HsuK13, anandkumar2014tensor, AndersonBGRV14, BhaskaraCMV14}. In our language, (symmetric) tensor decomposition corresponds to the $t=1$ case of "sums of powers of degree-$t$ polynomials". It is conceivable that learning algorithms which handle more general circuit classes as compared to tensor decomposition will handle a richer class of learning models than mentioned above. One example is given by the mixture of Gaussians model which has a rich history, e.g., see \cite{pearson1894contributions, reynolds1995robust, Dasgupta99, PermuterFJ03, vempala2004spectral, BelkinS10, MoitraV10, AndersonBGRV14, BhaskaraCMV14, GeHK15, regev2017learning}. While special cases of this problem can be solved by reduction to (symmetric) tensor decomposition \cite{HsuK13, AndersonBGRV14, BhaskaraCMV14}, general cases correspond to "sums of powers of quadratics" (and slightly more general models), i.e., the $t=2$ case. This observation is implicit in \cite{GeHK15}. \\

Let us briefly describe the mixture of Gaussians problem. We are given samples from a mixture of Gaussians in $n$ dimensions, $\cD = \sum_{i=1}^s w_i \cN(\mu_i, \Sigma_i)$, where $w_i, \mu_i, \Sigma_i$ denote the weight, mean and covariance matrix of the $i^{\text{th}}$ Gaussian respectively. The goal is to recover the parameters $\left(w_i, \mu_i, \Sigma_i \right)_{i=1}^s$ upto some error. There have been many algorithms developed for the mixture of Gaussians problem and they make varying assumptions about the input parameters. These assumptions can be grouped into three broad categories:
\begin{enumerate}
\item \textbf{Worst case}, e.g., \cite{pearson1894contributions, BelkinS10, MoitraV10}. Here, no assumptions are made on the input parameters. Due to this, the running time is exponential in the number of components $s$. In the worst case, even information theoretically, one needs exponential (in $s$) number of samples to learn the parameters \cite{MoitraV10, AndersonBGRV14}.
\item \textbf{Separation assumptions}, e.g., \cite{Dasgupta99, sanjeev2001learning, achlioptas2005spectral, kannan2005spectral, dasgupta2007probabilistic, brubaker2008isotropic, kumar2010clustering, regev2017learning}. Here, one assumes that the parameters (either the means or the covariance matrices) are well separated. Running times are typically polynomial in the number of components $s$.
\item \textbf{Smoothed setting}, e.g., \cite{HsuK13, AndersonBGRV14, BhaskaraCMV14, GeHK15}. Here, one does smoothed analysis for the problem, i.e., one starts with worst case parameters and perturbs them with some noise, and the goal is to solve the resulting instance. Somewhat surprisingly, the problem becomes easier as the dimension becomes larger.\footnote{This is because we can get more information from lower order moments in the higher dimensional case, which are easier to estimate.} Running times are typically polynomial in the dimension $n$, as long as the number of components $s \le \poly(n)$.\footnote{The exponent of the polynomial running time will depend on the exponent of the polynomial controlling the relation between $s$ and $n$.}
\end{enumerate}

Of course, it is best to have algorithms in the worst case but this usually means running time exponential time in the number of components.\footnote{This should be compared with some of the worst case reconstruction algorithms which run in time exponential in the top fan-in of the circuit.} Making reasonable assumptions on the parameters helps in designing more efficient algorithms (when the number of components are growing). The two assumptions here, separation and the smoothed setting, are incomparable. For example, the algorithms developed in the smoothed setting can also handle instances where the parameters are not well separated. Our contribution towards the mixture of Gaussians problem should be understood in the context of smoothed analysis of the problem. We also remark that there has been a lot of work in designing algorithms for mixtures of Gaussians in the robust setting, i.e., when some of the samples are corrupted adversarially \cite{diakonikolas2019robust, kothari2017outlier, kothari2018robust, hopkins2018mixture}. These algorithms usually work by estimating the moments from the corrupted data, so are in some sense complementary to algorithms that recover the parameters from the moments.
\\

In the smoothed setting, the papers \cite{AndersonBGRV14, BhaskaraCMV14} give a polynomial time algorithm for the special case of mixtures of Gaussians with diagonal covariance matrices (i.e., the different dimensions of every component of the mixture are independent) when the number of components $s \le \poly(n)$. For Gaussians mixtures with general covariance matrices, \cite{GeHK15} gave a polynomial time algorithm when $s \le \sqrt{n}$. We make a step towards getting a polynomial time algorithm for Gaussians mixtures with general covariance matrices when $s \le \poly(n)$. Our algorithm for learning sums of powers of low degree polynomials (specifically sums of powers of quadratics) allows us to recover the parameters of a \emph{non-degenerate} mixture of \emph{zero-mean} Gaussians given access to its $O(1)$-order \emph{exact} moments when $s \le \poly(n)$.

\begin{lemma}[Learning mixtures of zero-mean Gaussians] \label{lemm:intro_Gaussians} 
There is a randomized polynomial time algorithm which when given access to exact $O(1)$-order moments of a mixture of non-degenerate\footnote{The non-degeneracy condition is satisfied if $\Sigma_i = A_i A_i^T$ and the entries of $A_i$'s are choosen uniformly and independently from a set $S \subset \Q$ with $|S| \ge \poly(n,s)$, and of course all $w_i$'s are non-zero.} zero-mean Gaussians, $\cD = \sum_{i=1}^s w_i \cN(\mu_i, \Sigma_i)$, recovers the parameters $\left(w_i, \Sigma_i \right)_{i=1}^s$ with probability $1 - o(1)$.
\end{lemma}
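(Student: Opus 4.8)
The plan is to reduce the moment problem for zero-mean Gaussians to the problem of learning a non-degenerate homogeneous $\Sigma \wedge \Sigma \Pi^{[2]}$ circuit, and then invoke Theorem~\ref{thm:learning sums of powers} with $t = 2$. The starting observation (implicit in \cite{GeHK15}) is that for a single zero-mean Gaussian $\cN(0,\Sigma)$ in $n$ dimensions, the degree-$2m$ moment tensor, contracted against a direction vector $\vecx = (x_1,\ldots,x_n)$, is a scalar multiple of $(\vecx^T \Sigma \vecx)^m$ — the odd moments vanish and the even moments of a Gaussian are determined by its covariance via Isserlis' theorem. Concretely, $\mathbb{E}_{\vecv \sim \cN(0,\Sigma)}\big[\langle \vecv, \vecx\rangle^{2m}\big] = (2m-1)!!\,\cdot (\vecx^T \Sigma \vecx)^m$. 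Hence the $2m$-th moment of the mixture $\cD = \sum_{i=1}^s w_i\,\cN(0,\Sigma_i)$, viewed as a homogeneous degree-$2m$ polynomial in $\vecx$, equals $(2m-1)!! \sum_{i=1}^s w_i\, Q_i^m$ where $Q_i(\vecx) := \vecx^T \Sigma_i \vecx$ is a quadratic form. This is exactly a homogeneous $\Sigma \wedge \Sigma \Pi^{[2]}(s)$ circuit with $c_i = (2m-1)!!\, w_i$ and $t = 2$, $d = 2m$.

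First I would fix $m$ to be a suitable constant so that $d = 2m$ satisfies the hypotheses of Theorem~\ref{thm:learning sums of powers} for $t=2$ (note $n \ge d^2$ is a constant-degree requirement, easily met since $s \le \poly(n)$ forces $n$ large relative to $d$, and all the $s,t,|\F|$ conditions hold with $\F = \Q$ since $\char(\Q) = 0$). Given access to exact $O(1)$-order moments, I build black-box access to the polynomial $f(\vecx) = c_1 Q_1^m + \cdots + c_s Q_s^m$: evaluating $f$ at any point $\vecx \in \Q^n$ just requires a fixed linear combination of the supplied moment entries, computable exactly over $\Q$. I then run the algorithm of Theorem~\ref{thm:learning sums of powers} (which works over $\Q$, as emphasized in the text — this is why general/random affine projections are used and why the vector space decomposition is done in a way that avoids field extensions) to recover $Q_1', \ldots, Q_s'$ with $Q_i' = c_i' Q_{\pi(i)}$ for some permutation $\pi$ and nonzero scalars $c_i'$. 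From each recovered quadratic form $Q_i'$ I read off its Gram matrix, which is a nonzero scalar multiple of $\Sigma_{\pi(i)}$; the scalar is pinned down by the normalization $w_i > 0$ together with solving the resulting linear system $f = \sum_i d_i Q_i'^{\,m}$ for the $d_i$ (doable in randomized polynomial time as noted in the footnote to Theorem~\ref{thm:learning sums of powers}), which then yields $w_i = d_i / (2m-1)!!$ and the correctly scaled $\Sigma_i$.

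The main obstacle — and the crux of the lemma — is verifying that the non-degeneracy conditions of Definition~\ref{defn: non-degeneracy} are satisfied by the circuit $f = \sum_i c_i Q_i^m$ arising from the Gaussians. Here $Q_i(\vecx) = \vecx^T A_i A_i^T \vecx$ with the entries of $A_i$ drawn independently and uniformly from a large finite subset $S \subset \Q$. This is not literally the "coefficients of $Q_i$ chosen uniformly at random" hypothesis of Lemma~\ref{lem:non-degenerate_random}: the coefficients of $Q_i$ are \emph{quadratic} functions of the entries of $A_i$, not uniform. The fix is the same Schwartz–Zippel argument used throughout the paper: the non-degeneracy conditions (Conditions~\ref{cond1}--\ref{cond4}) are Zariski-open conditions, i.e., each is the non-vanishing of some polynomial in the entries of the $A_i$'s; Lemma~\ref{lem:non-degenerate_random} (or rather its proof in Section~\ref{sec: random formula}) already exhibits \emph{one} choice of quadratics making all four conditions hold, and one checks that such a witness can in fact be realized as $\vecx^T A A^T \vecx$ for suitable $A$ (e.g. a generic/random $A$ works, since the map $A \mapsto A A^T$ is dominant onto symmetric matrices of full rank). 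Therefore the composite polynomial in the entries of the $A_i$'s that detects non-degeneracy is not identically zero, so a random choice of the $A_i$'s from a set $S$ of size $\ge \poly(n,s)$ avoids its zero set with probability $1 - o(1)$, and the footnote's claimed non-degeneracy condition follows. Once non-degeneracy holds, the correctness and the $\poly(n)$ running time (for $t = 2$, $d = O(1)$, $s \le \poly(n)$) are immediate from Theorem~\ref{thm:learning sums of powers}.
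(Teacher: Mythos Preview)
Your reduction to a $\Sigma\wedge\Sigma\Pi^{[2]}$ learning problem via the moment generating function (equivalently Isserlis' theorem) is correct and matches the paper, as does the Schwartz--Zippel argument that non-degeneracy survives the reparametrization $\Sigma_i = A_iA_i^T$.

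However, there is a genuine gap in the recovery step. From a single $f = (2m-1)!!\sum_i w_i Q_i^m$ the algorithm of Theorem~\ref{thm:learning sums of powers} returns each $Q_i$ only up to an unknown nonzero scalar $c_i'$, and solving the linear system $f = \sum_i d_i (Q_i')^m$ then gives $d_i = (2m-1)!!\,w_{\pi(i)}/(c_i')^m$. This does \emph{not} pin down $w_{\pi(i)}$ and $\Sigma_{\pi(i)}$ separately: for any $\lambda>0$ the pair $(w_{\pi(i)}\lambda^m,\ \Sigma_{\pi(i)}/\lambda)$ yields the same $d_i$ and the same $Q_i'$ (up to the already-unknown scalar). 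Positivity of the weights constrains only signs, and the single normalization $\sum_i w_i=1$ is one equation in $s$ unknowns, so the ambiguity persists as soon as $s\ge 2$.

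The paper resolves exactly this ambiguity by running the learning algorithm on \emph{two} moment polynomials $f_m=\sum_i w_i Q_i^m$ and $f_{m-1}=\sum_i w_i Q_i^{m-1}$ (both available from the $O(1)$-order moments), matching the two output lists by testing which ratios $Q_i'/\widetilde{Q}_j$ are constant, and then extracting each scaling constant $c_i'$ from the comparison $\widetilde{w}_{\sigma(i)}/(w_i' c_i^{m-1})$; see Algorithm~\ref{alg:learning mixture of Gaussians} and the proof of Lemma~\ref{lem:non-degenerate_Gaussian_mixture}. Once $c_i'$ is known, both $w_i$ and $\Sigma_i$ follow. Your argument would be complete with this two-level trick added.
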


We believe that a few modifications to our algorithm will allow one to get a polynomial time algorithm for general mixtures of Gaussians in the smoothed case. Let us remark on the main differences between our current algorithmic guarantee and the above goal.

\begin{itemize}
\item \textbf{Extension to general means.} Our current statement only holds for zero-mean Gaussians because applying the method of moments to zero-mean mixtures naturally leads to sums of powers of quadratics. This is just to keep the analysis simple and clean. We believe our algorithm should extend to sums of \emph{products} of low degree polynomials and the learning problems coming from moments of mixtures of general mean Gaussians lie somewhere between sums of powers of quadratics and sums of products of quadratics.
\item \textbf{Exact vs inexact moments.} Our algorithm assumes that the $O(1)$-order moments of the mixture are given exactly. Using samples, we can only approximate the moments ($O(1)$-order moments can be approximated to $1/\poly(n)$ accuracy using $\poly(n)$ samples). We leave it open for future work to modify our algorithm to handle $1/\poly(n)$ error.
\item \textbf{Smoothed vs non-degenerate setting.} Note that we state our result in the non-degenerate setting. This seems to be the right kind of assumption when given access to exact moments. If one is given access to inexact moments, one will need to control appropriate condition numbers whence smoothed setting is the right assumption to make.
\end{itemize}

We hope that our techniques will lead to progress on the smoothed analysis of mixtures of general Gaussians and also influence algorithms which work under other kinds of assumptions.

\subsection{Hardness of learning circuits in the worst case} \label{sec:hardness of learning}
In this section, we review some of the hardness results on learning circuits in order to gauge the difficulty of the problem for our circuit model and to place our result in context. Hardness of learning has been intensely studied for Boolean circuits as compared to arithmetic circuits. We state a few of these results from the Boolean world with the intent of drawing analogy. \\

\textbf{MCSP and approximate MCSP.}~ Circuit reconstruction is the arithmetic analogue of exact learning \cite{Angluin87} Boolean circuits from membership queries. Exact learning is closely related to the minimum circuit size problem (MCSP). MCSP for Boolean circuits is the following: Given the $N = 2^n$ size truth-table of an $n$-variate Boolean function $f$ and a number $s$, check if $f$ is computable by a Boolean circuit of size at most $s$. Analogously, in case of MCSP for arithmetic circuits, we are given the $N = {n+d \choose d}$ coefficient vector of an $n$-variate degree-$d$ polynomial $f$ and are required to determine if there is an arithmetic circuit of size at most $s$ computing $f$. 
MCSP for Boolean circuits is not in P assuming the existence of cryptographically secure one-way functions \cite{KabanetsC00}. In fact, $N^{1-o(1)}$-approximate\footnote{Meaning multiplicative factor approximation of the minimum circuit size} MCSP for Boolean circuits is not in P under the same assumption \cite{AllenderH17}.\footnote{However, proving MCSP is NP-hard is quite demanding as that would imply $EXP \neq ZPP$ \cite{MurrayW17}, which is a long-standing open problem.} Analogous results about MCSP for arithmetic circuits are not known. However, drawing analogy with the Boolean world, it is plausible that $N^{1-\delta}$-approximate MCSP for general arithmetic circuits is not in P, for every constant $\delta > 0$. Here, $N = {n+d \choose d}$ is the size of the input coefficient vector. Such a hardness result may even be true for $d = n^{O(1)}$.  \\

MCSP for a circuit class $\CAL{C}$ is defined like MCSP except that we are now interested in checking if the input $f$  has a $\CAL{C}$-circuit of size at most $s$. It is known that the arithmetic analogue of MCSP is NP-hard for set-multilinear depth three circuits (tensors) \cite{Hastad90} and for depth three powering circuits (symmetric tensors) \cite{Shitov16}. It is also known that there is a constant $\delta \approx 0.0005$ such that $(1 + \delta)$-approximate MCSP is NP-hard for set-multilinear depth three circuits \cite{Swernofsky18, BlaserIJL18, SongWZ19}. Similar hardness results are known about MCSP for restricted Boolean circuit classes, e.g., MCSP for DNF is NP-hard \cite{Masek79,Czort99}. In fact, there is a $\delta > 0$ such that $(\log N)^{\delta}$-approximate MCSP is NP-hard for DNF \cite{Feldman09}\footnote{In contrast, a greedy algorithm solves $(\log N)$-approximate MCSP for DNF in $\poly(N)$ time \cite{Johnson74,Lovasz75,Chvatal79}. If the input is a DNF instead of a truth-table then we know of the following result: $s^{\frac{1}{4} - \epsilon}$ factor approximation of the minimum DNF size of an input DNF of size $s$ is not in P, for every $\epsilon \in (0,1)$, assuming $\Sigma_2^{p} \nsubseteq  \text{DTIME}(n^{O(\log n)})$ \cite{Umans99}.}. Approximate MCSP is a difficult problem even for $\text{AC}^0$ circuits: For every $\delta, \epsilon > 0$ there is a $h$ such that $N^{1-\delta}$-approximate MCSP for depth-$h$ circuits is not in BPP unless $m$-bit Blum integer\footnote{An integer of the form $pq$ for primes $p$ and $q$, where $p = q = 3 \mod 4$.} factorization is in $\text{BPTIME}(2^{m^\epsilon})$ \cite{AllenderHMPS08}\footnote{Similar results have also been shown for $\text{NC}^1$ and $\text{TC}^{0}$ circuits \cite{AllenderKR03}.}.  \\

\textbf{Learning implies lower bounds.}~ It was shown in \cite{FortnowK09} that a randomized polynomial-time (worst case, improper) reconstruction algorithm for an arithmetic circuit class $\CAL{C}$ implies the existence of a polynomial that can be computed on Boolean inputs in $BPEXP$ and cannot be computed by circuits in $\CAL{C}$ of polynomial size\footnote{A deterministic analogue of this result was shown in \cite{Volkovich16}.}. Similar results hold for Boolean circuits. Thus, if we are aiming for \emph{worst-case} polynomial-time reconstruction then we must necessarily focus on classes for which super-polynomial lower bounds are known. In fact, to our knowledge, all efficient reconstruction algorithms, in the worst or the average case, that are known till date are either for models for which non-trivial lower bounds are known (or for models that are incomplete). \\

\textbf{Hardness of PAC learning depth three arithmetic circuits.}~  \cite{KlivansS09} showed that PAC-learning depth three arithmetic circuit cannot be done in polynomial time unless the length of a shortest nonzero vector of an $n$-dimensional lattice can be approximated to within a factor of $\tilde{O}(n^{1.5})$ in polynomial time by a quantum algorithm. What this means is that it is hard to PAC learn the class of Boolean functions which match the output of polynomial-sized depth three arithmetic circuits on the Boolean hypercube. 
\\

\textbf{Membership queries versus random samples.}~ Membership queries provide an interactive model of learning as compared to learning from random samples. For some circuit classes, we know of efficient learning from membership queries but not learning from random samples. For instance, there is a deterministic polynomial-time algorithm for interpolating sparse polynomials using membership queries \cite{KlivansS01}, but the same is not known using random samples. The best known complexity for learning an $s$-sparse $n$-variate degree-$d$ real polynomial with error $\epsilon$ from random samples from the real cube $[-1,1]^n$ is $\poly(n,s,2^d,\frac{1}{\epsilon})$ \cite{AndoniPV014}. If the random samples are restricted to the Boolean hypercube $\{-1,1\}^n$ then exact learning of $s$-sparse real polynomials can be done in $\poly(n,2^s)$ time provided the input polynomial satisfies a certain property\footnote{This property is satisfied with high probability if the coefficients of the input polynomial are perturbed slightly by a random noise. Removing this condition on the input polynomial would immediately improve the state-of-the-arts of learning $\omega(1)$-juntas.} \cite{KocaogluSDK14}. Another example, in the Boolean world, is the quasi-polynomial time algorithm for PAC learning $\text{AC}^{0}[p]$ circuits under the uniform distribution from membership queries \cite{CarmosinoIKK16}. It is not known if the same can be achieved without membership queries (like in the case of $\text{AC}^{0}$-learning \cite{LinialMN93}). \\


\textbf{Difficulty of learning homogeneous $\Sigma \wedge \Sigma \Pi^{[t]}$ circuits.}~ It turns out though that learning $\Sigma \wedge \Sigma \Pi^{[t]}$ circuits \emph{in the worst-case} is quite challenging. The reason is, if we can learn homogeneous $\Sigma \wedge \Sigma \Pi^{[t]}$ circuits efficiently then we can solve approximate MCSP for polynomial-size ABP\footnote{Algebraic branching programs (ABP) form a powerful circuit class -- a circuit can be converted to an ABP with only a quasi-polynomial blow up in size.} efficiently. This can be argued as follows: Suppose $f$ is a homogeneous $n$-variate polynomial of degree $d = n^{O(1)} < n$ that is computable by an ABP of size $\poly(n)$. Let $t \ll d$ be a number dividing $d$. Then, $f$ can be computed by a homogeneous $\Sigma \wedge \Sigma \Pi^{[t]}$ circuit\footnote{This circuit is obtained by homogenizing the ABP and then dividing it into pieces of length $t$ and multiplying out each piece. This gives a $\Sigma\Pi\Sigma\Pi^{[t]}$ circuit which is converted to a $\Sigma\wedge\Sigma\Pi^{[t]}$ circuit using Fischer's formula \cite{Fischer94}.} of size $n^{O(\frac{d}{t})}$. Now suppose we are able to learn homogeneous $\Sigma \wedge \Sigma \Pi^{[t]}$ circuits of size $\sigma$ in time $\poly(\sigma^t)$, for $t = \omega(1)$, and output $\poly(\sigma)$-size circuits. Then, we would succeed in solving $N^{o(1)}$-approximate MCSP for $\poly(n)$-size ABP in $\poly(N)$-time, where $N = {n+d \choose d}$. This appears to be a difficult task with our current knowledge on MCSP\footnote{See the discussion on MCSP at the start of this section.}. If such an efficient approximate MCSP for ABP is unattainable then there is no hope of learning homogeneous $\Sigma \wedge \Sigma \Pi^{[t]}$ circuits in $\poly(\sigma^t)$ time in the worst-case (where the output is a $\poly(\sigma)$-size circuit), for $t=\omega(1)$.

\subsection{Related work} \label{sec:previous work}
In view of the fact that learning general arithmetic circuits is probably a hard problem, research has focused on learning interesting special classes of circuits. Here, we give a brief account of some of these results from the literature. \\

\textbf{Low depth circuits.} A deterministic polynomial-time learning algorithm for $\Sigma\Pi$ circuits or sparse polynomials was given in \cite{KlivansS01}. Learning $\Pi\Sigma\Pi$ circuits in randomized polynomial-time follows from the classical circuit-factorization algorithm of \cite{KaltofenT90}. General $\Sigma\Pi\Sigma$ circuits are much harder to learn: It follows from a depth reduction result \cite{AgrawalV08, Koiran12, GKKS16, Tavenas13} that polynomial-time learning for $\Sigma\Pi\Sigma$ circuits implies sub-exponential time learning for general circuits. In \cite{Shpilka09}, a randomized $\qpoly(n,d,|\F|)$-time proper\footnote{\label{fn:rank}Provided the circuit satisfies a certain rank condition} learning algorithm was given for $\Sigma\Pi\Sigma$ circuits with two product gates over finite fields; if the circuit is additionally multilinear then the running time is $\poly(n,|\F|)$. The algorithm was derandomized and generalized  in \cite{KarninS09} to handle $\Sigma\Pi\Sigma$ circuits with constant number of product gates. Over fields of characteristic zero, \cite{Sinha16} gave a randomized proper\footref{fn:rank} learning algorithm for $\Sigma\Pi\Sigma$ circuits with two product gates. A randomized polynomial-time proper learning algorithm is known for multilinear $\Sigma\Pi\Sigma\Pi$ circuits with top fan-in two over any field \cite{GuptaKL12}. Recently, a deterministic proper learning algorithm is given in \cite{BhargavaSV19} for multilinear $\Sigma\Pi\Sigma\Pi$ circuits with constant top fan-in over finite fields; the running time is quasi-polynomial in the size of the circuit and $|\F|$. \\

\textbf{Read-once formulas and ABPs.} A deterministic polynomial-time proper learning algorithm is known for read-once formulas \cite{MinahanV18, ShpilkaV14}. Read-once oblivious algebraic branching programs (ROABPs) form an important subclass of ABPs that captures several other interesting and well-studied circuit models. There is a randomized polynomial-time proper learning algorithm for ROABP \cite{KlivansS06, BeimelBBKV00} which was derandomized in quasi-polynomial time in \cite{ForbesS13}. The method used for ROABP reconstruction can be adapted to give learning algorithms for set-multilinear ABPs and non-commutative ABPs \cite{ForbesS13}. \\

\textbf{Reconstruction under non-degeneracy conditions.} Reconstruction in the worst case appears to be an extremely hard problem even for circuit models for which good lower bounds are known. It is natural to ask -- Can we use the techniques used for proving lower bound for a circuit class $\CAL{C}$ to learn \emph{almost all} $\CAL{C}$-circuits? The notion of ``almost all $\CAL{C}$-circuits" is formalized as random $\CAL{C}$-circuits under some natural distribution, or preferably, as $\CAL{C}$-circuits satisfying a set of clearly stated non-degenerate conditions such that a random $\CAL{C}$-circuit (under any natural distribution) is non-degenerate with high probability. In \cite{GuptaKL11}, a randomized polynomial-time proper learning algorithm was given for non-degenerate\footnote{The papers \cite{GuptaKL11,GKQ14} state the results for random formulas, but it is not difficult to state the non-degeneracy conditions by taking a closer look at the algorithms.} multilinear formulas having fan-in two. A randomized polynomial-time proper learning algorithm for non-degenerate regular formulas having fan-in two was given in \cite{GKQ14}. An efficient randomized reconstrution for non-degenerate homogeneous ABPs of width at most $\frac{\sqrt{n}}{2}$ is presented in \cite{KayalNS19}. All the above reconstruction algorithms are implicitly connected to the corresponding lower bounds: a quasi-polynomial lower bound for multilinear formulas was already shown in \cite{Raz09}, a quasi-polynomial lower bound for regular formulas was proven in \cite{KayalSS14}\footnote{Note that here the lower bound comes later than the average case reconstruction algorithm; in fact, the ideas arising out of the reconstruction algorithm were helpful in proving the lower bound.} and a width lower bound of $n$ is also known for homogeneous ABPs \cite{Kumar19}. Recently, \cite{KayalS19} gave a randomized polynomial-time proper learning algorithm for non-degenerate homogeneous depth three circuits depending very explicitly on the ideas used in proving an exponential lower bound for this model \cite{NisanW97}. Also, randomized polynomial-time proper learning algorithms for non-degenerate depth three powering circuits are given in \cite{KayalS19, Garcia-MarcoKP18, Kayal12} which have implicit connections to the corresponding lower bound methods. \\

\textbf{Tensor decomposition.}~ Tensor decomposition (which is the same as reconstruction of depth three set-multilinear circuits) has garnered a lot of attention in the machine learning community and a lot of algorithms have been developed for it \cite{harshman1970foundations, leurgans1993decomposition, de2007fourth, anandkumar2014tensor, AndersonBGRV14, BhaskaraCMV14, barak2015dictionary, ge2015decomposing, hopkins2016fast, ma2016polynomial}. Comparing to our model, symmetric tensor decomposition corresponds to learning sums of powers of linear forms. We do not know of any work that designs algorithms for sums of powers of degree-$t$ polynomials for $t > 1$, except for the work of \cite{GeHK15}. An algorithm for learning sums of cubes of quadratics in the non-degenerate case, with the number of summands upper bounded by $\sqrt{n}$ is implicit in \cite{GeHK15}. Their approach is to reduce the problem to tensor decomposition. However, we believe such an approach cannot be made to handle larger number of summands (say $\poly(n)$) even in the quadratic case as the lower bounds for sums of powers of quadratics need substantially newer ideas than the linear case as discussed in Section \ref{subsubsec:implementation}.
\\

\textbf{Lower bounds and PIT for $\Sigma \wedge \Sigma \Pi^{[t]}$ circuits.}~ Homogeneous $\Sigma \wedge \Sigma \Pi^{[t]}$ circuits have been well-studied in the context of lower bound and polynomial identity testing (PIT). Understanding how to prove lower bound for this model played a vital role in the proof of the exponential lower bound for homogeneous depth four circuits that emerged out of a chain of work \cite{Kayal12eccc,GKKS14,KayalSS14,FournierLMS15,KayalLSS17,KumarS17}. Also, a deterministic $s^{O(t \log s)}$-time black-box PIT algorithm is known for homogeneous $\Sigma \wedge \Sigma \Pi^{[t]}(s)$ circuits \cite{Forbes15}. 
\\

\textbf{Improper learning for sums of powers of low degree polynomials.}~ In this paper, we focus on proper learning (i.e., the input and output representations are from the same circuit class). However, to our knowledge, there is no known efficient learning algorithm for sums of powers of degree-$t$ polynomials (worst case or average case) even for $t=2$, and even in the improper setting. For the $t=1$ case, a polynomial-time improper learning algorithm (worst case) follows from ROABP reconstruction \cite{BeimelBBKV00,KlivansS06} as sums of powers of linear forms (depth three powering circuits) is a subclass of ROABP. But, reconstruction for ROABP does not give a learning algorithm for sums of powers of quadratics as there is a power of a quadratic that requires an exponential-size ROABP \cite{Forbes15}. \\

To summarize, we have efficient learning algorithms (even under non-degeneracy conditions) only for some models for which good lower bounds are known (or for models that are incomplete). Moreover, barring a few exceptions like ROABP, read-once formulas and sparse polynomials, the circuit models for which efficient learning is known do have the fan-in of the sum gates very small (mostly bounded by a constant). In comparison, our strategy for translating techniques from lower bounds to learning works for a  much larger additive fan-in. Such a translation is only possible for learning under non-degeneracy condition as worst-case learning is arguably much harder\footnote{See the discussion on the difficulty of learning homogeneous $\Sigma \wedge \Sigma\Pi^{[t]}$ circuits in the worst case in Section \ref{sec:hardness of learning}.} than proving lower bound.

\subsection{Roadmap of the paper}

In Section \ref{sec:learning}, we state our algorithm for learning sums of powers of low degree polynomials and its analysis, and also prove that the non-degeneracy conditions are satisfied in the random case. In Section \ref{sec:gaussians}, we provide an algorithm for recovering the parameters of a non-degenerate mixture of zero-mean Gaussians given access to its exact $O(1)$-order moments. In Section \ref{sec:lowerbound}, we show how our new lower bound measure $\APP$ can be used to give alternate proofs of the lower bounds for homogeneous $\Sigma \Pi \Sigma \Pi^{[t]}$ circuits (almost matching the best known lower bounds which use the shifted partials measure). Section \ref{sec:open_problems} contains some of the interesting open problems and directions for future work. \\

In the Appendix, Section \ref{sec:appendix adjoint} mentions some important facts about the adjoint algebra and a proof of the uniqueness of decomposition in our setting. In Section \ref{sec: space to module decomposition}, we show a reduction from the vector space decomposition\footnote{In fact a generalization of it.} problem to the module decomposition problem. Section \ref{sec: limitation of SP} contains a discussion about why the shifted partials measure is unlikely to satisfy the non-degeneracy conditions required for our learning framework. Finally, Sections \ref{sec:proofs from sec learning} and \ref{sec:proofs from sec lowerbound} supply the missing proofs from Sections \ref{sec:learning} and \ref{sec:lowerbound}, respectively.

\section{Learning sums of powers of low degree polynomials} \label{sec:learning}
We prove Theorem \ref{thm:learning sums of powers} in this section. Our algorithm is an implementation of the lower bound to learning strategy (proposed in Section \ref{sec:learning from lower bound}) for homogeneous $\Sigma \wedge \Sigma \Pi^{[t]}$ circuits\footnote{We deviate from the strategy slightly, by introducing an intermediate \emph{multi-gcd} step (see Algorithm \ref{alg:learning sums of powers}), in order to make the analysis simpler.}. Section \ref{sec:algorithm} describes the algorithm. Section \ref{sec:analysis} describes the analysis of the algorithm. In Section \ref{sec: random formula}, we prove that a random $\Sigma \wedge \Sigma \Pi^{[t]}$ circuit satisfies our non-degeneracy conditions. Finally Section \ref{sec:parameters} lists some relations between various parameters which are needed for the analysis to work. \\

For simplicity of presentation, we will assume that $\F$ is a \emph{finite field} of sufficiently large size and characteristic. The analysis goes through over any $\F$ that satisfies the restrictions on size and characteristic stated in Theorem \ref{thm:learning sums of powers} -- we simply have to work with a sufficiently large subset of $\F$ and do a few minor changes to the algorithm and its analysis.

\subsection{The algorithm} \label{sec:algorithm}
We are given black-box access to an $n$-variate degree-$d$ polynomial $f$ that is computed by a homogeneous $\Sigma \wedge \Sigma \Pi^{[t]}(s)$ formula $C$, i.e.,
\begin{equation} \label{eqn:input sum of powers}
f(\vecx) = c_1Q_1^m + \ldots + c_s Q_s^m,
\end{equation}
where each $c_i\in \F^{\times}$, $Q_i$ is a homogeneous polynomial of degree $t$, and $tm = d$. Moreover, formula $C$ is non-degenerate (see Definition \ref{defn: non-degeneracy}). Assume that the algorithm knows\footnote{If $s$ is unknown, we can simply go over $s$ incrementally (starting from $1$ and going up to the upper bound stated in Theorem \ref{thm:learning sums of powers}) and run the algorithm for each $s$. A randomized identity test at the end of the algorithm determines if we have learnt the circuit correctly with high probability.} $d, t$ and $s$ and these parameters satisfy the conditions stated in Theorem \ref{thm:learning sums of powers}. The task is to output a homogeneous $\Sigma \wedge \Sigma \Pi^{[t]}(s)$ formula for $f$ efficiently. The parameters $k, n_0$ and $m_0$, in Algorithm \ref{alg:learning sums of powers}, are chosen according to Proposition \ref{prop:setting parameters to satisfy a bunch of conditions}, which is stated in Section \ref{sec:parameters}. A \emph{random linear form} is a linear form whose coefficients are chosen independently and uniformly at random from $\F$. A \emph{tuple of $n$ random linear forms} is a tuple of $n$ independently chosen random linear forms. 

\begin{algorithm}
	\caption{Learning sums of powers of degree-$t$ polynomials } \label{alg:learning sums of powers}
	\begin{algorithmic}
		\STATE \textbf{Input}: Black-box access to an $f \in \F[\vecx]$ that is computed by a non-degenerate homogeneous $\Sigma \wedge \Sigma \Pi^{[t]}(s)$ formula $C$ (as in Equation \eqref{eqn:input sum of powers}), i.e., $f = c_1Q_1^m + \ldots + c_s Q_s^m$.
		\STATE \textbf{Output}: A non-degenerate homogeneous $\Sigma \wedge \Sigma \Pi^{[t]}(s)$ formula computing $f$.
	\end{algorithmic}
	\begin{algorithmic}[1]
		\STATEx
		\STATEx \begin{center}\textcolor{Gray}{/* Constructing two sets of linear operators and obtaining the relevant vector spaces  */} \end{center}
		\STATE \label{mainalgo:step1} Pick a tuple of $n$ random linear forms $L = (\ell_1(\vecz), \ldots, \ell_n(\vecz))$, where $|\vecz| = n_0$. Let $\CAL{L}_1$ be the set of operators $\pi_{L}(\partial_{\vecx}^k ~)$.
		\STATE \label{mainalgo:step2} Compute black-box access to a basis of $U = \left\langle \CAL{L}_1 \circ f \right\rangle = \app{L}{\vecx}{k}{f}$.
		\STATE \label{mainalgo:step3} (\emph{Multi-gcd step})~ Compute black-box access to a basis of $V = \left\langle \pi_L(Q_1)^{m-k}, \ldots, \pi_L(Q_s)^{m-k}\right\rangle$ using the basis of $U$. It will follow from Observation \ref{obs: random L gives tilde U a direct sum structure} that
								$$V = V_1 \oplus \ldots \oplus V_s,~~~~~~ \text{(with high probability)}$$
		where $V_i = \left\langle G_i^e \right\rangle$, $G_i = \pi_L(Q_i)$ and $e = m-k$. Get black-box access to a random $g(\vecz)$ in $V$. 
		\STATE \label{mainalgo:step4} Pick a tuple of $n_0$ random linear forms $P = (p_1(\vecw), \ldots, p_{n_0}(\vecw))$, where $|\vecw| = m_0$. Let $\CAL{L}_2$ be the set of operators $\pi_{P}(\partial_{\vecz}^k ~)$.
		\STATE \label{mainalgo:step5} Compute black-box access to a basis of $W = \app{P}{\vecz}{k}{g}$. It will follow from Proposition \ref{prop: decomposition of W},
								$$W = W_1 \oplus \ldots \oplus W_s,~~~~~~ \text{(with high probability)}$$
		where $W_i := \app{P}{\vecz}{k}{G_i^{e}} = \left\langle \CAL{L}_2 \circ V_i\right\rangle$.
		\STATEx
		\STATEx \begin{center}\textcolor{Gray}{/* Decomposing the vector spaces */} \end{center}
		\STATE \label{mainalgo:step6} Compute black-box access to bases of $V_1, \ldots, V_s$ by decomposing $V$ and $W$ under the action of $\CAL{L}_2$ into indecomposable subspaces. 
		\STATEx
		\STATEx \begin{center}\textcolor{Gray}{/* Recovering the terms of the formula */} \end{center}
		\STATE \label{mainalgo:step7} Run Steps \ref{mainalgo:step1}-\ref{mainalgo:step6} ``$d$ times''\footnotemark ~to compute black-box access to $c'_1 Q_1(\vecx)^e, \ldots, c'_s Q_s(\vecx)^e$ for some constants $c'_1, \ldots, c'_s \in \F^{\times}$.
		\STATE \label{mainalgo:step8} Compute (dense representations of) the polynomials $\hat{c}_1 Q_1(\vecx), \ldots, \hat{c}_s Q_s(\vecx)$ for some constants $\hat{c}_1, \ldots, \hat{c}_s \in \F^{\times}$. Output a homogeneous $\Sigma \wedge \Sigma \Pi^{[t]}(s)$ formula for $f$.
								
	\end{algorithmic}
\end{algorithm}
\footnotetext{this will be explained in the analysis of this step}

\subsection{Analysis of the algorithm} \label{sec:analysis}
We analyze the correctness and efficiency of the algorithm in this section. The three main segments of the algorithm are Steps \ref{mainalgo:step1}-\ref{mainalgo:step5}, Step \ref{mainalgo:step6} and Steps \ref{mainalgo:step7}-\ref{mainalgo:step8} -- we examine these one by one. The missing proofs of the technical statements are given in Section \ref{sec:proofs from sec learning} of the appendix.

\subsubsection*{Steps \ref{mainalgo:step1}-\ref{mainalgo:step5}: Constructing two sets of linear operators and obtaining the relevant vector spaces}
Let $\sigma$ be the size of the non-degenerate formula $C$ that computes $f$. In Step \ref{mainalgo:step2}, the algorithm computes black-box access to a basis of $U = \app{L}{\vecx}{k}{f}$, where $L$ is a tuple of $n$ random linear forms in $n_0$ many $\vecz$-variables. This can be done in $\poly(\sigma, s^t)$ time, with success probability $1-o(1)$, due to the choice of $k$ (see Proposition \ref{prop:setting parameters to satisfy a bunch of conditions}) and the following easily verifiable fact.

\begin{fact} \label{fact: computing a basis of the space of partials is easy}
Given black-box access to an $n$-variate degree-$d$ polynomial $f(\vecx)$, black-box access to the polynomials in $\der{\vecx}{k}{f}$ can be computed in determimistic $\poly((nd)^k)$ time. Given black-box access to $n$-variate degree-$d$ polynomials $g_1, \ldots, g_r$, black-box access to a basis of $\left \langle g_1, \ldots, g_r \right \rangle$ can be computed in randomized $\poly(n,d,r)$ time with probability at least $1 - \frac{rd}{|\F|}$.
\end{fact}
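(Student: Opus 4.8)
The plan is to realise each $\partial_{\vecx}^{\alpha} f$ with $|\alpha| = k$ as a $k$-fold composition of single-variable derivative operators, and to implement one single-variable step by univariate interpolation. Given black-box access to a polynomial $g$ of degree $\le d$ and a point $\veca$, I would note that $t \mapsto g(\veca + t\,\vece_i)$ is a univariate polynomial of degree $\le d$; interpolating it from its values at $d+1$ distinct field elements and reading off the coefficient of $t^1$ gives $(\partial_{x_i} g)(\veca)$ exactly — this is a first-order Taylor coefficient, so no division by a factorial is needed (which matters in small characteristic). Composing this $\alpha_1$ times in direction $x_1$, then $\alpha_2$ times in direction $x_2$, and so on, produces, for every $\alpha$ with $|\alpha| = k$, a procedure evaluating $\partial_{\vecx}^{\alpha} f$ at any point using $(d+1)^{k}$ queries to $f$ plus $\poly(d^k)$ further field operations. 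Since there are at most $\binom{n+k-1}{k} \le n^k$ such $\alpha$, assembling the whole family of black boxes takes deterministic $\poly((nd)^k)$ time. The only hypothesis used is that $\F$ contains $d+1$ distinct elements (true under the size assumptions; otherwise one works over a small extension).

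\textbf{Part 2: a basis of $\langle g_1,\dots,g_r\rangle$.} Here the plan is the standard ``evaluate and row-reduce'' argument. I would sample $r$ points $\veca_1,\dots,\veca_r \in \F^n$ with all coordinates independent and uniform, form the $r\times r$ evaluation matrix $E = (g_i(\veca_j))_{i,j\in[r]}$ using $r^2$ black-box queries, run Gaussian elimination to extract a maximal set $J \subseteq [r]$ of linearly independent rows of $E$, and output the black boxes for $\{g_i : i\in J\}$. One containment is free and deterministic: linearly independent rows of $E$ certify that the corresponding $g_i$ are linearly independent, and conversely any $\F$-linear dependence among the $g_i$ descends to one among the rows of $E$, so $\rank(E) \le p := \dim\langle g_1,\dots,g_r\rangle$ always holds. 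Thus the only thing left to argue is that the random choice forces $\rank(E) = p$, whereupon the $p$ output polynomials lie independently in a $p$-dimensional space and so form a basis.

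\textbf{Why $\rank(E)=p$ with high probability.} Fix, before drawing the random points, a maximal linearly independent subset $\{g_i : i \in I\}$ with $|I| = p$. The key sub-claim I would establish is that the polynomial $P(\vecy_1,\dots,\vecy_p) := \det\big(g_i(\vecy_j)\big)_{i\in I,\, j\in[p]}$ (in $p$ fresh blocks of variables) is not identically zero and has degree at most $pd \le rd$. Nonzeroness: since $|\F| > d$, the evaluation map is injective on degree-$\le d$ polynomials, so $\{g_i\}_{i\in I}$ are linearly independent as $\F$-valued functions on $\F^n$; hence the full matrix of their evaluations has rank $p$ and contains an invertible $p\times p$ block, i.e.\ $P$ is nonzero somewhere. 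Now $\rank(E) < p$ would force the $p\times p$ minor of $E$ on rows $I$ and columns $\{1,\dots,p\}$ to vanish, i.e.\ $P(\veca_1,\dots,\veca_p)=0$; by the Schwartz--Zippel lemma \cite{Schwartz80,Zippel79} this happens with probability at most $\deg(P)/|\F| \le rd/|\F|$. Every computation involved (evaluations, interpolation, Gaussian elimination) runs in $\poly(n,d,r)$ time, which yields the stated bounds.

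\textbf{Expected obstacle.} Honestly, there is no real obstacle here — the statement is routine. The two spots that want a little care are: in Part 1, the small-characteristic observation that only a first-order coefficient is extracted per step (so no factorial is inverted) and the need for $d+1$ interpolation nodes; and in Part 2, recognising that only one direction of the ``$\rank(E)$ equals the dimension of the span'' equivalence needs the random points, together with the sub-claim that a linearly independent family of bounded-degree polynomials has a generically invertible evaluation matrix, which itself rests on the evaluation map being injective on degree-$\le d$ polynomials when $|\F| > d$.
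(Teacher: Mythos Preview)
Your proof is correct. The paper does not actually supply a proof of this fact; it is introduced as ``the following easily verifiable fact'' and left to the reader. Your argument is the standard one: iterated univariate interpolation for Part~1, and random evaluation plus Schwartz--Zippel for Part~2, with the correct degree bound $pd \le rd$ on the relevant minor. Nothing to add.
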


\begin{observation} \label{obs: random L gives U a direct sum structure}
Let $U_i = \app{L}{\vecx}{k}{Q_i^m}$. With probability $1 - o(1)$ over the randomness of $L$,
$$U = U_1 \oplus \ldots \oplus U_s ~~\text{ and }~~ \dim U_i = {n_0 + k(t-1) -1 \choose k(t-1)} ~\text{ for all } i \in [s].$$
\end{observation}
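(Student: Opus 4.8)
\textbf{Proof plan for Observation \ref{obs: random L gives U a direct sum structure}.}

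The plan is to reduce the statement to the non-degeneracy hypothesis (Condition \ref{cond1} in Definition \ref{defn: non-degeneracy}) together with a Schwartz--Zippel argument. Recall that non-degeneracy of $C$ asserts that \emph{there exist} tuples $L$ and $P$ for which Conditions \ref{cond1}--\ref{cond4} hold. In particular, there exists a fixed tuple $L^{\star}$ with $U = U_1 \oplus \ldots \oplus U_s$ and $\dim U_i = {n_0 + k(t-1) -1 \choose k(t-1)}$ for all $i$. The key observation is that both of these are Zariski-open conditions on the coefficients of $L$: the dimension $\dim U_i$, being the rank of a certain matrix whose entries are polynomials in the coefficients of $L$ (the coefficients of $\pi_L(\partial^{\alpha}_{\vecx} Q_i^m)$ expressed in the monomial basis of $\F[\vecz]$), can only drop on a proper subvariety; and the directness of the sum $U_1 + \ldots + U_s$ — i.e. that $\dim(U_1 + \ldots + U_s) = \sum_i \dim U_i$ — is again a maximal-rank (hence open) condition, once the individual $\dim U_i$ are at their generic maximal value. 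So the set of ``bad'' $L$ is contained in the vanishing locus of a nonzero polynomial (nonzero because $L^{\star}$ witnesses a point outside it), and that polynomial has degree bounded by a polynomial in $n, d$ and ${n+k-1 \choose k}$.

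First I would make precise the polynomial whose non-vanishing captures Condition \ref{cond1}. Fix an ordering of the ${n+k-1\choose k}$ derivative operators $\partial^{\alpha}_{\vecx}$ and of the $\binom{n_0 + d - k - 1}{d-k}$ monomials in $\vecz$ of degree $d-k$. For each $i$, the polynomials $\{\pi_L(\partial^{\alpha}_{\vecx} Q_i^m)\}_{\alpha}$ lie in $\left\langle \vecz^{k(t-1)} \cdot \pi_L(Q_i)^{m-k}\right\rangle$ (as already noted in the excerpt), a space of dimension at most $\binom{n_0+k(t-1)-1}{k(t-1)}$; write $M_i(L)$ for the matrix of their coefficients. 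Then $\dim U_i$ equals $\rank M_i(L)$, and $\dim(U_1 + \ldots + U_s) = \rank [M_1(L) \mid \cdots \mid M_s(L)]$ (stacking columns). Let $r := \binom{n_0 + k(t-1)-1}{k(t-1)}$. Consider the polynomial $P(L)$ obtained as a product: for each $i$, a fixed $r\times r$ minor of $M_i(L)$, times a fixed $(sr)\times(sr)$ minor of the stacked matrix. By the witness $L^{\star}$ — for which every $\dim U_i = r$ and the sum is direct, so all these minors are simultaneously nonzero at $L^{\star}$ — the polynomial $P$ is not identically zero. Its degree is at most $sr + \sum_i r$ times the degree in $L$ of a single entry, and each entry of $M_i(L)$ is a coefficient of $\pi_L(\partial^{\alpha}_{\vecx}Q_i^m)$, a polynomial of degree $\le d-k \le d$ in the coefficients of $L$; so $\deg P \le \poly(n,d) \cdot sr$, which is at most $\poly(n, d, s^t)$ by the bounds on $s, t$ and the choice of $n_0, k$ in Section \ref{sec:parameters}.

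Then I would invoke the Schwartz--Zippel lemma: since each coefficient of $L$ is chosen uniformly and independently from $\F$ (or, over a general field, from a sufficiently large subset $S$), the probability that $P(L) = 0$ is at most $\deg P / |S|$, which is $o(1)$ given $|\F| \ge (ns)^{150 t}$ as in Theorem \ref{thm:learning sums of powers}. On the event $P(L) \ne 0$ we get $\dim U_i \ge r$, hence $= r$ since $U_i$ is contained in a space of dimension $r$; and $\dim(U_1 + \ldots + U_s) \ge sr$, hence the sum is direct. Finally, $U = \langle \CAL{L}_1 \circ f\rangle = \langle \CAL{L}_1 \circ (c_1 Q_1^m + \ldots + c_s Q_s^m)\rangle \subseteq U_1 + \ldots + U_s$ by linearity of the derivative and projection operators; to see $U = U_1 \oplus \ldots \oplus U_s$ one uses that the minors certifying directness can be taken to lie within the columns coming from $\CAL{L}_1 \circ f$ rather than all of $\CAL{L}_1 \circ Q_i^m$ — equivalently, one folds $\pi_L(\partial^{\alpha}_{\vecx} f) = \sum_i c_i \pi_L(\partial^{\alpha}_{\vecx} Q_i^m)$ into the rank bookkeeping, replacing the stacked matrix by the matrix of $\CAL{L}_1 \circ f$ and noting that non-degeneracy (Condition \ref{cond1}) already asserts $U = \bigoplus_i U_i$ for the witness $L^{\star}$, so the corresponding minor is a nonzero polynomial in $L$ to include in $P$.

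\textbf{Main obstacle.} The genuinely delicate point is bookkeeping the degree of $P(L)$ and confirming it is small enough relative to $|\F|$ for the Schwartz--Zippel bound to give $o(1)$; this requires plugging in the specific values $n_0 = \lfloor n^{1/(3t)}\rfloor$, $k = \lceil 130 t \log s / \log n \rceil$ and using the hypotheses $n \ge d^2$, $s \le \min(n^{d/(1100 t^2)}, \exp(n^{1/(30 t^2)}))$ to control $r = \binom{n_0 + k(t-1)-1}{k(t-1)}$ and $\binom{n+k-1}{k}$. All of this is routine but must be assembled carefully (and presumably is collected in Section \ref{sec:parameters}, which I would cite rather than redo). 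A secondary subtlety is that over $\Q$ one cannot sample from all of $\F$; there one samples coordinates of $L$ from a large finite $S \subset \Q$ and the same Schwartz--Zippel bound applies verbatim with $|S|$ in place of $|\F|$, as flagged at the start of Section \ref{sec:learning}.
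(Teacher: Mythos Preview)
Your approach is correct and is essentially the paper's: use the non-degeneracy witness $L^{\star}$ to certify that a suitable rank polynomial in the coefficients of $L$ is nonzero, then apply Schwartz--Zippel. The paper streamlines your bookkeeping by tracking only the single quantity $\dim U = \dim\app{L}{\vecx}{k}{f}$: since $U \subseteq U_1 + \cdots + U_s$ with each $\dim U_i \le r := \binom{n_0+k(t-1)-1}{k(t-1)}$, the single equality $\dim U = sr$ (which is exactly $\APP_{k,n_0}(f) = sr$, guaranteed for random $L$ by Observation~\ref{obs:random L}) forces both directness and $\dim U_i = r$ for every $i$, so only one minor is needed and the degree bound collapses to the $(d-k)\binom{n+k-1}{k}$ already handled by Proposition~\ref{prop:setting parameters to satisfy a bunch of conditions}---dissolving the obstacle you flagged.
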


It follows from the above observation that $U_i = \left \langle \vecz^{k(t-1)} \cdot \pi_L(Q_i)^{m-k} \right \rangle$. \\

\textbf{The multi-gcd step.}~ In Step \ref{mainalgo:step3}, the algorithm computes a basis of $V = \left\langle G_1^{e}, \ldots, G_s^{e}\right\rangle$, where $G_i = \pi_L(Q_i)$ and $e = m-k$. This step can be executed in $\poly(\sigma, s^t)$ time as follows:  

\begin{observation} \label{obs: random L gives tilde U a direct sum structure}
Let $\widetilde{U_i} := \left \langle \vecz^{2k(t-1)} \cdot G_i^{e} \right \rangle$. With probability $1 - o(1)$ over the randomness of $L$,
$$\widetilde{U_1} + \ldots + \widetilde{U_s} = \widetilde{U_1} \oplus \ldots \oplus \widetilde{U_s}.$$
\end{observation}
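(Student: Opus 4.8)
The plan is to upgrade Condition \ref{cond3} of Definition \ref{defn: non-degeneracy} — which only guarantees the direct-sum structure for \emph{some} tuple $L^*$ — to the assertion that it holds for a \emph{random} tuple $L$, by observing that ``$\widetilde{U_1} + \cdots + \widetilde{U_s}$ is a direct sum'' is a Zariski-open condition on the coefficients of $L$ and invoking the Schwartz--Zippel lemma. Non-degeneracy of $f$ (which is assumed throughout this section) supplies the witness $L^*$ needed to show the relevant open set is non-empty; note that Condition \ref{cond3} for $L^*$ involves only $L^*$, not $P^*$, so this is all we use.

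First I would pin down the dimensions of the individual spaces. For any $L$ with $G_i := \pi_L(Q_i) \neq 0$, the $N' := \binom{n_0 + 2k(t-1) - 1}{2k(t-1)}$ polynomials $\{\vecz^\alpha \cdot G_i^e : |\alpha| = 2k(t-1)\}$ are $\F$-linearly independent: a nontrivial dependence would give $R \cdot G_i^e = 0$ for a nonzero $R \in \F[\vecz]$, contradicting that $\F[\vecz]$ is an integral domain. Hence $\dim \widetilde{U_i} = N'$ whenever $G_i \neq 0$ — this holds for the witness $L^*$ (where $U_i \neq 0$ by Condition \ref{cond1}, so a fortiori $G_i \neq 0$), and for a random $L$ it holds for all $i \in [s]$ simultaneously with probability $1 - o(1)$ by Schwartz--Zippel (each $Q_i \neq 0$, so $\pi_L(Q_i) \neq 0$ for generic $L$; union bound over $i$). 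Conditioned on this event, ``$\widetilde{U_1} + \cdots + \widetilde{U_s}$ is a direct sum'' is equivalent to ``$\dim(\widetilde{U_1} + \cdots + \widetilde{U_s}) = s N'$''.

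Next I would set up the rank matrix. Writing the coefficient vectors of all $\vecz^\alpha G_i^e$ (in the monomial basis of degree-$(te + 2k(t-1))$ forms in $\vecz$, with $e = m-k$) as the columns of a matrix $M(L)$, every entry of $M(L)$ is a polynomial of degree at most $te$ in the coefficients of $\ell_1, \ldots, \ell_n$, since $G_i = Q_i(\ell_1(\vecz), \ldots, \ell_n(\vecz))$ depends on those coefficients with degree $t$. Then $\dim(\widetilde{U_1} + \cdots + \widetilde{U_s}) = \rank M(L)$, and this equals $sN'$ iff some $sN' \times sN'$ minor of $M(L)$ is nonzero. Condition \ref{cond3} applied to $L^*$ says exactly that $\rank M(L^*) = sN'$, so at least one such minor — call it $\mu$ — is a nonzero polynomial in the coefficients of $L$, of degree at most $sN' \cdot te$.

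Finally, I would apply Schwartz--Zippel to $\mu$: a random tuple $L$ of $n$ linear forms satisfies $\mu(L) \neq 0$, and hence $\widetilde{U_1} + \cdots + \widetilde{U_s} = \widetilde{U_1} \oplus \cdots \oplus \widetilde{U_s}$, with probability at least $1 - sN' \cdot te / |\F|$, which is $1 - o(1)$ under the hypotheses on $|\F|$ in Theorem \ref{thm:learning sums of powers} and the parameter choices of Proposition \ref{prop:setting parameters to satisfy a bunch of conditions} (in the relevant range $N' \le n_0^{2k(t-1)}$ is only quasi-polynomially large). Intersecting with the $1-o(1)$-probability event that all $G_i \neq 0$ finishes the argument. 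I do not expect a genuine obstacle here; the only points requiring care are that $\dim \widetilde{U_i}$ must not fluctuate with $L$ (handled by the integral-domain argument) and that $\deg \mu$ be small enough for the Schwartz--Zippel estimate to be meaningful — a routine parameter check against Section \ref{sec:parameters}.
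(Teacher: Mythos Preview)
Your proposal is correct and follows essentially the same approach as the paper: use Condition~\ref{cond3} of non-degeneracy (together with Condition~\ref{cond1} to ensure $G_i \neq 0$ at the witness $L^*$) to exhibit a single $L^*$ where the $s\binom{n_0+2k(t-1)-1}{2k(t-1)}$ shifted polynomials are linearly independent, express this as the non-vanishing of a minor whose entries are low-degree polynomials in the coefficients of $L$, and apply Schwartz--Zippel. The paper's own proof is terser---it simply asserts that for random $L$ the full list of polynomials is linearly independent with probability $1-o(1)$ given $|\F| \gg 2ds\binom{n_0+2k(t-1)-1}{2k(t-1)}$ and Relation~\ref{rel2} of Proposition~\ref{prop:setting parameters to satisfy a bunch of conditions}---whereas you spell out the integral-domain argument for $\dim\widetilde{U_i}$ and the precise degree bound $s N' \cdot te$ on the minor; your final intersection with the event $\{G_i\neq 0\ \forall i\}$ is harmless but redundant, since $\mu(L)\neq 0$ already forces every $G_i \neq 0$.
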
  

It follows from the above that $V = V_1 \oplus \ldots \oplus V_s$, where $V_i = \left\langle G_i^e \right\rangle$. The observation also helps prove the next proposition which gives a way to compute a basis of $V$ efficiently. 

\begin{proposition} \label{prop: multi-gcd}
Let $r = {n_0 + k(t-1) - 1 \choose k(t-1)}$ and $f_1, \ldots, f_{sr}$ be a basis of $U$. Let $z_1$ and $z_2$ be two distinct variables in $\vecz$. Then, the following statements hold:
\begin{enumerate}
\item If $g(\vecz) \in V$ then there exist $a_1, \ldots a_{sr}, b_1, \ldots, b_{sr} \in \F$ such that
$$\frac{a_1f_1 + \ldots + a_{sr}f_{sr}}{z_1^{k(t-1)}} = \frac{b_1f_1 + \ldots + b_{sr}f_{sr}}{z_2^{k(t-1)}} = g.$$
\item If there exist $a_1, \ldots a_{sr}, b_1, \ldots, b_{sr} \in \F$ such that
$$\frac{a_1f_1 + \ldots + a_{sr}f_{sr}}{z_1^{k(t-1)}} = \frac{b_1f_1 + \ldots + b_{sr}f_{sr}}{z_2^{k(t-1)}},$$
then $\frac{a_1f_1 + \ldots + a_{sr}f_{sr}}{z_1^{k(t-1)}}$ is a polynomial $g(\vecz)$ in $V$.  
\end{enumerate} 
\end{proposition}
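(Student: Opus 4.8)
The plan is to derive both parts from the structural facts already in place: $U = U_1 \oplus \cdots \oplus U_s$ with $U_i = \langle \vecz^{k(t-1)} \cdot G_i^e \rangle$ (Observation~\ref{obs: random L gives U a direct sum structure}), $V = V_1 \oplus \cdots \oplus V_s$ with $V_i = \langle G_i^e \rangle$ where $G_i = \pi_L(Q_i)$, $e = m-k$, and the direct sum $\widetilde{U_1} \oplus \cdots \oplus \widetilde{U_s}$ with $\widetilde{U_i} = \langle \vecz^{2k(t-1)} \cdot G_i^e \rangle$ (Observation~\ref{obs: random L gives tilde U a direct sum structure}). Note that $sr = \dim U$ since $\dim U_i = r$, so $f_1,\dots,f_{sr}$ is indeed a basis of $U$.

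For the first part, given $g \in V$ I would write $g = \sum_{i=1}^s \lambda_i G_i^e$ and observe that $z_1^{k(t-1)} g = \sum_i \lambda_i \bigl(z_1^{k(t-1)} G_i^e\bigr) \in U_1 + \cdots + U_s = U$, since $z_1^{k(t-1)}$ is a $\vecz$-monomial of degree $k(t-1)$ and hence $z_1^{k(t-1)} G_i^e \in U_i$. Expressing $z_1^{k(t-1)} g$ in the basis $f_1,\dots,f_{sr}$ yields scalars $a_1,\dots,a_{sr}$ with $\sum_j a_j f_j = z_1^{k(t-1)} g$, i.e. $\frac{\sum_j a_j f_j}{z_1^{k(t-1)}} = g$; running the same argument with $z_2$ in place of $z_1$ gives the $b_j$'s with $\frac{\sum_j b_j f_j}{z_2^{k(t-1)}} = g$.

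For the second part, set $p := \sum_j a_j f_j \in U$ and $q := \sum_j b_j f_j \in U$; the hypothesis $\frac{p}{z_1^{k(t-1)}} = \frac{q}{z_2^{k(t-1)}}$ is, after clearing denominators, the polynomial identity $z_2^{k(t-1)} p = z_1^{k(t-1)} q$. Using $U = \bigoplus_i U_i$, write $p = \sum_i R_i G_i^e$ and $q = \sum_i S_i G_i^e$ with each $R_i, S_i$ a homogeneous degree-$k(t-1)$ polynomial in $\vecz$ (homogeneity is forced since $U_i$ is spanned by $\vecz$-monomials of degree $k(t-1)$ times $G_i^e$). Then $\sum_i \bigl(z_2^{k(t-1)} R_i - z_1^{k(t-1)} S_i\bigr) G_i^e = 0$, where each coefficient is homogeneous of degree $2k(t-1)$, so each summand lies in $\widetilde{U_i}$; the direct-sum property of Observation~\ref{obs: random L gives tilde U a direct sum structure} forces $\bigl(z_2^{k(t-1)} R_i - z_1^{k(t-1)} S_i\bigr) G_i^e = 0$ for all $i$, and since $G_i^e \neq 0$ this gives $z_2^{k(t-1)} R_i = z_1^{k(t-1)} S_i$. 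As $z_1,z_2$ are distinct variables, $z_1^{k(t-1)}$ and $z_2^{k(t-1)}$ are coprime, hence $z_1^{k(t-1)} \mid R_i$; since $\deg R_i = k(t-1)$, we get $R_i = \mu_i\, z_1^{k(t-1)}$ for scalars $\mu_i \in \F$. Therefore $p = z_1^{k(t-1)} \sum_i \mu_i G_i^e$, so $\frac{p}{z_1^{k(t-1)}} = \sum_i \mu_i G_i^e =: g(\vecz) \in V$, a polynomial, as required.

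I expect the only delicate point — and the one I would write out most carefully — to be the invocation of the $\widetilde{U_i}$ direct-sum property: one must check that $z_2^{k(t-1)} R_i - z_1^{k(t-1)} S_i$ is genuinely a combination of $\vecz$-monomials of degree exactly $2k(t-1)$, so that its product with $G_i^e$ lands in $\widetilde{U_i}$ and Observation~\ref{obs: random L gives tilde U a direct sum structure} applies. This is where the homogeneity of $R_i, S_i$ (itself a consequence of the precise shape of $U_i$) is used. Everything else is routine manipulation of polynomial identities, unique decomposition in a direct sum, and coprimality of powers of distinct variables.
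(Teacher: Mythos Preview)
Your proof is correct and follows essentially the same approach as the paper: both parts use the description $U_i = \langle \vecz^{k(t-1)}\cdot G_i^e\rangle$ for the first implication, and for the second you clear denominators, decompose in $U=\bigoplus_i U_i$, and invoke the direct sum $\widetilde{U_1}\oplus\cdots\oplus\widetilde{U_s}$ (Observation~\ref{obs: random L gives tilde U a direct sum structure}) together with coprimality of $z_1^{k(t-1)}$ and $z_2^{k(t-1)}$ to force each $R_i$ to be a scalar multiple of $z_1^{k(t-1)}$. Your explicit check that $z_2^{k(t-1)}R_i - z_1^{k(t-1)}S_i$ has degree exactly $2k(t-1)$ so that the $\widetilde{U_i}$ direct-sum applies is precisely the point the paper also uses.
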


Let $f_1, \ldots, f_{sr}$ be the basis of $U$ obtained in Step \ref{mainalgo:step2}. It follows from the above proposition that a basis of the space $\CAL{S}$ defined by all $(a_1, \ldots a_{sr}, b_1, \ldots, b_{sr}) \in \F^{2sr}$ satisfying
\begin{equation} \label{eqn:linear system for multi-gcd}
a_1 \cdot z_2^{k(t-1)}f_1 ~+~ \ldots ~+~ a_{sr} \cdot z_2^{k(t-1)}f_{sr} ~-~ b_1 \cdot z_1^{k(t-1)}f_1 ~-~ \ldots ~-~ b_{sr} \cdot z_1^{k(t-1)}f_{sr} ~=~ 0
\end{equation} 
gives a basis of $z_1^{k(t-1)} \cdot V$ (and a basis of $z_2^{k(t-1)} \cdot V$). As we have black-box access to $f_1, \ldots, f_{sr}$, we can plug in $2sr$ random values to the $\vecz$-variables in Equation \eqref{eqn:linear system for multi-gcd} and derive a linear system in the ``variables'' $a_1, \ldots a_{sr}, b_1, \ldots, b_{sr}$. A solution to this system gives a basis of $\CAL{S}$ with probability $1 - o(1)$, thereby giving black-box access to a basis of $z_1^{k(t-1)} \cdot V$. Now, using black-box polynomial factorization\footnote{In fact, a simpler argument works here as the factorization is special.} \cite{KaltofenT90}, we get black-box access to a basis of $V$. This completes the multi-gcd step. \\

Let $g_1, \ldots, g_s$ be the basis of $V$ obtained in Step \ref{mainalgo:step3}. This step also chooses black-box access to a random $g \in V$, i.e., $g = a_1g_1 + \ldots + a_sg_s$, where $a_1, \ldots, a_s$ are picked independently and uniformly at random from $\F$. In Step \ref{mainalgo:step5}, the algorithm computes black-box access to a basis of $W = \app{P}{\vecz}{k}{g}$, where $P$ is a tuple of $n_0$ random linear forms in $m_0$ many $\vecw$-variables. By Fact \ref{fact: computing a basis of the space of partials is easy}, this can be done in $\poly(\sigma)$ time with success probability $1-o(1)$. 

\begin{proposition} \label{prop: decomposition of W}
Let $W_i = \app{P}{\vecz}{k}{G_i^e}$. With probability $1 - o(1)$ over the randomness of $L$ and $P$,
$$W = W_1 \oplus \ldots \oplus W_s ~~\text{ and }~~ \dim W_i = {m_0 + k(t-1) -1 \choose k(t-1)} ~\text{ for all } i \in [s].$$
\end{proposition}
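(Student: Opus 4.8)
The plan is to show that the claimed decomposition of $W$ follows by composing two applications of the already-established $\APP$ non-degeneracy facts, one for the projection $L$ (from $\vecx$ to $\vecz$) and one for the projection $P$ (from $\vecz$ to $\vecw$). Recall that $g = a_1 g_1 + \dots + a_s g_s$ is a random element of $V = V_1 \oplus \dots \oplus V_s$, where each $V_i = \langle G_i^e\rangle$ with $G_i = \pi_L(Q_i)$ and $e = m-k$; since the $a_i$ are chosen uniformly at random from $\F$, we may assume all $a_i \neq 0$ (this fails with probability $\le s/|\F| = o(1)$). First I would note that $W = \app{P}{\vecz}{k}{g}$, and because $g = \sum_i a_i G_i^e$ with $a_i \neq 0$ and partial derivatives and $\pi_P$ are linear, $W \subseteq W_1 + \dots + W_s$ where $W_i = \app{P}{\vecz}{k}{G_i^e}$. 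The substance of the proposition is the reverse direction — that $W \supseteq W_1 + \dots + W_s$ — together with the directness of the sum and the dimension count $\dim W_i = \binom{m_0 + k(t-1)-1}{k(t-1)}$.

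The key steps, in order. (1) Establish that for a random $L$, the circuit $C$ satisfies Condition~\ref{cond1} of Definition~\ref{defn: non-degeneracy} and Condition~\ref{cond3}; for a random $P$, the projected circuit $g_0 = G_1^e + \dots + G_s^e$ satisfies Condition~\ref{cond2}. This is exactly Lemma~\ref{lem:non-degenerate_random} (or rather the part of its proof that handles $L$ and $P$ separately), so I would invoke it. In particular Condition~\ref{cond2} gives, for the specific polynomial $g_0$, that $\app{P}{\vecz}{k}{g_0} = W_1 \oplus \dots \oplus W_s$ with $\dim W_i = \binom{m_0+k(t-1)-1}{k(t-1)}$. (2) Upgrade this from the fixed coefficient vector $(1,\dots,1)$ (i.e., $g_0$) to a generic coefficient vector $(a_1,\dots,a_s)$ (i.e., $g$). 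Here I would argue that the property ``$\dim \app{P}{\vecz}{k}{\sum_i a_i G_i^e} = \sum_i \dim W_i$ and the spaces $W_i$ sit inside it'' is, for fixed $L$ and $P$, a Zariski-open condition on $(a_1,\dots,a_s) \in \F^s$: it is the non-vanishing of a suitable minor of the coefficient matrix whose rows are $\pi_P$ of all order-$k$ derivatives of the various $G_i^e$. Since this minor is not identically zero — it is nonzero at $(1,\dots,1)$ by step (1) — the Schwartz–Zippel lemma guarantees it is nonzero with probability $1-o(1)$ over the random $a_i$'s (which are chosen from $\F$, whose size is controlled by the hypotheses of Theorem~\ref{thm:learning sums of powers}). (3) Conclude: once the sum $W_1 + \dots + W_s$ is direct and has the right total dimension, and $W \subseteq W_1 + \dots + W_s$ contains each $W_i$ (the latter because $\partial_\vecz^k (a_i G_i^e) = a_i \partial_\vecz^k G_i^e$ and $a_i \neq 0$ lets us isolate each $W_i$ by taking linear combinations across the derivative operators — or more directly because $W$ already has full dimension $\sum_i \dim W_i$ and is sandwiched), we get $W = W_1 \oplus \dots \oplus W_s$ with the claimed dimensions. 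The inner inclusion $W_i \subseteq \langle \vecw^{k(t-1)} \cdot \pi_P(G_i)^{e-k}\rangle$ giving the upper bound $\dim W_i \le \binom{m_0+k(t-1)-1}{k(t-1)}$ is exactly the structural observation recorded just before Definition~\ref{defn: non-degeneracy} (any $\partial_\vecz^\alpha G_i^e$ with $|\alpha|=k$ is $G_i^{e-k}$ times a degree-$k(t-1)$ form), so equality in step (2) forces $\dim W_i = \binom{m_0+k(t-1)-1}{k(t-1)}$.

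The main obstacle I anticipate is step (2): passing from the non-degeneracy of $g_0$ (the ``all-ones'' combination, which is what Lemma~\ref{lem:non-degenerate_random} / Condition~\ref{cond2} directly gives) to the non-degeneracy of the random combination $g$. One has to be careful that the ``bad'' event for $g$ is cut out by a single nonzero polynomial in the $a_i$'s, and that this polynomial is not identically zero — which is precisely where one uses that it is nonzero at the all-ones point. A subtlety is that the randomness of $L$ (used in step (1) to guarantee $g_0$ is non-degenerate) and the randomness of $a_i$ (used in step (2)) are independent, so one conditions on a good $L$ first, then argues over $a_i$, then over $P$; a union bound over the three $o(1)$ failure probabilities finishes it. Alternatively — and perhaps more cleanly — one can absorb the scalars $a_i$ into a redefinition $G_i' := a_i^{1/e} G_i$ (working in an extension field if needed, or just noting the $\APP$ measure is insensitive to scaling each $G_i$ by a nonzero constant since $\app{P}{\vecz}{k}{(\lambda G)^e} = \app{P}{\vecz}{k}{\lambda^e G^e} = \app{P}{\vecz}{k}{G^e}$), which reduces $g$ to exactly the form $g_0$ with the $G_i$ replaced by scalar multiples, and then Condition~\ref{cond2} applied verbatim (it is stated for the $G_i$ arising from a random $L$, and scaling does not affect whether a random $L$ works) gives the result directly. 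I would present this second route as the clean argument and relegate the Schwartz–Zippel version to a remark.
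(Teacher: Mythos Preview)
Your main route (steps (1)--(3) via Schwartz--Zippel on a minor of the coefficient matrix) is correct and is exactly what the paper does: the paper writes down the coefficient matrix $M(\hat g)$ of $\pi_P(\partial_{\vecz}^k\hat g)$, picks row/column selectors $R,C$ so that $N(\hat g):=R\,M(\hat g)\,C$ is $sq\times sq$ with $\det N(g_0)\neq 0$ (by Condition~\ref{cond2}), observes $\det N(\hat g)$ is a nonzero polynomial of degree $\le sq$ in the coefficients $(b_1,\dots,b_s)$ of $\hat g=\sum b_iG_i^e$, and applies Schwartz--Zippel. Your sandwich/dimension argument in step (3) is the right way to conclude; the parenthetical claim that ``$a_i\neq 0$ lets us isolate each $W_i$ by taking linear combinations across the derivative operators'' is not correct as stated (the operators act on the sum $g$, not termwise), but you already supply the correct fallback.

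The alternative scaling route at the end has a genuine gap and should be dropped, not promoted to the ``clean argument''. You claim that replacing $G_i$ by $a_i^{1/e}G_i$ reduces $g$ to the form of $g_0$ and that Condition~\ref{cond2} then applies verbatim because ``scaling does not affect whether a random $L$ works''. But Condition~\ref{cond2} is a statement about the \emph{specific} polynomial $g_0=\sum_i G_i^e$: it asserts $\langle\pi_P(\partial_\vecz^k g_0)\rangle=W_1\oplus\cdots\oplus W_s$. While each $W_i$ is indeed scaling-invariant, the left-hand side for the scaled circuit becomes $\langle\pi_P(\partial_\vecz^k(\sum_i a_iG_i^e))\rangle$, which is precisely the space $W$ you are trying to analyze --- so the reduction is circular. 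The hypothesis of the proposition is that the \emph{fixed} circuit $C$ is non-degenerate (not that the $Q_i$ are random), so you cannot appeal to Lemma~\ref{lem:non-degenerate_random} for the rescaled $Q_i$'s either. The passage from $g_0$ to $g$ genuinely requires the Schwartz--Zippel step over the coefficients $(a_1,\dots,a_s)$; there is no shortcut via scaling.
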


\subsubsection*{Step \ref{mainalgo:step6}: Decomposing the vector spaces} 
In Step \ref{mainalgo:step6}, the algorithm computes black-box access to bases of $V_1 = \left\langle G_1^e \right\rangle, \ldots, V_s = \left\langle G_s^e \right\rangle$ by decomposing the spaces $V$ and $W$ under the action of the set of operators $\CAL{L}_2 = \pi_{P}(\partial_{\vecz}^k ~)$. We now explain how this is carried out efficiently. 

\begin{definition} [Indecomposable decomposition] \label{defn: invariant and indecomposable subspaces}
Let $V$ and $W$ be two vector spaces and $\CAL{L}$ a set of linear operators from $V$ to $W$. A decomposition of the spaces $V$ and $W$ as:
\begin{eqnarray*}
V &=& V_1 \oplus \ldots \oplus V_s \\
W &=& W_1 \oplus \ldots \oplus W_s
\end{eqnarray*}
is \emph{indecomposable} under the action of $\CAL{L}$ if the following hold for every $i \in [s]$,
\begin{enumerate} [(a)]
\item $\left \langle \CAL{L} \circ V_i \right \rangle \subseteq W_i$
\item There do \emph{not} exist spaces $V_{i1}, V_{i2}, W_{i1}, W_{i2}$ such that 
\begin{eqnarray*}
V_i = V_{i1} \oplus V_{i2}, &\quad \quad& W = W_{i1} \oplus W_{i2} ~~\text{and} \\
\left \langle \CAL{L} \circ V_{i1} \right \rangle \subseteq W_{i1}, &\quad \quad&  \left \langle \CAL{L} \circ V_{i2} \right \rangle \subseteq W_{i2}.
\end{eqnarray*}
\end{enumerate}  
\end{definition}
In our case $W = \left \langle \CAL{L}_2 \circ V \right \rangle$ and $W_i = \left \langle \CAL{L}_2 \circ V_{i} \right \rangle$ (by Proposition \ref{prop: decomposition of W}). Also, the decomposition $V = V_1 \oplus \ldots \oplus V_s$ and $W = W_1 \oplus \ldots \oplus W_s$ is indecomposable under the action of $\CAL{L}_2$ as $\dim V_i = 1$ for all $i \in [s]$. It remains to show that this indecomposable decomposition is unique and it can be computed efficiently. Towards this, we take inspiration from Section \ref{sec:appendix adjoint} (particularly, Corollary \ref{cor:diagonalizable adjoint}) and analyze a suitable adjoint algebra. \\

\noindent \textbf{The adjoint algebra.~} Recall, $g_1, \ldots, g_s$ is the basis of $V$ computed in Step \ref{mainalgo:step3}. Let $h_1, \ldots, h_{sq}$ be the basis of $W$ computed in Step \ref{mainalgo:step5}, where $q = {m_0 + k(t-1)-1 \choose k(t-1)} = |\vecw^{k(t-1)}|$. With regard to the bases $g_1, \ldots, g_s$ and $h_1, \ldots, h_{sq}$, every element of $\CAL{L}_2$ can be naturally identified with a $sq \times s$ matrix by identifying $V$ with $\F^s$ and $W$ with $\F^{sq}$. We will work with this matrix representation of the elements of $\CAL{L}_2$ which can be computed in $\poly(\sigma)$ time from black-box access to $g_1, \ldots, g_s$ and $h_1, \ldots, h_{sq}$ (using Fact \ref{fact: computing a basis of the space of partials is easy}, Proposition \ref{prop:setting parameters to satisfy a bunch of conditions} and solving linear systems). Let
\begin{equation} \label{eqn: adjoint for rectangular matrices}
\adj(\CAL{L}_2) := \left\{ (D,E) \in M_s(\F) \times M_{sq}(\F)~:~ KD=EK ~\text{ for all }~ K \in \CAL{L}_2\right\}.
\end{equation}
Observe that $\adj(\CAL{L}_2)$ is an $\F$-subalgebra of $M_s(\F) \times M_{sq}(\F)$ and a basis of $\adj(\CAL{L}_2)$ can be computed in $\poly(\sigma)$ time by solving a system of linear equations arising from the equation $KD = EK$ for all $K \in \CAL{L}_2$. Define
\begin{equation} \label{eqn: adjoint for rectangular matrices component 1}
\adj(\CAL{L}_2)_1 := \left\{ D \in M_s(\F) ~:~ \text{there exists an }~ E \in M_{sq}(\F) ~\text{ such that }~ (D,E) \in \adj(\CAL{L}_2) \right\}.
\end{equation}
Clearly, $\adj(\CAL{L}_2)_1$ is an $\F$-subalgebra of $M_s(\F)$ and computing a basis of $\adj(\CAL{L}_2)_1$ from a basis of $\adj(\CAL{L}_2)$ is a simple task. The following proposition shows that $\adj(\CAL{L}_2)_1$ is diagonalizable. \\

Let $A \in \GL_s(\F)$ be the basis change matrix from $(g_1, \ldots, g_s)$ to $(G_1^e, \ldots, G_s^e)$ and
$$\CAL{D} := \left\{ \diag(a_1, \ldots, a_s) ~:~ a_i \in \F ~\text{ for all }~ i \in [s]\right\} ~\subset~ M_s(\F).$$

\begin{proposition} \label{prop: adjoint is diagonalizable}
$A \cdot \adj(\CAL{L}_2)_1 \cdot A^{-1} = \CAL{D}$.
\end{proposition}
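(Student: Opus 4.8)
The plan is to understand the adjoint algebra $\adj(\CAL{L}_2)_1$ by transporting everything to the ``nice'' basis $(G_1^e,\ldots,G_s^e)$ of $V$ and the correspondingly structured basis of $W$. Concretely, in the basis $(G_1^e,\ldots,G_s^e)$ the space $V$ is $V_1\oplus\cdots\oplus V_s$ with $V_i=\langle G_i^e\rangle$, and by Proposition \ref{prop: decomposition of W} the space $W$ is $W_1\oplus\cdots\oplus W_s$ with $W_i=\langle\CAL{L}_2\circ V_i\rangle$; in this adapted basis each operator $K\in\CAL{L}_2=\pi_P(\partial_\vecz^k\,)$ becomes block-diagonal, sending the $i$-th one-dimensional block of $V$ into the $i$-th block of $W$. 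So after conjugating by $A$ (on the $V$-side) and by the matching block-permutation $B$ (on the $W$-side), the defining equation $KD=EK$ for all $K$ decouples across the $s$ blocks. The first thing I would do is set up this change of basis carefully and observe that the statement $A\cdot\adj(\CAL{L}_2)_1\cdot A^{-1}=\CAL{D}$ is equivalent to: a matrix $D'$ (acting on $V$ in the $G_i^e$ basis) extends to some $E'$ with $K'D'=E'K'$ for all $K'$ if and only if $D'$ is diagonal.

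The ``$\supseteq$'' direction (every diagonal $D'$ lies in $\adj(\CAL{L}_2)_1$) is the easy half: if $D'=\diag(\alpha_1,\ldots,\alpha_s)$, take $E'$ to be the block-scalar map that acts as multiplication by $\alpha_i$ on $W_i$; since each $K'$ respects the block structure (maps $V_i$ into $W_i$), we get $K'D'=E'K'$ blockwise, hence $(D',E')\in\adj$. The ``$\subseteq$'' direction is the substance: I must show any $D'$ admitting such an $E'$ must be diagonal. Write $D'=(d_{ij})$ and $E'$ in block form $E'=(E_{ij})$ where $E_{ij}:W_j\to W_i$. For a single operator $K'$ with blocks $K_i:V_i\to W_i$, the relation $K'D'=E'K'$ reads, on the $(i,j)$ block, $d_{ij}K_i = E_{ij}K_j$ (using that $K'$ sends the $j$-th column block of $V$ only to the $j$-th block of $W$ — wait, I need to be careful about which block of $W$ the composite lands in). The correct bookkeeping: $(K'D')$ maps $V_j$ (via $D'$, getting a combination $\sum_i d_{ij}$ of the $V_i$, then via $K'$) into $\bigoplus_i W_i$ with $i$-th component $d_{ij}K_i|_{V_i}$... actually since $D'$ maps the basis vector $G_j^e$ to $\sum_i d_{ij}G_i^e$ and then $K'$ maps $G_i^e$ into $W_i$, the $V_j\to W_i$ component of $K'D'$ is $d_{ij}(K|_{V_i})$. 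And $(E'K')$ maps $V_j\xrightarrow{K} W_j \xrightarrow{E'} \bigoplus_i W_i$ with $V_j\to W_i$ component $E_{ij}(K|_{V_j})$. So for all $K\in\CAL{L}_2$ and all $i\ne j$: $d_{ij}\cdot(K|_{V_i}) = E_{ij}\cdot(K|_{V_j})$ as maps $V_j\to W_i$ — but $V_i$ and $V_j$ are different one-dimensional spaces, so I should phrase this with honest matrices: fixing the basis, $K|_{V_i}$ is the $i$-th column of the $sq\times s$ matrix $K$ restricted to $W_i$-coordinates, i.e. a vector $\kappa_i^{(K)}\in W_i$, and the equation is $d_{ij}\,\kappa_i^{(K)} = E_{ij}\,\kappa_j^{(K)}$ for every $K$.

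Now the key point is a \emph{genericity / indecomposability} argument showing $d_{ij}=0$ for $i\ne j$. Suppose $d_{ij}\ne 0$ for some $i\ne j$. Then $\kappa_i^{(K)} = d_{ij}^{-1}E_{ij}\,\kappa_j^{(K)}$ for all $K\in\CAL{L}_2$, i.e. the fixed linear map $\Phi := d_{ij}^{-1}E_{ij}:W_j\to W_i$ carries $\kappa_j^{(K)}$ to $\kappa_i^{(K)}$ simultaneously for every operator $K$. But $W_i=\langle\CAL{L}_2\circ V_i\rangle$ is spanned by $\{\kappa_i^{(K)}: K\in\CAL{L}_2\}$ and likewise $W_j$; and I claim the ``pattern'' of these spanning vectors differs between the $i$-th and $j$-th component in a way that rules out such a $\Phi$ — because $W_i=\langle\pi_P(\partial_\vecz^k G_i^e)\rangle$ is (via the structure $U_i=\langle\vecz^{k(t-1)}\cdot G_i^e\rangle$, cf. Observation \ref{obs: random L gives U a direct sum structure}) a module-like object tied to $G_i$, and the non-degeneracy conditions (Definition \ref{defn: non-degeneracy}, Conditions \ref{cond1},\ref{cond2}) force these to be "independent enough". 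The cleanest route is: the existence of $\Phi$ would produce a nontrivial $\CAL{L}_2$-compatible isomorphism (or at least a nonzero morphism) between the pair $(V_j,W_j)$ and $(V_i,W_i)$, and then invoke the Krull–Schmidt framework from Section \ref{sec:appendix adjoint} — specifically, I expect to cite Corollary \ref{cor:diagonalizable adjoint}, whose hypothesis is exactly that the adjoint algebra has no such off-diagonal elements, so I need the \emph{independent} fact that the components $(V_i,W_i)$ are pairwise non-isomorphic as $\CAL{L}_2$-modules, which should follow because $G_1^e,\ldots,G_s^e$ are "in general position" after the random projections (using the $\dim W_i$ equalities in Condition \ref{cond2} and that a generic $g\in V$ was chosen). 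I anticipate that \textbf{the main obstacle} is precisely this last step — rigorously ruling out off-diagonal entries — and that it is handled not by a bare-hands computation but by reducing to the Krull–Schmidt / adjoint-algebra machinery of Section \ref{sec:appendix adjoint} together with the non-degeneracy hypotheses: one shows each pair $(V_i,W_i)$ is indecomposable (immediate, $\dim V_i=1$) and that $\mathrm{Hom}_{\CAL{L}_2}\big((V_j,W_j),(V_i,W_i)\big)=0$ for $i\ne j$, which is where the randomness of $L$ and $P$ and Conditions \ref{cond1}–\ref{cond2} are really used; granting this, $\adj(\CAL{L}_2)_1$ is diagonal in the $(G_i^e)$-basis, and since scalars on each block are unconstrained (the easy direction), it is exactly $\CAL{D}$ after conjugation by $A$.
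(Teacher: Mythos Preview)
Your block-diagonal setup, the easy inclusion $\CAL{D}\subseteq A\cdot\adj(\CAL{L}_2)_1\cdot A^{-1}$, and the block-component identity $d_{ij}\,\kappa_i^{(K)}=E_{ij}\,\kappa_j^{(K)}$ are all correct and match the paper's framework. The gap is in how you close the hard direction. You yourself note that Corollary~\ref{cor:diagonalizable adjoint} goes the wrong way, and then reduce to showing that the summands $(V_i,W_i)$ are pairwise non-isomorphic as $\CAL{L}_2$-objects --- but this is just a restatement of the off-diagonal vanishing you are trying to establish, and you offer no argument beyond ``in general position''. Neither Krull--Schmidt nor the stated non-degeneracy conditions directly rule out an intertwiner $\Phi:W_j\to W_i$; you have rephrased the obstacle, not removed it.

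Contrary to your expectation, the paper's argument \emph{is} the bare-hands computation. The key input (extracted from the proof of Proposition~\ref{prop: decomposition of W}) is that the evaluation map $K\mapsto K(g_0)$ from $\langle\CAL{L}_2\rangle$ onto $W=W_1\oplus\cdots\oplus W_s$ is surjective. Hence for every $j\in[s]$ and every basis vector $\beta\cdot\pi_P(G_j^{e-k})$ of $W_j$ there is a $K\in\langle\CAL{L}_2\rangle$ with $BKA^{-1}\cdot\mathbf 1$ equal to that unit vector; combined with the block structure this forces $BKA^{-1}$ to have a \emph{single} nonzero entry, in column $j$ and the corresponding row. Plugging such rank-one $K$'s into $(BKA^{-1})(ADA^{-1})=(BEB^{-1})(BKA^{-1})$ reads off immediately that every off-diagonal entry of $ADA^{-1}$ (and of $BEB^{-1}$) vanishes. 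In your language: these $K$'s have $\kappa_j^{(K)}=e_\beta$ and $\kappa_i^{(K)}=0$ for all $i\ne j$, so any $\Phi$ with $\Phi(\kappa_j^{(K)})=\kappa_i^{(K)}$ for all $K$ must kill a basis of $W_j$, whence $\Phi=0$. The missing idea is this \emph{richness of the operator span} $\langle\CAL{L}_2\rangle$, not abstract uniqueness-of-decomposition machinery.
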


\underline{\emph{Diagonalizing $\adj(\CAL{L}_2)_1$}}.~ Use the basis of $\adj(\CAL{L}_2)_1$ to pick a random matrix $D \in_r \adj(\CAL{L}_2)_1$. By the above proposition, and as $|\F| \gg s^2$, the eigenvalues of $D$ are distinct with probability $1-o(1)$. Compute the eigenvalues $a_1, \ldots, a_s$ by factorizing\footnote{This is where we need the assumption that univariate polynomial factorization over $\F$ can be done in randomized polynomial time.} the characteristic polynomial of $D$. Now, compute an $\tilde{A} \in \GL_s(\F)$ such that $\tilde{A}D\tilde{A}^{-1} = \diag(a_1, \ldots, a_s)$ -- this can be done by solving a linear system. As $D$ has distinct eigenvalues and $ADA^{-1}$ is also diagonal, there exist a permutation matrix $P \in \GL_s(\F)$ and a diagonal matrix $S \in \GL_s(\F)$ such that
$$\tilde{A} = PS \cdot A.$$ 

\underline{\emph{Computing bases of $V_1, \ldots, V_s$}}.~ Observe that
\begin{eqnarray*}
(G_1^e ~ G_2^e ~ \ldots ~ G_s^e) \cdot A &=& (g_1 ~ g_2 ~ \ldots ~ g_s) \quad \quad \text{(by definition of $A$)}\\
\Rightarrow (G_1^e ~ G_2^e ~ \ldots ~ G_s^e) \cdot S^{-1}P^{-1} &=& (g_1 ~ g_2 ~ \ldots ~ g_s) \cdot \tilde{A}^{-1}.
\end{eqnarray*}
Thus, by computing black-box access to the entries of the vector $(g_1 ~ g_2 ~ \ldots ~ g_s) \cdot \tilde{A}^{-1}$, we get black-box access to $G_1^e, \ldots, G_s^e$ (up to permutation and scaling). By relabeling, we can assume that Step \ref{mainalgo:step6} computes black-box access to bases of $V_1 = \left \langle \pi_L(Q_1)^{e} \right \rangle, \ldots, V_s = \left \langle \pi_L(Q_s)^{e} \right \rangle$ in order.

\subsubsection*{Steps \ref{mainalgo:step7}-\ref{mainalgo:step8}: Recovering the terms of the formula}
At the end of Step \ref{mainalgo:step6}, we have black-box access to $c_1' \pi_L(Q_1)^{e}, \ldots, c_s' \pi_L(Q_s)^{e}$, where $c_1', \ldots, c_s' \in \F$ and $Q_1, \ldots, Q_s \in \F[\vecx]$ are unknown, but $L = (\ell_1(\vecz), \ldots, \ell_n(\vecz))$ is known from Step \ref{mainalgo:step1}. The idea now is to get black-box access to $c_1' Q_1(\vecx)^{e}, \ldots, c_s' Q_s(\vecx)^{e}$ by executing Steps \ref{mainalgo:step1}-\ref{mainalgo:step6} several times (each time by altering $L$ slightly). Hereafter, the algorithm uses black-box polynomial factorization \cite{KaltofenT90} to get black-box access to $\hat{c}_1 Q_1(\vecx), \ldots, \hat{c}_s Q_s(\vecx)$ for some constants $\hat{c}_1, \ldots, \hat{c}_s \in \F^{\times}$. Then, the sparse polynomial interpolation algorithm of \cite{KlivansS01} gives dense representations of the $\sigma$-sparse polynomials $\hat{c}_1 Q_1(\vecx), \ldots, \hat{c}_s Q_s(\vecx)$. Finally, we obtain a homogeneous $\Sigma \wedge \Sigma \Pi^{[t]}(s)$ formula computing $f$ by solving a linear system. Let us see how this idea is made to work. \\

\underline{\emph{Fixing the query points}}. The following remarks imply that the points at which the algorithm needs to query $c_1' Q_1(\vecx)^{e}, \ldots, c_s' Q_s(\vecx)^{e}$ in order to employ the black-box polynomial factorization algorithm and the sparse polynomial interpolation algorithm can be fixed \emph{a priori} right after Step \ref{mainalgo:step6}. \\  

\textbf{Remarks.}
\begin{enumerate}
\item The sparse polynomial interpolation algorithm of \cite{KlivansS01} works with non-adaptive queries, i.e., each subsequent query point does not depend on answers to the previous queries.
\item The black-box polynomial factorization algorithm of \cite{KaltofenT90} also works with non-adaptive queries. In other words, once the set of points at which we need to evaluate the irreducible factors of an input polynomial $f$ (given as a black-box) is fixed, the algorithm uses only non-adaptive queries to $f$ in order to compute these evaluations.
\end{enumerate}

\underline{\emph{Evaluating $c_1' Q_1(\vecx)^{e}, \ldots, c_s' Q_s(\vecx)^{e}$ at a query point}}. Let $\veca = (a_1, \ldots, a_n) \in \F^n$ be a query point. We wish to compute $c_1' Q_1(\veca)^{e}, \ldots, c_s' Q_s(\veca)^{e}$ from black-box access to $c_1' \pi_L(Q_1)^{e}, \ldots, c_s' \pi_L(Q_s)^{e}$, where $L = (\ell_1(\vecz), \ldots, \ell_n(\vecz))$ and $\vecz = (z_1, \ldots, z_{n_0})$. Let
$$\ell_l(\vecz) = r_{l1} z_1 + \ldots + r_{ln_0} z_{n_0},$$
where $r_{l1}, \ldots, r_{ln_0} \in \F$ are chosen uniformly and independently at random from $\F$ (in Step \ref{mainalgo:step1}) for all $l \in [n]$. Now, pick $(r_1, \ldots, r_n) \in_r \F^n$. For each $y \in \{1, \ldots, d\}$, define
$$\tilde{\ell_l} (y,\vecz) := (yr_l + (1-y)a_l)z_1 + r_{l2} z_2 + \ldots + r_{ln_0} z_{n_0},$$
for every $l \in [n]$. Observe that $r'_{l1} := yr_l + (1-y)a_l$ is uniformly distributed over $\F$ as $r_l$ is chosen randomly from $\F$. Moreover, $r_{11}', r_{12}, \ldots, r_{1n_0}, ~\ldots~, r_{n1}', r_{n2}, \ldots, r_{nn_0}$ are independent of each other as $r_1, \ldots, r_n$ are independently chosen. Hence,
$$\tilde{L}(y) := (\tilde{\ell_1} (y,\vecz), \ldots, \tilde{\ell_n} (y,\vecz))$$
is a tuple of random linear forms in the $\vecz$-variables for every $y \in \{1, \ldots, d\}$. If we execute Steps \ref{mainalgo:step1}-\ref{mainalgo:step6} by replacing $L$ by $\tilde{L}(y)$ in Step \ref{mainalgo:step1} then we will get black-box access to 
\begin{equation} \label{eqn: tilde L}
\tilde{c}_{\rho(1)} \cdot \pi_{\tilde{L}(y)} (Q_{\rho(1)})^e, ~\ldots~, \tilde{c}_{\rho(s)} \cdot \pi_{\tilde{L}(y)} (Q_{\rho(s)})^e, \quad \quad \text{with probability $1-o(1)$},
\end{equation}
where $\rho$ is an unknown permutation of $[s]$ and $\tilde{c}_1, \ldots, \tilde{c}_s \in \F^{\times}$ are also unknown. We can find $\rho$ efficiently as follows: Observe that $\tilde{L}(y)_{z_1=0} = L_{z_1=0}$. Hence, the ratio
\begin{eqnarray*}
\frac{c_i' \cdot \pi_{L_{z_1=0}}(Q_i)^{e}}{\tilde{c}_{j} \cdot \pi_{\tilde{L}(y)_{z_1=0}} (Q_{j})^e} = \frac{c_i' \cdot [G_i^e]_{z_1=0}}{\tilde{c}_{j} \cdot [G_j^e]_{z_1=0}} &=& \frac{c_i'}{\tilde{c}_{i}} \quad \quad \text{if $i = j$}, \\
		&=& \text{ a non-constant rational function in $\vecz$,} \quad \text{if $i \neq j$}.
\end{eqnarray*}
The second equality is because of Condition \ref{cond4} of the non-degeneracy condition (Definition \ref{defn: non-degeneracy}), which states that there is an $L$ such that $[G_1^e]_{z_1=0}, \ldots, [G_s^e]_{z_1=0}$ are $\F$-linearly independent. Thus, for a random $L$, $[G_1^e]_{z_1=0}, \ldots, [G_s^e]_{z_1=0}$ are $\F$-linearly independent with probability $1-o(1)$. Now, we can discover the permutation $\rho$ by evaluating the ratio
$$\frac{c_i' \cdot \pi_{L_{z_1=0}}(Q_i)^{e}}{\tilde{c}_{j} \cdot \pi_{\tilde{L}(y)_{z_1=0}} (Q_{j})^e}$$
at $\poly(d)$-many random points in $\F^{n_0-1}$ and checking if all the evaluations are the same. This process succeeds with probability $1 - o(1)$ (as $|\F|$ is sufficiently large) and also gives us $\frac{c_i'}{\tilde{c}_{i}}$ for all $i \in [s]$. From Equation \eqref{eqn: tilde L} and the knowledge of $\rho$ and $\frac{c_i'}{\tilde{c}_{i}}$ we obtain black-box access to
$$c_1' \cdot \pi_{\tilde{L}(y)}(Q_1)^{e}, ~\ldots~, c_s' \cdot \pi_{\tilde{L}(y)}(Q_s)^{e}.$$
By setting $z_1=1$ and $z_2= \ldots = z_s = 0$ we get 
$$p_i(y) := c_i' \cdot Q_i(yr_1 + (1-y)a_1, ~\ldots~, yr_n + (1-y)a_n)^e,$$
for every $i\in [s]$. As $y$ is arbitrarily fixed in $[d]$, we can compute $p_i(1), \ldots, p_i(d)$ for all $i \in [s]$ \footnote{By union bound, the total error probability remains $o(1)$ as $|\F|$ is sufficiently large.}. Treating $p_i(y)$ as a univariate polynomial in $y$ and observing that $\deg_y p_i < d$, we can interpolate the polynomial $p_i(y)$ from the above evaluations. Notice that $p_i(0) = c_i' Q_i(\veca)^{e}$ for all $i \in [s]$. \\

\underline{\emph{Outputting a $\Sigma \wedge \Sigma \Pi^{[t]}(s)$ formula}}. As explained before, the black-box polynomial factorization algorithm and the sparse polynomial interpolation algorithm together give us dense representations of the polynomials $\hat{c}_1 Q_1(\vecx), \ldots, \hat{c}_s Q_s(\vecx)$ for some unknown $\hat{c}_1, \ldots, \hat{c}_s \in \F$. We know that there exist $u_1, \ldots, u_s \in \F$ such that
$$f = u_1 \cdot [\hat{c}_1 Q_1(\vecx)]^m ~+~ \ldots ~+~ u_s \cdot [\hat{c}_s Q_s(\vecx)]^m.$$
By treating $u_1, \ldots, u_s$ as formal variables, we can obtain a linear system in $u_1, \ldots, u_s$ by evaluating $f$ and $[\hat{c}_1 Q_1(\vecx)]^m, \ldots, [\hat{c}_s Q_s(\vecx)]^m$ at $s$ random points in $\F^n$. As $[\hat{c}_1 Q_1(\vecx)]^m, \ldots, [\hat{c}_s Q_s(\vecx)]^m$ are $\F$-linearly independent (which follows from the non-degeneracy condition), the solution to the system gives $u_1, \ldots, u_s$ that satisfy the above equation with probability $1-o(1)$.

\subsection{A random $\Sigma\wedge\Sigma\Pi^{[t]}$ circuit is non-degenerate (proof of Lemma \ref{lem:non-degenerate_random})} \label{sec: random formula}
We show that a homogeneous $\Sigma \wedge \Sigma \Pi^{[t]}(s)$ formula $c_1Q_1^m + \ldots + c_s Q_s^m$ is non-degenerate with probability $1-o(1)$ if the coefficients of the degree-$t$ polynomials $Q_1, \ldots, Q_s$ are chosen independently and uniformly at random from a set $S \subseteq \F$ of size at least $(ns)^{150 \cdot t}$. For this, it is sufficient to show the existence of one non-degenerate homogeneous $\Sigma \wedge \Sigma \Pi^{[t]}(s)$ formula, as long as $|S| \gg 3ds \cdot {n_0 + 2kt -1 \choose 2kt}$ is sufficiently large (Proposition \ref{prop:setting parameters to satisfy a bunch of conditions}). This is because of Schwarz-Zippel lemma and the fact that all the non-degeneracy conditions are about vanishing of some determinant.

\subsubsection*{Construction of a non-degenerate homogeneous $\Sigma \wedge \Sigma \Pi^{[t]}$ formula}
We construct a homogeneous $\Sigma \wedge \Sigma \Pi^{[t]}(s)$ formula that satisfies non-degeneracy Conditions \ref{cond1}, \ref{cond3} and \ref{cond4} (Definition \ref{defn: non-degeneracy}). It would easily follow from this construction that there is a homogeneous $\Sigma \wedge \Sigma \Pi^{[t]}(s)$ formula that also satisfies Condition \ref{cond2} of non-degeneracy. Let $\vecx = \vecy \uplus \vecz$, where $\vecz = \{z_1, \ldots, z_{n_0}\}$. The value of $n_0$ is fixed in Proposition \ref{prop:setting parameters to satisfy a bunch of conditions}. Let $L$ be an $n$-tuple of linear forms in $\vecz$-variables that defines the following affine projection: All the $\vecy$-variables map to $0$, and $z_u$ maps to $z_u$ for all $u \in [n_0]$. Consider a homogeneous $\Sigma \wedge \Sigma \Pi^{[t]}(s)$ formula $C$ that computes
\begin{equation} \label{eqn:sum of powers of Qi s}
f = Q_1^m + \ldots + Q_s^m,
\end{equation}
where $Q_i = R_i(\vecy,\vecz) + G_i(\vecz)$ such that $R_i, G_i$ are degree-$t$ homogeneous polynomials, every monomial in $R_i$ has a $\vecy$-variable, and $G_i$ is $\vecy$-free for all $i \in [s]$. Clearly, $\pi_L(Q_i) = G_i$. We now construct $R_1, \ldots, R_s$ and $G_1, \ldots, G_s$ so that $C$ satisfies non-degeneracy Conditions \ref{cond1}, \ref{cond3} and \ref{cond4}. \\

\underline{\emph{Constructing $R_1, \ldots, R_s$}}. The number of $\vecz$-monomials of degree-$(t-1)$ is $b := {n_0 + t-2 \choose t-1}$. Let these monomials be $\gamma_1, \ldots, \gamma_b$. Consider a combinatorial design on the $\vecy$-variables, i.e., a system of $s$ subsets of $\vecy$-variables, namely $S_1, \ldots, S_s$, such that for all $i, j \in [s]$
\begin{eqnarray*}
|S_i| &=& kb \quad \quad \quad \text{and} \\
|S_i \cap S_j| &\leq& k-1 \quad \quad \text{if $i \neq j$}.
\end{eqnarray*}
Such a set-system (also known as the Nisan-Wigderson design) exists\footnote{in fact, it can be computed efficiently} if $|\vecy| = n - n_0 \geq (2kb)^2$ and $s \leq (kb)^k$ -- these two conditions are satisfied by Proposition \ref{prop:setting parameters to satisfy a bunch of conditions}. Denote the $kb$ distinct $\vecy$-variables in $S_i$ by $\{y_{ijl} ~:~ j \in [k] \text{ and } l \in [b]\}$ and define
$$R_i := \sum_{j \in [k],~ l \in [b]}{y_{ijl} \cdot \gamma_l}.$$
Let the spaces $U, U_1, \ldots, U_s$ be as in Definition \ref{defn: non-degeneracy}.
\begin{proposition} \label{prop: equality of cond1}
$U = U_1 + \ldots + U_s$ and $U_i = \left \langle \vecz^{k(t-1)} \cdot G_i^{m-k} \right \rangle$ for every $i \in [s]$.
\end{proposition}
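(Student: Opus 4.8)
The plan is to prove both equalities simultaneously by a sandwiching argument. Two of the required inclusions come for free. Since $Q_i$ is homogeneous of degree $t$ and $k\le m$, every polynomial in $\der{\vecx}{k}{Q_i^m}$ is divisible by $Q_i^{m-k}$ and has the form $Q_i^{m-k}\cdot P$ with $P$ homogeneous of degree $k(t-1)$; applying the homogeneous linear projection $\pi_L$, which sends $Q_i$ to $G_i$, shows $\pi_L(\der{\vecx}{k}{Q_i^m})\subseteq \langle \vecz^{k(t-1)}\cdot G_i^{m-k}\rangle$, hence $U_i\subseteq \langle \vecz^{k(t-1)}\cdot G_i^{m-k}\rangle$. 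Linearity of differentiation gives $\der{\vecx}{\alpha}{f}=\sum_{i} \der{\vecx}{\alpha}{Q_i^m}$ for each $|\alpha|=k$, so $U\subseteq U_1+\cdots+U_s$. It therefore suffices to prove the single reverse inclusion $\langle \vecz^{k(t-1)}\cdot G_i^{m-k}\rangle \subseteq U$ for every $i$: chaining $\sum_i\langle \vecz^{k(t-1)}\cdot G_i^{m-k}\rangle \subseteq U\subseteq \sum_i U_i\subseteq \sum_i\langle \vecz^{k(t-1)}\cdot G_i^{m-k}\rangle$ then forces all containments to be equalities, yielding both claimed statements at once.

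To get $\langle \vecz^{k(t-1)}\cdot G_i^{m-k}\rangle \subseteq U$, I would exhibit, for each $\vecz$-monomial $\mu$ of degree $k(t-1)$, a $k$-th order derivative of $f$ whose $\pi_L$-image is a nonzero scalar multiple of $G_i^{m-k}\mu$. First write $\mu=\gamma_{l_1}\cdots\gamma_{l_k}$ as a product of $k$ monomials of degree $t-1$ --- possible because $\gamma_1,\dots,\gamma_b$ enumerate all degree-$(t-1)$ $\vecz$-monomials and any weight-$k(t-1)$ exponent vector splits into $k$ weight-$(t-1)$ pieces. Then differentiate $f$ with respect to the $k$ distinct $\vecy$-variables $y_{i1l_1},y_{i2l_2},\dots,y_{ikl_k}$, one from each block of $S_i$. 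The key computation: since $Q_i=R_i+G_i$ is $\vecy$-multilinear with each monomial of $R_i$ carrying exactly one $\vecy$-variable and $G_i$ being $\vecy$-free, the iterated product rule applied to $Q_i^m$ collapses to its singleton-partition term, giving $\frac{m!}{(m-k)!}\,Q_i^{m-k}\prod_{j=1}^{k}\partial_{y_{ijl_j}}Q_i=\frac{m!}{(m-k)!}\,Q_i^{m-k}\,\mu$, whose $\pi_L$-image is $\frac{m!}{(m-k)!}\,G_i^{m-k}\mu$ (the scalar is nonzero as $\char\F=0$ or $\char\F>2d\ge m$). Crucially, the same derivative applied to $Q_{i'}^m$ for $i'\ne i$ vanishes: the Nisan--Wigderson design guarantees $|S_i\cap S_{i'}|\le k-1$, so at least one of the $k$ chosen $\vecy$-variables is absent from $S_{i'}$, hence from $Q_{i'}$, killing a factor in the (again singleton-collapsed) product. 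Thus this derivative of $f$ coincides with that of $Q_i^m$, so $G_i^{m-k}\mu\in U$, and ranging over all $\mu$ gives $\langle \vecz^{k(t-1)}\cdot G_i^{m-k}\rangle\subseteq U$.

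The one step that needs genuine care is justifying that the iterated product rule on $Q_i^m$ (and $Q_{i'}^m$) collapses to the singleton partition; this follows from the elementary fact that $\partial_{y}\partial_{y'}Q_i=0$ for distinct $\vecy$-variables $y,y'$ (each monomial of $R_i$ has a single $\vecy$-variable and $G_i$ has none), together with a short induction on the order of differentiation. The remaining ingredients are routine: the existence and efficient constructibility of the design with $|S_i|=kb$ and $|S_i\cap S_{i'}|\le k-1$ (which needs $|\vecy|=n-n_0\ge(2kb)^2$ and $s\le(kb)^k$), and all the parameter inequalities used above, are supplied by Proposition~\ref{prop:setting parameters to satisfy a bunch of conditions}.
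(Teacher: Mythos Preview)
Your proof is correct and follows essentially the same approach as the paper's own proof: both establish the easy inclusions $U\subseteq \sum_i U_i$ and $U_i\subseteq \langle \vecz^{k(t-1)}\cdot G_i^{m-k}\rangle$, then close the sandwich by showing that for each degree-$k(t-1)$ monomial $\mu$ there is a $k$-th order $\vecy$-derivative of $f$, using $k$ distinct variables from $S_i$, whose $\pi_L$-image is a nonzero scalar multiple of $G_i^{m-k}\mu$ and which annihilates every other $Q_{i'}^m$ via the Nisan--Wigderson design bound $|S_i\cap S_{i'}|\le k-1$. The only cosmetic difference is the particular choice of $k$ variables from $S_i$ (you take one from each $j$-block; the paper groups them by the distinct $\gamma_l$'s appearing in $\mu$), and both yield the same constant $\frac{m!}{(m-k)!}=k!\binom{m}{k}$.
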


\underline{\emph{Constructing $G_1, \ldots, G_s$}}. We set $G_1, \ldots, G_s$ in such a way that $U_1 + \ldots + U_s = U_1 \oplus \ldots \oplus U_s$. Let $p := \lfloor \frac{\sqrt{n_0}}{2} \rfloor$. Consider a combinatorial design on the $\vecz$-variables, i.e., a system of $s$ subsets of $\vecz$-variables, namely $\vecz_1, \ldots, \vecz_s$, such that for all $i, j \in [s]$
\begin{eqnarray*}
|\vecz_i| &=& p \quad \quad \quad \text{and} \\
|\vecz_i \cap \vecz_j| &\leq& \left \lfloor \frac{\min \left(t(m-k), \sqrt{n_0} \right)}{10} \right \rfloor \leq \left \lfloor \frac{p+1}{5} \right \rfloor, \quad \quad \text{if $i \neq j$}.
\end{eqnarray*} 
Such a set-system exists if 
$$s \leq p^{\left \lfloor \frac{\min \left(t(m-k), \sqrt{n_0} \right)}{10} \right \rfloor},$$
which is satisfied by Proposition \ref{prop:setting parameters to satisfy a bunch of conditions}. Let
\begin{equation} \label{eqn:Gi}
G_i := \left(\sum_{z \in \vecz_i}{z} \right)^t, \quad \quad \text{for all $i \in [s]$}.
\end{equation}
\begin{proposition} \label{prop: direct sum of cond1 and cond2}
If $m > 7k$ and $P_1, \ldots, P_s \in \F[\vecz]$ are such that $\deg_{\vecz} P_i \leq 2k(t-1)$ for all $i \in [s]$ and
$$P_1 \cdot G_1^{m-k} + \ldots + P_s \cdot G_s^{m-k} ~=~ 0,$$
then $P_1 = P_2 = \ldots = P_s = 0$.
\end{proposition}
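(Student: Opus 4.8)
The plan is to rephrase the claim as a statement about powers of linear forms and then annihilate every summand but one with a single high‑order partial derivative. Write $\ell_i := \sum_{z\in\vecz_i} z$, so that $G_i^{m-k} = \ell_i^{N}$ with $N := t(m-k)$, and set $D := 2k(t-1)$, so that each $\deg_{\vecz} P_i \le D$. The hypothesis $m > 7k$ gives $N = tm - tk > 6tk > 3D$, and the parameter relations (Proposition~\ref{prop:setting parameters to satisfy a bunch of conditions}) ensure that $n_0$ is large enough that $p = \lfloor\sqrt{n_0}/2\rfloor$ and the design bound $\delta := \lfloor\min(N,\sqrt{n_0})/10\rfloor$ on $|\vecz_i\cap\vecz_j|$ sit in a convenient range. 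Assume for contradiction that $\sum_i P_i\ell_i^N = 0$ with $P_i\neq 0$ for some fixed $i$. I would first normalize the distinguished term: factor $P_i = \ell_i^{a_i}R_i$ with $a_i$ the exact multiplicity of $\ell_i$ in $P_i$, so $\ell_i\nmid R_i$, $\deg_{\vecz}R_i\le D$, and $P_i\ell_i^N = R_i\ell_i^{N_i'}$ with $N_i' := N+a_i \in [N,\,N+D]$. Then pick a \emph{balanced} exponent vector $\gamma_i$ supported inside $\vecz_i$ with $|\gamma_i| = N_i'$ and every coordinate at most $\lceil N_i'/p\rceil$ (such a $\gamma_i$ exists since $p\lceil N_i'/p\rceil\ge N_i'$). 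The goal is to show $\partial^{\gamma_i}\!\bigl(\sum_j P_j\ell_j^N\bigr)\neq 0$, contradicting the assumed identity.

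\emph{Off‑diagonal summands die.} For $j\neq i$, write $P_j\ell_j^N = R_j\ell_j^{N_j'}$ similarly and expand $\partial^{\gamma_i}(R_j\ell_j^{N_j'})$ by the Leibniz rule. A term $\binom{\gamma_i}{\gamma'}(\partial^{\gamma'}R_j)(\partial^{\gamma_i-\gamma'}\ell_j^{N_j'})$ can be nonzero only if $|\gamma'|\le\deg_{\vecz}R_j\le D$ (else $\partial^{\gamma'}R_j=0$) and $\supp(\gamma_i-\gamma')\subseteq\vecz_j$ (else $\partial^{\gamma_i-\gamma'}\ell_j^{N_j'}=0$, since $\ell_j$ involves only $\vecz_j$). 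But $\supp(\gamma_i-\gamma')\subseteq\supp(\gamma_i)\subseteq\vecz_i$, so this forces $\supp(\gamma_i-\gamma')\subseteq\vecz_i\cap\vecz_j$, hence $|\gamma_i-\gamma'|\le\delta\lceil N_i'/p\rceil$; on the other hand $|\gamma_i-\gamma'| = N_i'-|\gamma'|\ge N_i'-D$. A short arithmetic check, using $\delta\le\sqrt{n_0}/10$, $p\ge\sqrt{n_0}/2-1$, $N_i'\le N+D$ and $D < N/3$, shows $\delta\lceil N_i'/p\rceil < N_i'-D$, a contradiction. Hence every Leibniz term vanishes and $\partial^{\gamma_i}(P_j\ell_j^N)=0$.

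\emph{The diagonal summand survives.} Reduce $\partial^{\gamma_i}(R_i\ell_i^{N_i'})$ modulo $\ell_i$, working in the integral domain $\F[\vecz]/(\ell_i)$. Every Leibniz term with $\gamma'\neq 0$ carries a factor $\ell_i^{|\gamma'|}$ with $|\gamma'|\ge 1$ and so vanishes mod $\ell_i$, while the $\gamma'=0$ term equals $R_i\cdot\partial^{\gamma_i}\ell_i^{N_i'} = N_i'!\,R_i$ (because $\supp(\gamma_i)\subseteq\vecz_i$ and $|\gamma_i|=N_i'$). Thus $\partial^{\gamma_i}(P_i\ell_i^N)\equiv N_i'!\,R_i\pmod{\ell_i}$, which is nonzero: $\char(\F)=0$ or $>2d\ge N_i'$ makes $N_i'!\neq 0$ in $\F$, and $\ell_i\nmid R_i$ makes the image of $R_i$ nonzero in $\F[\vecz]/(\ell_i)$. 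Combining the two parts, $0 = \partial^{\gamma_i}\!\bigl(\sum_j P_j\ell_j^N\bigr) = \partial^{\gamma_i}(P_i\ell_i^N)\neq 0$, the desired contradiction; hence all $P_i=0$.

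I expect the crux to be the off‑diagonal vanishing: it is the only place the combinatorial design on the $\vecz$‑variables (the small pairwise intersections $|\vecz_i\cap\vecz_j|\le\delta$) is used, and where one must juggle the size relations among $N$, $D$, $p$ and $n_0$. A subtle point there is that the operator $\partial^{\gamma_i}$ has order $N_i' = N+a_i$, which can exceed $N$ by up to $D$; the balanced choice of $\gamma_i$ is exactly what keeps its weight on the small set $\vecz_i\cap\vecz_j$ below $N_i'-D$. Everything else — the Leibniz bookkeeping, the observation $\partial^{\gamma}\ell^{|\gamma|}=|\gamma|!$, and the reduction modulo $\ell_i$ — is routine.
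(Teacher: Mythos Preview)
Your proposal is correct and follows essentially the same approach as the paper. The paper phrases the argument in terms of monomials rather than derivatives---it first proves (as Observation~\ref{obs: all monomials present}, via exactly your mod-$\ell_i$ derivative trick) that $P_i\ell_i^{N}$ contains a monomial divisible by any prescribed balanced $\vecz_i$-monomial, and then argues by the design bound that such a monomial cannot occur in $P_j\ell_j^{N}$ for $j\neq i$---but the combinatorial core (balanced exponent on $\vecz_i$, the intersection bound $|\vecz_i\cap\vecz_j|\le\delta$ for the off-diagonal kill, and $\ell_i\nmid R_i$ together with $\char(\F)>2d$ for the diagonal survival) is identical to yours; you have simply packaged the two steps into a single application of $\partial^{\gamma_i}$ followed by reduction mod~$\ell_i$.
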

The condition $m > 7k$ is satisfied by Proposition \ref{prop:setting parameters to satisfy a bunch of conditions}. So, Propositions \ref{prop: equality of cond1} and \ref{prop: direct sum of cond1 and cond2} imply that $U = U_1 \oplus \ldots \oplus U_s$ and $\widetilde{U_1} + \ldots + \widetilde{U_s} = \widetilde{U_1} \oplus \ldots \oplus \widetilde{U_s}$. The proof of Proposition \ref{prop: direct sum of cond1 and cond2} (in particular Observation \ref{obs: all monomials present}) also implies that $[G_1^{m-k}]_{z_1=0}, \ldots, [G_s^{m-k}]_{z_1=0}$ are $\F$-linearly independent. \\

In order to satisfy Condition \ref{cond2}, we draw an analogy between the polynomials $g_0 = G_1^e + \ldots + G_s^e$ and $f = Q_1^m + \ldots + Q_s^m$ (Equation \eqref{eqn:sum of powers of Qi s}). We mimic the above construction of $Q_1, \ldots, Q_s$ and $L$ that ensures $U = U_1 \oplus \ldots \oplus U_s$ and $\dim U_i = {n_0 + k(t-1) - 1 \choose k(t-1)}$ to construct $G_1, \ldots, G_s$\footnote{The construction of these $G_1, \ldots, G_s$ should not be confused with the choice of $G_i$ in Equation \eqref{eqn:Gi}. The choices of $Q_i$'s and $G_i$'s till Proposition \ref{prop: direct sum of cond1 and cond2} are used to show that Condition \ref{cond1}, \ref{cond3} and \ref{cond4} of non-degeneracy are satisfied, whereas we choose the $G_i$'s afresh to show that Condition \ref{cond2} of non-degeneracy is satisfied. Finally, a union bound ensures that all the conditions of non-degeneracy are satisfied with high probability.} and $P$ such that $W = W_1 \oplus \ldots \oplus W_s$ and $\dim W_i = {m_0 + k(t-1) - 1 \choose k(t-1)}$. For this, we need to satisfy $n_0 - m_0 \geq (2kc)^2$, $s \leq (kc)^k$ (where $c = {m_0 + t-2 \choose t-1}$), $e > 7k$ and 
$$s \leq \left(\left\lfloor \frac{\sqrt{m_0}}{2} \right\rfloor \right)^{\left \lfloor \frac{\min \left(t(e-k), \sqrt{m_0} \right)}{10} \right \rfloor}.$$
All these relations are taken care of by Proposition \ref{prop:setting parameters to satisfy a bunch of conditions}. 

\subsection{Setting of parameters} \label{sec:parameters}
From the statement of Theorem \ref{thm:learning sums of powers}, we have $n \geq d^2$,~ $t \leq \sqrt{\frac{\log d}{10 \cdot \log \log d}}$,~ $|\F| \geq (ns)^{150 \cdot t}$ and
$$s \leq \min \left(n^{\frac{d}{1100 \cdot t^2}}, \exp(n^{\frac{1}{30 \cdot t^2}}) \right).$$
We leave the proof of the following proposition as an exercise.

\begin{proposition} \label{prop:setting parameters to satisfy a bunch of conditions}
Let $n_0 = \lfloor n^{\frac{1}{3 \cdot t}}\rfloor$, $m_0 = \lfloor {n}^{\frac{1}{15 \cdot t^2}}\rfloor$, $k = \left\lceil \frac{130 \cdot t \cdot \log s}{\log n} \right\rceil$, and (borrowing notations from Section \ref{sec: random formula}) $b = {n_0 + t-2 \choose t-1}$ and $c = {m_0 + t-2 \choose t-1}$. Then, the following relations are satisfied:
\begin{enumerate}
\item \label{rel1} $|\F| \geq (ns)^{150 \cdot t} \gg d \cdot {n + k -1 \choose k}$,
\item \label{rel2} $|\F| \geq (ns)^{150 \cdot t} \gg 3ds \cdot {n_0 + 2kt -1 \choose 2kt}$,
\item \label{rel3} $(nd)^k = \poly(n, s^t)$,
\item \label{rel4} ${n_0 + 2kt -1 \choose 2kt} = \poly(n, s^t)$, 
\item \label{rel5} $n-n_0 \geq (2kb)^2$,
\item \label{rel6} $e = m-k > 7k$,
\item \label{rel7} $n_0 - m_0 \geq (2kc)^2$,
\item \label{rel8} $s \leq (kc)^k$,
\item \label{rel9} $s \leq \left(\left\lfloor \frac{\sqrt{m_0}}{2} \right\rfloor \right)^{\left \lfloor \frac{\min \left(t(e-k), \sqrt{m_0} \right)}{10} \right \rfloor}$.
\end{enumerate}
\end{proposition}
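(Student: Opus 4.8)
The plan is to substitute the definitions of $n_0,m_0,k,b,c$ and verify each of the nine relations by elementary estimates on binomial coefficients, using the hypotheses only through a few ``master'' consequences. From $s \le \min(n^{d/(1100 t^2)},\exp(n^{1/(30 t^2)}))$ one gets
\[
\frac{130\, t\log s}{\log n} \le k \le \frac{130\, t\log s}{\log n}+1 \le \min\Big(\frac{13 d}{110 t},\ \frac{130\, t\, n^{1/(30 t^2)}}{\log n}\Big)+1,
\]
so in particular $k \le \tfrac{13 d}{110 t}+1$ and $k=n^{o(1)}$; together with $n_0=\Theta(n^{1/(3t)})$, $m_0=\Theta(n^{1/(15t^2)})$ and the bounds $\binom{N+j-1}{j}\le N^{j}$ (valid once $j^2\le N$) and $\binom{N+j-1}{j}\ge (N/j)^{j}$, this yields $b=\binom{n_0+t-2}{t-1}\le n^{1/3+o(1)}$, $c=\binom{m_0+t-2}{t-1}\le n^{1/(15t)+o(1)}$, as well as the lower bounds $\log c \ge (t-1)\log(m_0/t)=\Omega(\log n / t)$ (using $m_0 \ge t^2$, which holds because the constraint $t \le \sqrt{\log d/(10\log\log d)}$ forces $d$, hence $m_0$, to be large) and $\sqrt{m_0}\ge \tfrac12 n^{1/(30 t^2)}\ge \tfrac12\log s$.

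With these in hand the ``field size'' and ``running time'' relations are immediate. Since $d\le n^{1/2}$ we have $(nd)^k\le n^{3k/2}=n^{O(1)}(s^t)^{O(1)}$ and $\binom{n_0+2kt-1}{2kt}\le e\,n_0^{2kt}\le e\,n^{2k/3}=n^{O(1)}(s^t)^{O(1)}$ (the hypothesis $(2kt)^2\le n_0$ needed here follows from $k=n^{o(1)}$ and $t=n^{o(1)}$), which is relations \ref{rel3} and \ref{rel4}; then $150\,t\log(ns)$ beats both $\log d+k\log(2n)$ and $\log(3ds)+2kt\log n_0$ with a factor of about $n^{200 t}$ to spare, which is relations \ref{rel1} and \ref{rel2}. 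For relations \ref{rel5} and \ref{rel7} one checks $2kb\le n^{1/3+1/(30t^2)+o(1)}$ and $2kc\le n^{1/(15t)+1/(30t^2)+o(1)}$; squaring, the first exponent stays below $1$ so $(2kb)^2<n-n_0=\Omega(n)$, and the second equals $\tfrac{1+2t}{15t^2}<\tfrac1{3t}$ (since $1+2t<5t$), so $(2kc)^2\ll n_0-m_0$. Relation \ref{rel6} reduces to $m>8k$; with $k\le\tfrac{13d}{110t}+1$ this is $\tfrac{3}{55}m>8+o(1)$, which holds because $m=d/t$ is enormous.

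The heart of the argument is relations \ref{rel8} and \ref{rel9}, which are precisely the conditions that make the Nisan--Wigderson designs of Section \ref{sec: random formula} exist and where the specific constants in the hypotheses on $s$ and $t$ get consumed. For \ref{rel8}, $\log((kc)^k)\ge k\log c \ge \tfrac{130 t\log s}{\log n}\cdot\Omega(\tfrac{\log n}{t})\ge 2\log s$, so $(kc)^k\ge s$ (and hence also $(kb)^k\ge s$, since $n_0\ge m_0$ forces $b\ge c$). For \ref{rel9}, one first records $t(e-k)=d-2tk\ge 0.75\,d$ (again from the bound on $k$) and $\log\lfloor\sqrt{m_0}/2\rfloor=\Omega(\log n/t^2)$, and then splits on which term attains $\min(t(e-k),\sqrt{m_0})$: if it is $\sqrt{m_0}$, the exponent is $\Omega(\sqrt{m_0})$ and one bounds $\log(\text{RHS})=\Omega(\sqrt{m_0}\log m_0)$ against $\log s\le 2\sqrt{m_0}$; if it is $t(e-k)$, the exponent is $\Omega(d)$ and one bounds $\log(\text{RHS})=\Omega(d\log n / t^2)$ against $\log s\le \tfrac{d}{1100 t^2}\log n$. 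I expect relation \ref{rel9} to be the main obstacle: it is the one place where one must simultaneously keep the base $\lfloor\sqrt{m_0}/2\rfloor$ large, pick the correct one of the two upper bounds on $\log s$, and absorb the various floors. None of this is conceptually deep --- each inequality has polynomial slack --- but the bookkeeping is finicky, which is presumably why it is left as an exercise.
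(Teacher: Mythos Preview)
The paper explicitly leaves this proposition as an exercise and gives no proof, so there is nothing to compare your approach against; plugging in the parameter choices and verifying each of the nine inequalities with crude binomial bounds is exactly the intended route, and your plan does this correctly. Your identification of relation~\ref{rel9} as the bottleneck (where both upper bounds on $\log s$ and the constraint $t^2 \le \log d/(10\log\log d)$ get used simultaneously) is accurate, and the two-case split on $\min(t(e-k),\sqrt{m_0})$ is the right way to handle it.
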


Relations \ref{rel1} and \ref{rel2} are used for applications of the Schwartz-Zippel lemma at various places to ensure that the error probability is bounded by $o(1)$. Relations \ref{rel3} and \ref{rel4} guarantee that the running time of the algorithm is $\poly(n, \sigma, s^t)$. Relations \ref{rel5}-\ref{rel9} are used in Section \ref{sec: random formula} to show that a random homogeneous $\Sigma \wedge \Sigma \Pi^{[t]}$ formula is non-degenerate with high probability.

\section{Moment problem for mixtures of zero-mean Gaussians} \label{sec:gaussians}

In this section, we describe an algorithm for learning the parameters of a mixture of Gaussians in the \emph{non-degenerate} case, given the moments of the mixture \emph{exactly}. Since we can only estimate the moments given samples from the mixture, it is an extremely interesting problem to modify our algorithm to make it work with inexact moments (in the smoothed analysis setting) and we leave it open for future work. Our algorithm also extends naturally to the general mean case but the analysis gets more complicated and we only focus on the zero-mean case for simplicity. \\

Consider a mixture of Gaussians, $\cD = \sum_{i=1}^s w_i \cN(0, \Sigma_i)$, $\sum_{i=1}^s w_i = 1$, $w_i > 0$ for all $i \in [s]$. Let us define quadratic polynomials, $Q_1,\ldots, Q_s$, corresponding to the covariance matrices, by $Q_i(\vecx) = \frac{1}{2} \vecx^T \Sigma_i \vecx$. Also define the polynomials 
$$
f_m = \sum_{i=1}^s w_i Q_i^m,~~~ f_{m-1} = \sum_{i=1}^s w_i Q_i^{m-1}.
$$

Also let us set $n_0 = \lfloor n^{\frac{1}{6}}\rfloor$, $m_0 = \lfloor n^{\frac{1}{60}}\rfloor$, $\ell =  \left\lceil \frac{260 \cdot \log s}{\log n} \right\rceil$, $m = 3 \ell$, $e = 2 \ell$. We will call the mixture $\cD$ \emph{non-degenerate} if $f_m$ and $f_{m-1}$ satisfy the non-degeneracy conditions in Definition \ref{defn: non-degeneracy} with $k = \ell$ and $k = \ell - 1$, respectively. We will need the following elementary proposition about the moment generating function of a Gaussian and mixture of Gaussians.

\begin{proposition}\label{prop:Gaussian_mixture_MGF}
Suppose $Y \sim \cN(\mu, \Sigma)$. Then $\mathbb{E}\left[ e^{\langle \vecx, Y \rangle}\right] = e^{\langle \mu, \vecx \rangle + \frac{1}{2} \vecx^T \Sigma \vecx}$. Similary for a mixture of Gaussians, $\cD = \sum_{i=1}^s w_i \cN(\mu_i, \Sigma_i)$, the moment generating function is $\sum_{i=1}^s w_i e^{\langle \mu_i, \vecx \rangle + \frac{1}{2} \vecx^T \Sigma_i \vecx}$.
\end{proposition}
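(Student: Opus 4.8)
The plan is to prove the single-Gaussian identity first and then deduce the mixture statement by linearity of expectation. For a single $Y \sim \cN(\mu, \Sigma)$, I would reduce to the one-dimensional case by an orthogonal change of variables: write $\Sigma = O^T D O$ with $O$ orthogonal and $D = \diag(d_1, \ldots, d_n)$, $d_j \geq 0$, so that $Y = \mu + O^T Z$ where the coordinates of $Z$ are independent with $Z_j \sim \cN(0, d_j)$. Putting $\vecv = O\vecx$, one has $\langle \vecx, Y\rangle = \langle \vecx, \mu\rangle + \langle \vecv, Z\rangle$, hence $\mathbb{E}[e^{\langle \vecx, Y\rangle}] = e^{\langle \mu, \vecx\rangle}\prod_{j=1}^n \mathbb{E}[e^{v_j Z_j}]$ by independence.

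For each factor, when $d_j > 0$ I would simply complete the square in the Gaussian integral:
\[
\mathbb{E}[e^{v_j Z_j}] = \frac{1}{\sqrt{2\pi d_j}}\int_{\R} e^{v_j z - z^2/(2d_j)}\, dz = e^{d_j v_j^2/2},
\]
using $v_j z - z^2/(2d_j) = -\tfrac{1}{2d_j}(z - d_j v_j)^2 + \tfrac{d_j v_j^2}{2}$ and the fact that a Gaussian density integrates to $1$; the integral converges trivially since the integrand decays like $e^{-z^2/(2d_j)}$. When $d_j = 0$ we have $Z_j = 0$ almost surely, so $\mathbb{E}[e^{v_j Z_j}] = 1 = e^{d_j v_j^2/2}$ as well. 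Multiplying the factors gives $\prod_j e^{d_j v_j^2/2} = e^{\frac12 \vecv^T D \vecv} = e^{\frac12 \vecx^T O^T D O \vecx} = e^{\frac12 \vecx^T \Sigma \vecx}$, which together with the prefactor $e^{\langle \mu, \vecx\rangle}$ yields the first claim.

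For the mixture $\cD = \sum_{i=1}^s w_i \cN(\mu_i, \Sigma_i)$, its density is the convex combination $\sum_i w_i p_{\mu_i, \Sigma_i}$ of the component densities, so by linearity of the integral (equivalently, linearity of expectation),
\[
\mathbb{E}_{Y \sim \cD}\!\left[e^{\langle \vecx, Y\rangle}\right] = \sum_{i=1}^s w_i\, \mathbb{E}_{Y \sim \cN(\mu_i, \Sigma_i)}\!\left[e^{\langle \vecx, Y\rangle}\right] = \sum_{i=1}^s w_i\, e^{\langle \mu_i, \vecx\rangle + \frac12 \vecx^T \Sigma_i \vecx},
\]
invoking the single-Gaussian identity termwise.

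There is essentially no serious obstacle here; this is a textbook computation. The only points that merit a line of care are (i) justifying the reduction to independent coordinates via orthogonal diagonalization of $\Sigma$, and (ii) covering possibly singular $\Sigma_i$ (degenerate Gaussians), which the $d_j = 0$ case above handles directly — alternatively one could observe that both sides are continuous in $\Sigma$ and pass to the limit from positive-definite matrices.
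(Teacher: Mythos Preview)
Your proof is correct and complete; this is indeed the standard textbook computation. The paper itself does not prove this proposition at all --- it simply states it as an ``elementary proposition'' and uses it without proof --- so there is nothing to compare against.
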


The next lemma states an efficient algorithm for computing the parameters of a non-degenerate zero-mean Gaussian mixture given access to its exact moments.

\begin{lemma}\label{lem:non-degenerate_Gaussian_mixture}
Let $s \leq \exp\left(n^{\frac{1}{120}}\right)$. There is a randomized $\poly(n,b,s)$ time algorithm ($b$ denotes the total bit complexity of the parameters) that given black-box access to the exact $O(\log(s)/\log(n))$ order moments of a non-degenerate mixture of zero-mean Gaussians $\cD$, recovers its parameters $(w_i, \Sigma_i)_{i=1}^s$.
\end{lemma}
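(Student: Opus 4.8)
\textbf{Proof proposal for Lemma \ref{lem:non-degenerate_Gaussian_mixture}.}

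The plan is to reduce the moment problem for a mixture of zero-mean Gaussians to the problem of learning a non-degenerate homogeneous $\Sigma \wedge \Sigma \Pi^{[2]}$ circuit, and then invoke Theorem \ref{thm:learning sums of powers} (or rather the algorithm behind it, Algorithm \ref{alg:learning sums of powers}). The first step is to recover black-box access to $f_m$ and $f_{m-1}$ from the moments. By Proposition \ref{prop:Gaussian_mixture_MGF}, the moment generating function of $\cD$ is $\sum_{i=1}^s w_i e^{\frac{1}{2} \vecx^T \Sigma_i \vecx} = \sum_{i=1}^s w_i e^{Q_i(\vecx)}$, and expanding $e^{Q_i}$ in its Taylor series shows that the degree-$2m$ part of the moment generating function (equivalently, an appropriately scaled $2m$-th order moment tensor of $\cD$, contracted against $\vecx^{\otimes 2m}$) equals $\frac{1}{m!} \sum_i w_i Q_i^m = \frac{1}{m!} f_m$, since every $Q_i$ is homogeneous of degree $2$ and hence $e^{Q_i}$ has only even-degree terms, with the degree-$2m$ term being exactly $Q_i^m/m!$. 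Thus from the exact $2m$-th and $2(m-1)$-th order moments of $\cD$ we get black-box access (in fact the full coefficient vectors) to $f_m = \sum_i w_i Q_i^m$ and $f_{m-1} = \sum_i w_i Q_i^{m-1}$; here $m = 3\ell$ with $\ell = \lceil 260 \log s/\log n \rceil = O(\log s/\log n)$, so $2m = O(\log s / \log n)$ is the claimed moment order.

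Next I would run Algorithm \ref{alg:learning sums of powers} (with $t = 2$) on $f_m$, using the parameters $n_0 = \lfloor n^{1/6}\rfloor$, $m_0 = \lfloor n^{1/60}\rfloor$, $k = \ell$, which match the instantiation $t=2$ of Proposition \ref{prop:setting parameters to satisfy a bunch of conditions}; one checks that the hypotheses $n \ge d^2$ (with $d = 2m = 6\ell = O(\log s/\log n)$, so this needs $s \le \exp(n^{1/120})$ as assumed and $n$ large), $|\F| \ge (ns)^{150 t}$ — here we simply work over $\Q$ using a sufficiently large finite subset, and use the variant of the algorithm that works over $\Q$ via the $\Q$-version of the vector space decomposition procedure mentioned after the Vector space decomposition problem — and the bound on $s$ are all satisfied. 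Since $\cD$ is non-degenerate, by definition $f_m$ satisfies the non-degeneracy conditions of Definition \ref{defn: non-degeneracy} with $k = \ell$, so the algorithm outputs, with probability $1 - o(1)$, quadratics $Q_1', \ldots, Q_s'$ with $Q_i' = c_i' Q_{\pi(i)}$ for some permutation $\pi$ and non-zero constants $c_i'$. From each $Q_i'$ we read off the symmetric matrix $\Sigma_i' := 2 \cdot (\text{Hessian of } Q_i')$, which equals $c_i' \Sigma_{\pi(i)}$. It remains to fix the scalars $c_i'$ and recover the weights $w_i$: writing $f_m = \sum_i w_i Q_i^m = \sum_i (w_i / c_i'^m) (Q_i')^m$ and $f_{m-1} = \sum_i w_i Q_i^{m-1} = \sum_i (w_i/c_i'^{m-1}) (Q_i')^{m-1}$, and using that the $(Q_i')^m$ (resp. $(Q_i')^{m-1}$) are $\F$-linearly independent by non-degeneracy, we solve two linear systems to get $\alpha_i := w_i / c_i'^m$ and $\beta_i := w_i / c_i'^{m-1}$; then $c_i' = \beta_i / \alpha_i$ and $w_i = \alpha_i (c_i')^m$, after which $\Sigma_{\pi(i)} = \Sigma_i' / c_i'$. (The normalization $\sum_i w_i = 1$ provides an additional consistency check.) Running this gives all parameters $(w_i, \Sigma_i)_{i=1}^s$ up to the harmless relabeling by $\pi$.

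The main obstacle — or rather the point needing the most care — is verifying that the non-degeneracy required by Algorithm \ref{alg:learning sums of powers} for the specific polynomial $f_m$ (and simultaneously for $f_{m-1}$, which Step \ref{mainalgo:step7} of the algorithm implicitly needs when it ``runs Steps 1--6 $d$ times'') coincides with the non-degeneracy hypothesis placed on $\cD$ in the statement, and that the parameter choices $n_0 = \lfloor n^{1/6}\rfloor$, $m_0 = \lfloor n^{1/60}\rfloor$, $\ell = \lceil 260\log s/\log n\rceil$, $m = 3\ell$, $e = 2\ell$ indeed satisfy all the relations in Proposition \ref{prop:setting parameters to satisfy a bunch of conditions} for $t = 2$ and $d = 6\ell$; this is why the definition of ``non-degenerate mixture'' in the text is stated precisely as ``$f_m$ and $f_{m-1}$ satisfy Definition \ref{defn: non-degeneracy} with $k = \ell$ and $k = \ell-1$ respectively''. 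A secondary technical point is that, since the entries of the $\Sigma_i$ are rationals of bit-complexity $b$, one must track bit sizes through the factorization, sparse interpolation, and linear-algebra steps to conclude the running time is $\poly(n, b, s)$ rather than merely $\poly(n,s)$; this follows from the fact that all the subroutines ($\Q$-version of vector space decomposition, black-box factorization over $\Q$ of \cite{KaltofenT90}, sparse interpolation of \cite{KlivansS01}, and solving linear systems) run in time polynomial in the bit-complexity of their inputs, and the bit-complexity of the coefficient vectors of $f_m, f_{m-1}$ is $\poly(n, b, \ell) = \poly(n, b, s)$.
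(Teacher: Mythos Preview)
Your proposal is correct and follows the paper's overall reduction, but the final ``disambiguate the scaling'' step is organized differently from the paper's proof. The paper (Algorithm~\ref{alg:learning mixture of Gaussians}) runs Algorithm~\ref{alg:learning sums of powers} \emph{twice} --- once on $f_m$ and once on $f_{m-1}$ --- obtaining two lists of quadratics $(Q_i')_i$ and $(\widetilde Q_i)_i$ with two a priori unrelated permutations, then matches them by testing which ratios $Q_i'/\widetilde Q_j$ are constants, and finally extracts $c_i'$ from the resulting system of two equations $w_{\pi_1(i)} = w_i'(c_i')^m$ and $w_{\pi_2(\sigma(i))} = \widetilde w_{\sigma(i)} (\widetilde c_{\sigma(i)})^{m-1}$. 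You instead run the learning algorithm only on $f_m$, reuse the recovered $Q_i'$ as a basis, and solve the single additional linear system $f_{m-1} = \sum_i \beta_i (Q_i')^{m-1}$; the ratio $\beta_i/\alpha_i$ then gives $c_i'$ directly. Your route avoids the second invocation of Algorithm~\ref{alg:learning sums of powers} and the matching step entirely, at the cost of needing the linear independence of the $(Q_i')^{m-1}$, which indeed follows from the assumed non-degeneracy of $f_{m-1}$ (via the uniqueness remark after Theorem~\ref{thm:learning sums of powers}). Both arguments are valid; yours is marginally more economical.

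One small correction: your parenthetical that ``Step~\ref{mainalgo:step7} of the algorithm implicitly needs [non-degeneracy of $f_{m-1}$] when it runs Steps 1--6 $d$ times'' is a misreading --- Step~\ref{mainalgo:step7} reruns Steps~\ref{mainalgo:step1}--\ref{mainalgo:step6} on $f_m$ itself with perturbed projections $\tilde L(y)$ in order to lift $\pi_L(Q_i)$ back to $Q_i$, and does not touch $f_{m-1}$ at all. This doesn't affect your argument, since you correctly invoke non-degeneracy of $f_{m-1}$ only where you actually need it (the linear independence of the $(Q_i')^{m-1}$).
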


\begin{remark}
Black-box access to the moments means given a vector $\vecx \in \R^n$, access to the moments of the random variable $\langle \vecx, Y \rangle$, where $Y \sim \cD$. In other words, this means black-box access to the polynomials $f_m$ and $f_{m-1}$. Of course, when $s \le \poly(n)$, our algorithm only needs access to $O(1)$ order moments of the mixture $\cD$ in which case black-box access to the moments is immediate because we can compute all the moments explicitly. But we state our algorithm in this general form in hope of applicability in settings where black-box access to the moments might be available without an explicit access to the moments.
\end{remark}

\begin{proof}(Of Lemma \ref{lem:non-degenerate_Gaussian_mixture})

\begin{algorithm}
	\caption{Learning mixtures of zero-mean Gaussians } \label{alg:learning mixture of Gaussians}
	\begin{algorithmic}
		\STATE \textbf{Input}: Black-box access to the \emph{exact} moments of a \emph{non-degenerate} mixture of zero-mean Gaussians $\cD = \sum_{i=1}^s w_i \cN(0, \Sigma_i)$.
		\STATE \textbf{Output}: The parameters of the mixture $(w_i, \Sigma_i)_{i=1}^s$.
	\end{algorithmic}
	\begin{algorithmic}[1]
		\STATEx
		\STATE \label{gaussianalgo:step1} Use black-box access to the moments to get black-box access to the polynomials $f_m$ and $f_{m-1}$ (will be explained in the analysis).
		\STATE \label{gaussianalgo:step2} Use Algorithm \ref{alg:learning sums of powers} to obtain representations $f_m = \sum_{i=1}^s w_i' \left(Q_i'\right)^m$ and $f_{m-1} = \sum_{i=1}^s \widetilde{w_i} \left(\widetilde{Q_i}\right)^{m-1}$ (with $w_i', \widetilde{w}_i$ non-zero): we get $\left(w_i', Q_i' \right)_{i=1}^s$ and $\left(\widetilde{w}_i, \widetilde{Q}_i \right)_{i=1}^s$.
		\STATE \label{gaussianalgo:step3} Find a permutation $\sigma: [s] \rightarrow [s]$ such that $c_i = Q_i'/\widetilde{Q}_{\sigma(i)}$ is a constant (will be explained in the analysis why $\sigma$ is a permutation).
		\STATE \label{gaussianalgo:step4} Let us denote $d_i = \frac{\widetilde{w}_{\sigma(i)}}{w_i' c_i^{m-1}}$. Output $\left( w_i' d_i^m, \frac{1}{d_i} \Sigma_i'\right)_{i=1}^s$, where $\Sigma_i'$ is such that $Q_i'(\vecx) = \frac{1}{2} \vecx^T \Sigma_i' \vecx$.					
	\end{algorithmic}
\end{algorithm}

The algorithm is described in Algorithm \ref{alg:learning mixture of Gaussians}. Suppose $Y \sim \cD$. Then by Proposition \ref{prop:Gaussian_mixture_MGF}, the moment generating function of $\cD$ is given by
$$
\mathbb{E}\left[ e^{\langle \vecx, Y \rangle}\right] = \sum_{i=1}^s w_i e^{\frac{1}{2} \vecx^T \Sigma_i \vecx} = \sum_{i=1}^s w_i e^{Q_i(\vecx)}.
$$
Equating the degree $2m$ part on both sides, we get
$$
\frac{m!}{(2m)!} \cdot \mathbb{E}\left[\langle \vecx, Y \rangle^{2m}\right] = \sum_{i=1}^s w_i Q_i(\vecx)^m.
$$
Thus given black-box access to the moments of $\cD$, we can get black-box access to the polynomials $f_m$ and $f_{m-1}$. This explains Step \ref{gaussianalgo:step1} of the algorithm. Theorem \ref{thm:learning sums of powers} guarantees us that there exist permutations $\pi_1: [s] \rightarrow [s]$ and $\pi_2: [s] \rightarrow [s]$, and constants $c_1',\ldots, c_s'$ and $\widetilde{c}_1,\ldots, \widetilde{c}_s$ (all non-zero) such that
$$
Q_i' = c_i' Q_{\pi_1(i)},~~~ w_{\pi_1(i)} = w_i' (c_i')^m,~~~ \widetilde{Q}_i = \widetilde{c}_i Q_{\pi_2(i)} \text{ and } w_{\pi_2(i)} = \widetilde{w}_i (\widetilde{c}_i)^{m-1}
$$
for all $i \in [s]$. We also have that $\left(Q_i^m\right)_{i=1}^s$ are linearly independent (this is implied by the non-degeneracy condition) and hence the $Q_i's$ span distinct one-dimensional spaces. Thus, there is exactly one $j$ such that $Q_i'/\widetilde{Q}_{j}$ is a constant which is given by $j = \pi_2^{-1}(\pi_1(i))$. This explains Step \ref{gaussianalgo:step3} where $\sigma$ is given by $\pi_2^{-1} \circ \pi_1$. Now,
$$
c_i = \frac{Q_i'}{\widetilde{Q}_{\sigma(i)}} = \frac{c_i' Q_{\pi_1(i)}}{\widetilde{c}_{\sigma(i)} Q_{\pi_2(\sigma(i))}} = \frac{c_i'}{\widetilde{c}_{\sigma(i)}}.
$$
Then,
$$
d_i = \frac{\widetilde{w}_{\sigma(i)}}{w_i' c_i^{m-1}} = \frac{w_{\pi_2(\sigma(i))}}{\widetilde{c}_{\sigma(i)}^{m-1}} \cdot \frac{(c_i')^m}{w_{\pi_1(i)}} \cdot \frac{\widetilde{c}_{\sigma(i)}^{m-1}}{(c_i')^{m-1}} = c_i'.
$$
Thus we output $\left( w_{\pi_1(i)}, \Sigma_{\pi_1(i)} \right)_{i=1}^s$. This completes the proof.
\end{proof}

Next, we instantiate the above lemma with distributional assumptions on the covariance matrices which will satisfy the non-degeneracy condition with high probability.

\begin{corollary}\label{corr:random_Gaussian_mixture}
Let $s \leq \exp\left(n^{\frac{1}{120}}\right)$. There is a randomized $\poly(n,b,s)$ time algorithm ($b$ denotes the total bit complexity of the parameters) that given black-box access to the exact $O(\log(s)/\log(n))$ order moments of a random mixture of zero-mean Gaussians $\cD$, recovers its parameters $(w_i, \Sigma_i)_{i=1}^s$. Random here means that $\Sigma_i = A_i A_i^T$ the entries of $A_i$'s are chosen uniformly and indepdently at random from an arbitrary set $S \subset \Q$ of size $|S| \ge (ns)^{600}$ (and of course $w_i > 0$ for all $i$).
\end{corollary}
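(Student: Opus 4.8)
~ The plan is to derive Corollary~\ref{corr:random_Gaussian_mixture} from Lemma~\ref{lem:non-degenerate_Gaussian_mixture} together with one new ingredient: a random zero-mean mixture $\cD=\sum_{i=1}^s w_i\cN(0,\Sigma_i)$ with $\Sigma_i=A_iA_i^T$ and the entries of each $A_i$ drawn i.i.d.\ uniformly from $S\subset\Q$ is \emph{non-degenerate} (in the sense defined just before Lemma~\ref{lem:non-degenerate_Gaussian_mixture}, i.e.\ $f_m$ and $f_{m-1}$ satisfy Definition~\ref{defn: non-degeneracy} with $k=\ell$ and $k=\ell-1$ respectively) with probability $1-o(1)$. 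All parameters here are the $t=2$ specialisation: $n_0=\lfloor n^{1/6}\rfloor=\lfloor n^{1/(3t)}\rfloor$, $m_0=\lfloor n^{1/60}\rfloor=\lfloor n^{1/(15t^2)}\rfloor$, $k=\ell=\lceil 260\log s/\log n\rceil=\lceil 130\,t\log s/\log n\rceil$, and $m=3\ell$, $e=m-k=2\ell$; the hypothesis $s\le\exp(n^{1/120})=\exp(n^{1/(30t^2)})$ is exactly what the resulting parameter inequalities (the $t=2$, $m=3\ell$, $e=2\ell$ instance of Proposition~\ref{prop:setting parameters to satisfy a bunch of conditions}) require. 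Once this is in place, Algorithm~\ref{alg:learning mixture of Gaussians} and Lemma~\ref{lem:non-degenerate_Gaussian_mixture} give the claimed $\poly(n,b,s)$-time reconstruction verbatim.

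\textbf{Step 1: reduce ``random $A_i$ is non-degenerate'' to a single witness.}~ For a fixed pair of tuples of linear forms $L,P$, each of Conditions~\ref{cond1}--\ref{cond4} of Definition~\ref{defn: non-degeneracy} asserts the non-vanishing of a determinant (or a maximal minor) whose entries are polynomials in the coefficients of $Q_1,\dots,Q_s$, where $Q_i(\vecx)=\frac{1}{2}\vecx^T\Sigma_i\vecx$. Substituting $\Sigma_i=A_iA_i^T$, these coefficients become quadratics in the entries of $A_i$, so each condition becomes the non-vanishing of a polynomial $H$ in the entries of $A_1,\dots,A_s$ together with the coefficients of $L,P$. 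Its total degree is at most twice the degree of the corresponding determinant viewed as a polynomial in the $Q_i$-coefficients, hence $\poly(n,s)$ and comfortably below $(ns)^{600}=|S|$ (for $t=2$ the exponent $600$ is twice the exponent $150\,t$ of the bound in Lemma~\ref{lem:non-degenerate_random}). The weights $w_i$ occur only as the nonzero scalars $c_i$ of Definition~\ref{defn: non-degeneracy} and, exactly as in Theorem~\ref{thm:learning sums of powers}, do not affect satisfiability, so a witness may be built with all weights equal to $1$. Thus, if some single $H$ is not identically zero, then since Algorithm~\ref{alg:learning mixture of Gaussians} picks $L,P$ at random, the Schwartz--Zippel lemma gives $H\neq 0$ with probability $1-o(1)$, and a union bound over the finitely many determinants attached to Conditions~\ref{cond1}--\ref{cond4} for both $f_m$ and $f_{m-1}$ yields non-degeneracy with probability $1-o(1)$.

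\textbf{Step 2: exhibit the witness over $\C$.}~ By the Autonne--Takagi factorisation, every complex symmetric $n\times n$ matrix $M$ equals $AA^T$ for some $A\in M_n(\C)$; hence over $\C$ the image of $A\mapsto AA^T$ is the whole space of quadratic forms, and $H\not\equiv 0$ follows once we produce quadratic forms $Q_1,\dots,Q_s\in\C[\vecx]$ (and a choice of $L,P$) making $f_m=\sum_i Q_i^m$ and $f_{m-1}=\sum_i Q_i^{m-1}$ simultaneously satisfy Conditions~\ref{cond1}--\ref{cond4}. Such forms are supplied by the construction of Section~\ref{sec: random formula} taken at $t=2$: $Q_i=R_i+G_i$ with $R_i=\sum_{j\in[k],\,l\in[b]}y_{ijl}\,z_l$ from a Nisan--Wigderson design on the $\vecy$-variables and $G_i=(\sum_{z\in\vecz_i}z)^2$ from a combinatorial design on the $\vecz$-variables; Propositions~\ref{prop: equality of cond1} and \ref{prop: direct sum of cond1 and cond2} (and the analogous second construction used there for Condition~\ref{cond2}) verify the conditions. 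The same $Q_i$ serve both $f_m$ and $f_{m-1}$ because the only way $k$ enters those statements is through the design sizes and the exponent $e=m-k$ of $G_i$, and $m-\ell=(m-1)-(\ell-1)=2\ell$; one simply re-verifies the design-existence and size inequalities of Proposition~\ref{prop:setting parameters to satisfy a bunch of conditions} with the Gaussian values $n_0=\lfloor n^{1/6}\rfloor$, $m_0=\lfloor n^{1/60}\rfloor$, $m=3\ell$, $e=2\ell$, for which $s\le\exp(n^{1/120})$ suffices. Since all the determinants involved have rational coefficients, ``not identically zero over $\C$'' coincides with ``not identically zero over $\Q$'', so the Schwartz--Zippel step of Step~1 (sampling from $S\subset\Q$) applies without change.

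\textbf{Main obstacle.}~ The skeleton -- reduce to non-degeneracy, compose with $A\mapsto AA^T$, apply Schwartz--Zippel, and realise an arbitrary symmetric witness matrix via Autonne--Takagi -- is routine. The actual work lies in Step~2's bookkeeping: confirming that a \emph{single} family $(Q_i)$ forces non-degeneracy of both $f_m$ and $f_{m-1}$ (i.e.\ that nothing in Section~\ref{sec: random formula} uses $k$ except through the design parameters and $e=m-k$, which agree for $k=\ell$ and $k=\ell-1$), and re-checking the chain of inequalities in Proposition~\ref{prop:setting parameters to satisfy a bunch of conditions} in the $t=2$, $m=3\ell$, $e=2\ell$ regime. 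A minor additional point is to track that Algorithm~\ref{alg:learning mixture of Gaussians} returns parameters of bit-complexity $\poly(b,n,s)$ and runs in time $\poly(n,b,s)$, which it inherits from Algorithm~\ref{alg:learning sums of powers} working over $\Q$.
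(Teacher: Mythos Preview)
Your strategy is correct and structurally matches the paper's: reduce to showing that $q(A_1,\dots,A_s):=p(A_1A_1^T,\dots,A_sA_s^T)$ is not identically zero (where $p$ encodes non-degeneracy in the coefficients of the $Q_i$), then apply Schwartz--Zippel over $S\subset\Q$ and invoke Lemma~\ref{lem:non-degenerate_Gaussian_mixture}. The difference lies in how you certify $q\not\equiv 0$. The paper stays over the reals: it samples the diagonal entries of the $M_i$ from a high interval and the off-diagonal entries from a low interval so that the resulting symmetric matrices are diagonally dominant (hence PSD), applies Schwartz--Zippel on that shifted grid to obtain PSD witnesses $P_i$ with $p(P_1,\dots,P_s)\neq 0$, and then writes $P_i=A_iA_i^T$ by Cholesky to conclude $q\not\equiv 0$. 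You instead pass through $\C$ via Autonne--Takagi: since every complex symmetric matrix is of the form $AA^T$, surjectivity of $A\mapsto AA^T$ over $\C$ turns $p\not\equiv 0$ on symmetric matrices directly into $q\not\equiv 0$ over $\C$, hence over $\Q$ by rationality of the coefficients. Both routes are valid; yours is shorter and avoids the SDD trick, while the paper's is more constructive and never leaves $\Q$.

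One minor wobble in Step~2: your union bound in Step~1 already permits a separate witness for each determinant, so you need not insist that the \emph{same} $Q_i$'s certify both $f_m$ (at $k=\ell$) and $f_{m-1}$ (at $k=\ell-1$). This matters because the Nisan--Wigderson design of Section~\ref{sec: random formula} has $|S_i|=kb$ and $|S_i\cap S_j|\le k-1$, both of which depend on $k$; in particular a design built for $k=\ell$ need not satisfy the intersection bound $\le \ell-2$ that the proof of Proposition~\ref{prop: equality of cond1} would require at $k=\ell-1$. Simply invoking Section~\ref{sec: random formula} twice, once for each value of $k$, avoids the issue and still feeds into your union bound.
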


\begin{proof}
The non-degeneracy condition is given by non-vanishing of a (non-zero) polynomial \linebreak $p(M_1,\ldots, M_s)$ in the entries of symmetric matrices $M_1,\ldots, M_s$ of degree $D$ at most $(ns)^{300}$ (see Section \ref{sec: random formula}). First of all note that there exist PSD matrices $P_1,\ldots, P_s$ s.t. $p(P_1,\ldots, P_s) \neq 0$. This can be seen by choosing $P_1,\ldots, P_s$ to be symmetric diagonally dominant (SDD) and then applying the Schwarz-Zippel lemma. That is choose the diagonal entries of $P_i$'s from the interval $\{n^2 D^2, \ldots, n^2 D + D^2\}$ and the non-diagonal entries from the interval $\{0, \ldots, D^2\}$ (uniformly and independently). Then $p(P_1,\ldots, P_s) \neq 0$ with non-zero probability (by Schwarz-Zippel) and all the $P_i$'s are SDD and hence PSD.

Now consider the polynomial $q(A_1,\ldots, A_s) = p(A_1 A_1^T, \ldots, A_s A_s^T)$. As argued above, $q$ is a non-zero polynomial and of degree at most $2D$. Hence if we choose the entries of $A_i$'s uniformly and indepdently at random from an arbitrary set $S \subset \Q$ of size $|S| \ge (ns)^{600}$, then $q(A_1,\ldots, A_s) \neq 0$ w.p. $1-o(1)$ and hence $\Sigma_i$'s are non-degenerate w.p. $1 - o(1)$. Now the corollary follows from Lemma \ref{lem:non-degenerate_Gaussian_mixture}.
\end{proof}
\section{Lower bound for homogeneous $\Sigma \Pi \Sigma \Pi^{[t]}$ circuits using $\APP$} \label{sec:lowerbound}
We prove Theorem \ref{thm:lower bound} in this section. The idea is to choose two parameters $k$ and $n_0$ appropriately such that the measure $\APP_{k,n_0}$ of a term of a homogeneous $\Sigma \Pi \Sigma \Pi^{[t]}(s)$ circuit is ``small''. We then construct an explicit polynomial $f_{n,d}$ such that $\APP_{k,n_0}(f_{n,d})$ is ``high'' which leads to a lower bound on $s$. It is the choice of the measure $\APP$ that is novel in this lower bound proof. The missing proofs of the technical statements can be found in Section \ref{sec:proofs from sec lowerbound} of the appendix.

\subsection{High $t$ case} \label{sec:high t}
Let $n,d,t \in \N$ such that $n \geq d^2$ and $\ln \frac{n}{d} \leq t \leq \frac{d}{4\cdot e^{10} \cdot \ln d}$. We set a few parameters as follows:
\begin{itemize}
\item (Order of the derivatives)~ $k = \lfloor \delta \cdot \frac{d}{t} \rfloor$, where $\delta = \frac{1}{4e^{10}}$,
\item (Number of variables after affine projection)~ $n_0 = \lfloor c \cdot k \rfloor$, where $c = \frac{3}{4} \cdot \frac{\ln \frac{n}{k}}{\ln \frac{d}{k}}$. 
\end{itemize}

\begin{observation} \label{obs:parameters}
If the parameters $k, c$ and $n_0$ are chosen as above then $k \geq \lfloor \ln d \rfloor$, $c \geq \frac{3}{2}$ and $n_0 \leq \frac{d}{\ln \ln d}$.
\end{observation}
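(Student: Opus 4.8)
The plan is to verify the three claimed inequalities one at a time; each is an elementary estimate that uses only the standing hypotheses $n \geq d^2$ and $\ln\frac nd \leq t \leq \frac{d}{4e^{10}\ln d}$. I first note that these hypotheses are only consistent when $d$ is large: since $n \geq d^2$ gives $\ln\frac nd \geq \ln d$, the interval $[\ln\frac nd,\ \frac{d}{4e^{10}\ln d}]$ being nonempty forces $\ln d \leq \frac{d}{4e^{10}\ln d}$, i.e.\ $d \geq 4e^{10}(\ln d)^2$. In particular $\ln\ln d \geq 1$, which I will use freely, and also $\ln d \geq \ln\ln d$.

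First I would handle $k \geq \lfloor\ln d\rfloor$. From $t \leq \frac{d}{4e^{10}\ln d} = \frac{\delta d}{\ln d}$ we get $\frac{\delta d}{t}\geq \ln d$, and since $x \mapsto \lfloor x\rfloor$ is monotone, $k = \lfloor \delta d/t\rfloor \geq \lfloor\ln d\rfloor$. Next, for $c \geq \tfrac32$: since $k \leq \delta d/\ln d < d$ the quantity $\ln\frac dk$ is positive, so $c = \tfrac34\cdot\frac{\ln(n/k)}{\ln(d/k)} \geq \tfrac32$ is equivalent to $\ln\frac nk \geq 2\ln\frac dk$, i.e.\ to $n \geq \frac{d^2}{k}$; this holds because $n \geq d^2$ and $k \geq \lfloor \ln d\rfloor \geq 1$.

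The last bound, $n_0 \leq \frac{d}{\ln\ln d}$, is the only one needing more than one line. I would write $\ln\frac nk = \ln\frac dk + \ln\frac nd$, so that
\[ ck = \tfrac34 k + \tfrac34\cdot\frac{k\,\ln(n/d)}{\ln(d/k)}. \]
For the second summand, $k \leq \delta d/t$ and $\ln\frac nd \leq t$ give $k\ln\frac nd \leq \delta d$, while $t \geq \ln\frac nd \geq \ln d$ gives $k \leq \delta d/\ln d$, hence $\frac dk \geq \frac{\ln d}{\delta} = 4e^{10}\ln d$ and $\ln\frac dk \geq \ln\ln d$; so the second summand is at most $\tfrac34\cdot\frac{\delta d}{\ln\ln d}$. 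For the first summand, $k \leq \delta d/\ln d \leq \delta d/\ln\ln d$. Combining,
\[ ck \leq \tfrac34\cdot\frac{\delta d}{\ln\ln d} + \tfrac34\cdot\frac{\delta d}{\ln\ln d} = \tfrac{3\delta}{2}\cdot\frac{d}{\ln\ln d} < \frac{d}{\ln\ln d}, \]
since $\delta = \frac{1}{4e^{10}} < \tfrac23$; therefore $n_0 = \lfloor ck\rfloor \leq ck < \frac{d}{\ln\ln d}$.

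I do not expect any real obstacle here — this is a routine computation. The only point that needs a little attention is bookkeeping the slack in the final display: one must check that the explicit constant $\delta = 1/(4e^{10})$ is small enough that both contributions to $ck$ fit under $\frac{d}{\ln\ln d}$, which they do with enormous room to spare (the final bound is in fact $\tfrac{3}{8e^{10}}\cdot\frac{d}{\ln\ln d}$).
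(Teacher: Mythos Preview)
Your proof is correct and follows essentially the same approach as the paper: both arguments rewrite $c = \tfrac34\bigl(1 + \tfrac{\ln(n/d)}{\ln(d/k)}\bigr)$ and then bound $ck$ using $k \leq \delta d/t$, $t \geq \ln(n/d)$, and $\ln(d/k) \geq \ln\ln d$, arriving at the same final estimate $ck \leq \tfrac{3\delta}{2}\cdot \tfrac{d}{\ln\ln d}$. The only cosmetic difference is that the paper keeps the product together and substitutes $t \to \ln(n/d)$ at once (noting the expression is monotone in $t$), whereas you split $ck$ into two summands and bound each separately; the ingredients and the resulting constants are identical.
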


\begin{observation} \label{obs:trivial upper bound}
Let $f \in \F[\vecx]$ be a homogeneous $n$-variate degree-$d$ polynomial. Then, $$\APP_{k,n_0}(f) \leq {d-k+n_0-1 \choose n_0-1}.$$
\end{observation}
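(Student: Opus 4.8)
The plan is to observe that the quantity being maximized is the dimension of a subspace of a small, explicitly describable vector space, so the bound follows by pure dimension counting. First I would fix an arbitrary tuple $L = (\ell_1(\vecz), \ldots, \ell_n(\vecz))$ of linear forms in the variables $\vecz = (z_1, \ldots, z_{n_0})$ and analyze $\dim \app{L}{\vecx}{k}{f}$ for this $L$; since the final bound is over all such $L$, an upper bound uniform in $L$ suffices.

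The key steps are as follows. Since $f$ is homogeneous of degree $d$, every $k$-th order partial derivative $\partial_\vecx^{\alpha} f$ with $|\alpha| = k$ is either zero or homogeneous of degree $d-k$ in $\vecx$. Because each $\ell_j(\vecz)$ is a (homogeneous) linear form, the substitution map $\pi_L$ sends a homogeneous degree-$(d-k)$ polynomial in $\vecx$ to a homogeneous degree-$(d-k)$ polynomial in the $n_0$ variables $\vecz$ (or to $0$). Hence every element of $\pi_L\inparen{\der{\vecx}{k}{f}}$, and therefore every element of its $\F$-linear span $\app{L}{\vecx}{k}{f}$, lies in the space $\F[\vecz]_{d-k}$ of homogeneous degree-$(d-k)$ polynomials in $n_0$ variables. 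Consequently
\[
\dim \app{L}{\vecx}{k}{f} \;\le\; \dim_\F \F[\vecz]_{d-k} \;=\; \binom{d-k+n_0-1}{n_0-1},
\]
the last equality being the standard count of monomials of degree $d-k$ in $n_0$ variables. Since this holds for every $L$, taking the maximum over $L$ in the definition of $\APP_{k,n_0}(f)$ preserves the inequality, which is exactly the claim.

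There is essentially no obstacle here: the only point requiring (minor) care is that $\pi_L$ genuinely preserves homogeneity and degree, which is immediate because $L$ consists of linear forms rather than general affine forms — if affine forms were allowed one would instead land in the space of polynomials of degree \emph{at most} $d-k$, giving the slightly larger bound $\binom{d-k+n_0}{n_0}$, but the linear-forms convention adopted in this section makes the stated homogeneous bound hold. I would present this as a two-line argument rather than a formal display-heavy proof.
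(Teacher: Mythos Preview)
Your proposal is correct and matches the paper's own proof essentially line for line: fix $L$, observe that each element of $\pi_L(\partial_\vecx^k f)$ is homogeneous of degree $d-k$ in the $n_0$ variables $\vecz$, and bound by the number of such monomials. The paper states this in two sentences; your additional remark about affine versus linear forms is a nice clarification but not needed for the argument.
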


In Section \ref{sec:hard polynomial}, we construct an explicit family of homogeneous, multilinear polynomials $\{f_{n,d,t}\}_{n,d}$ in $\VNP$ such that $\APP_{k,n_0}(f_{n,d,t})$ equals the above upper bound (see Proposition \ref{prop:hard poly lower bound}).  \\

\textbf{Upper bounding the measure for a homogeneous $\Sigma \Pi \Sigma \Pi^{[t]}$ circuit.}~ Let $C$ be a polynomial computed by a homogeneous $\Sigma \Pi \Sigma \Pi^{[t]}(s)$ circuit, i.e.,
\begin{equation} \label{eqn:Sigma Pi Sigma Pi t formula}
C = Q_{11}Q_{12} \cdots Q_{1m_1} ~+~ \ldots ~+~ Q_{s1}Q_{s2} \cdots Q_{sm_s},
\end{equation}
where every $Q_{ij}$ is a homogeneous polynomial of degree at most $t$. By multiplying out factors if necessary, we can assume that all but one of the factors of $T_i = Q_{i1}Q_{i2} \cdots Q_{im_i}$ have degree in $[t,2t]$. So, $m_i \leq m := \lfloor \frac{d}{t} \rfloor + 1$ for all $i \in [s]$. By subadditivity of the measure, we infer the following:

\begin{proposition} \label{prop:circuit upper bound}
$\APP_{k,n_0}(C) \leq s \cdot {m \choose k} \cdot {n_0 + 2kt \choose n_0}$.
\end{proposition}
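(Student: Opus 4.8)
The plan is to reduce the bound on $C$ to a bound on a single multiplication term using subadditivity of the measure, and then to estimate one term by expanding its partial derivatives via the Leibniz rule and counting monomials after the affine projection.

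First I would apply the linearity of the measure, Equation~\eqref{eqn:linearity of the measure}, to get $\APP_{k,n_0}(C)\le\sum_{i=1}^s\APP_{k,n_0}(T_i)$, where $T_i=Q_{i1}\cdots Q_{im_i}$. So it suffices to show that every term $T=Q_1\cdots Q_{m'}$ — with $m'\le m$ and every $Q_j$ homogeneous of degree at most $2t$, both guaranteed by the preprocessing described just before the statement — satisfies $\APP_{k,n_0}(T)\le{m \choose k}{n_0 + 2kt \choose n_0}$.

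To analyse a single $T$, fix any tuple $L$ of linear forms in $\vecz=(z_1,\dots,z_{n_0})$ and recall that $\pi_L\colon\F[\vecx]\to\F[\vecz]$ is a ring homomorphism. For $|\alpha|=k$, the generalized Leibniz rule writes $\partial_\vecx^\alpha T$ as an $\F$-linear combination of products $\prod_{j=1}^{m'}\partial_\vecx^{\alpha_j}Q_j$ with $\sum_j|\alpha_j|=k$; in any such product at most $k$ of the factors are differentiated. Since each $Q_j$ has degree $\le 2t$ while $T$ has degree $d$, we have $m'\ge d/(2t)>k$, so the set of differentiated factors can be enlarged to a set $S\subseteq[m']$ of size exactly $k$, giving $\prod_{j}\partial_\vecx^{\alpha_j}Q_j=\bigl(\prod_{j\notin S}Q_j\bigr)\cdot M_S$ with $M_S:=\prod_{j\in S}\partial_\vecx^{\alpha_j}Q_j$ homogeneous of degree $\sum_{j\in S}\deg Q_j-k\le 2kt$. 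Applying $\pi_L$ and using that it is a ring homomorphism, this shows that $\langle\pi_L(\der{\vecx}{k}{T})\rangle$ is contained in the sum, over the ${m' \choose k}$ subsets $S$ of size $k$, of $\pi_L\bigl(\prod_{j\notin S}Q_j\bigr)$ times the $\F$-span of all $\vecz$-monomials of degree at most $2kt$. Each such summand has dimension at most ${n_0 + 2kt \choose n_0}$ (the number of $\vecz$-monomials of degree $\le 2kt$ in $n_0$ variables), so $\dim\langle\pi_L(\der{\vecx}{k}{T})\rangle\le{m' \choose k}{n_0 + 2kt \choose n_0}\le{m \choose k}{n_0 + 2kt \choose n_0}$. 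Since this holds for every $L$, it bounds $\APP_{k,n_0}(T)$, and combined with the reduction above it proves the proposition.

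The only step that needs a moment's care is the normalization making the set $S$ of differentiated factors have size exactly $k$: one must note $k\le m'$, which follows from the degree bound $\deg Q_j\le 2t$ produced by the preprocessing, so that the complementary product $\prod_{j\notin S}Q_j$ makes sense, the count of relevant subsets is the clean ${m' \choose k}$ rather than $\sum_{i\le k}{m' \choose i}$, and $\deg M_S\le 2kt$ holds uniformly. Everything else is the routine Leibniz expansion and the count of monomials of bounded degree.
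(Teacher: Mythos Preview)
Your proposal is correct and follows essentially the same approach as the paper: use subadditivity of $\APP$ to reduce to a single term $T_i$, then apply the Leibniz rule to see that $\langle\pi_L(\partial_{\vecx}^k T_i)\rangle$ lies in the span of $\vecz^{\le 2kt}\cdot\{\prod_{j\notin S}\pi_L(Q_{ij}):S\in\binom{[m_i]}{k}\}$, and count. Your extra care in normalizing $|S|=k$ (via $m'\ge d/(2t)>k$) makes explicit a point the paper leaves implicit, but otherwise the arguments are the same.
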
   

Putting Propositions \ref{prop:circuit upper bound} and \ref{prop:hard poly lower bound} together, we get the desired lower bound.

\begin{proposition} \label{prop:lower bound on top fanin high t}
Any homogeneous $\Sigma \Pi \Sigma \Pi^{[t]}(s)$ circuit computing $f_{n,d,t}$ must satisfy
$$s \geq \frac{{d-k+n_0-1 \choose n_0-1}}{{m \choose k} \cdot {n_0 + 2kt \choose n_0}} = \left(\frac{n}{d}\right)^{\Omega\left(\frac{d}{t \ln t}\right)}.$$
\end{proposition}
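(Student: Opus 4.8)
The plan is to combine the single‑circuit upper bound of Proposition~\ref{prop:circuit upper bound} with the matching measure lower bound for the explicit family (Proposition~\ref{prop:hard poly lower bound}, proved in Section~\ref{sec:hard polynomial}), and then to estimate the resulting ratio of binomial coefficients. Since $C$ computes $f_{n,d,t}$, Proposition~\ref{prop:circuit upper bound} gives $\APP_{k,n_0}(f_{n,d,t}) \le s\cdot\binom mk\cdot\binom{n_0+2kt}{n_0}$, while Proposition~\ref{prop:hard poly lower bound} upgrades the trivial bound of Observation~\ref{obs:trivial upper bound} to the equality $\APP_{k,n_0}(f_{n,d,t}) = \binom{d-k+n_0-1}{n_0-1}$. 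Dividing these yields the first inequality,
\[
s \;\ge\; R \;:=\; \frac{\binom{d-k+n_0-1}{n_0-1}}{\binom mk\,\binom{n_0+2kt}{n_0}},
\]
so it only remains to prove $R \ge (n/d)^{\Omega(d/(t\ln t))}$.

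For this I would substitute the parameter choices $k=\lfloor\delta d/t\rfloor$, $n_0=\lfloor ck\rfloor$, $m=\lfloor d/t\rfloor+1$ (with $\delta=1/(4e^{10})$, $c=\tfrac34\ln(n/k)/\ln(d/k)$) and bound each of the three binomials by the elementary estimates $(a/b)^b\le\binom ab\le(ea/b)^b$. Observation~\ref{obs:parameters} is exactly what is needed to control the floors and keep everything non‑degenerate: it gives $k\ge\lfloor\ln d\rfloor$ (so $k\ge1$ and rounding is lower order), $c\ge\tfrac32$, and $n_0\le d/\ln\ln d<d-k$; moreover in the high‑$t$ regime $d/t\ge 4e^{10}\ln d$, which forces $m/k=(1+o(1))/\delta$ and $d/k\ge t/\delta$. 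Taking logarithms and writing $n_0+2kt=k(c+2t)$, the $\ln t$‑contributions coming from $\binom{d-k+n_0-1}{n_0-1}\approx(d/n_0)^{n_0}$ and from $\binom{n_0+2kt}{n_0}\approx(2t/c)^{n_0}$ cancel (both are degree‑$\Theta(\cdot)$ quantities in $n_0\approx ck$ variables, and $\tfrac{d/n_0}{2t/c}=\tfrac1{2\delta}$ is $t$‑free), leaving
\[
\ln R \;\ge\; n_0\ln\tfrac1{2\delta}\;-\;k\ln\tfrac1\delta\;-\;O(\log n)
\;=\;\Bigl(c\ln\tfrac1{2\delta}-\ln\tfrac1\delta\Bigr)k\;-\;O(\log n).
\]
The value $\delta=1/(4e^{10})$ is picked precisely so that $\ln\tfrac1{2\delta}=10+\ln2$ and $\ln\tfrac1\delta=10+\ln4$ satisfy $c\ln\tfrac1{2\delta}-\ln\tfrac1\delta\ge\bigl(\ln\tfrac1{2\delta}-\tfrac23\ln\tfrac1\delta\bigr)c>0$ once $c\ge\tfrac32$ is used (here $\tfrac1c\le\tfrac23$), hence $\ln R \ge \Omega(1)\cdot ck - O(\log n) = \Omega(ck)$.

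It remains to recognise $\Omega(ck)$ as $\Omega\!\bigl(\tfrac{d}{t\ln t}\bigr)\ln\tfrac nd$. We have $k=\Theta(d/t)$, so $n_0=ck=\Theta(cd/t)$; and $\ln(d/k)=\ln(t/\delta)+o(1)=\Theta(\ln t)$, using $t\ge\ln(n/d)\ge\ln d$ to ensure $\ln t=\Omega(1)$. Consequently
\[
c \;=\; \tfrac34\Bigl(1+\frac{\ln(n/d)}{\ln(d/k)}\Bigr)\;=\;\Theta\!\Bigl(\frac{\ln(n/d)}{\ln t}\Bigr),
\]
where the lower bound on $c$ uses $\ln(n/d)\ge\ln d\ge\ln(d/k)$. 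Therefore $\ln R=\Omega(cd/t)=\Omega\!\bigl(\tfrac{d}{t\ln t}\bigr)\ln\tfrac nd$, i.e.\ $R\ge(n/d)^{\Omega(d/(t\ln t))}$, as claimed.

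The main obstacle is the constant‑chasing in the second paragraph: one must verify that after $\binom{d-k+n_0-1}{n_0-1}$ cancels the $(n_0+2kt)$‑factor of the denominator down to roughly $(1/2\delta)^{n_0}$, the residual $(1/\delta)^{k}$ coming from $\binom mk$, together with all the $e$‑factors, the floors in $k,n_0,m$, and the $n_0$‑versus‑$(n_0-1)$ discrepancies, is genuinely dominated by the gain $\bigl(\ln\tfrac1{2\delta}-\tfrac23\ln\tfrac1\delta\bigr)ck>0$ — this is where the specific value $\delta=1/(4e^{10})$ does its work. Everything else (the substitution, the binomial estimates, and the final unpacking of $c$) is mechanical once Observation~\ref{obs:parameters} and the regime bound $d/t\ge 4e^{10}\ln d$ are in hand.
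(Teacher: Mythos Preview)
Your proposal is correct and follows essentially the same approach as the paper: combine Propositions~\ref{prop:circuit upper bound} and~\ref{prop:hard poly lower bound} to get the ratio, then estimate the three binomials via $(a/b)^b\le\binom ab\le(ea/b)^b$, substitute $k\approx\delta d/t$, $n_0\approx ck$, $m\approx d/t$, and verify that the leading term is $\Omega(ck)=\Omega\bigl(\tfrac{d}{t\ln t}\ln\tfrac nd\bigr)$. The paper carries this out as an explicit chain of numerical inequalities (arriving at $e^{9n_0}/(4.01\,e^{11})^{k}$ and then using $c\ge\tfrac32$), whereas you package the same cancellation more conceptually; just be aware that your displayed bound $\ln R\ge n_0\ln\tfrac1{2\delta}-k\ln\tfrac1\delta-O(\log n)$ has silently dropped the $e$-factors from the binomial upper bounds (the honest constants are $\ln\tfrac1{2e\delta}$ and $\ln\tfrac e\delta$), though as you note in your final paragraph the inequality $c\ge\tfrac32$ still leaves enough room once those are restored.
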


\textbf{Remark.} Although, in our presentation, $f_{n,d,t}$ depends on $t$, it is easy to get rid of $t$ from the definition of the hard polynomial by using a simple interpolation trick (as in Lemma 14 of \cite{KayalSS14}).

\subsection{Low $t$ case} \label{sec:low t}
Let $n,d,t \in \N$ such that $n \geq d^{20}$ and $1 \leq t \leq \min \left\{ \frac{\ln n}{6e \cdot \ln d},~ d \right\}$. Set the parameters $k,n_0$ as follows:
\begin{itemize}
\item (Order of the derivatives)~ $k = \lceil \delta \cdot \frac{d}{t} \rceil$, where $\delta = \frac{1}{3e}$,
\item (Number of variables after affine projection)~ $n_0 = \lceil n^{\frac{k}{d}} \rceil$.
\end{itemize}
The hard polynomial $f_{n,d,t}$ is defined in Section \ref{sec:hard polynomial}. Propositions \ref{prop:circuit upper bound} and \ref{prop:hard poly lower bound} imply the following:

\begin{proposition} \label{prop:lower bound on top fanin low t}
Any homogeneous $\Sigma \Pi \Sigma \Pi^{[t]}(s)$ circuit computing $f_{n,d,t}$ must satisfy
$$s \geq \frac{{d-k+n_0-1 \choose n_0-1}}{{m \choose k} \cdot {n_0 + 2kt \choose n_0}} = n^{\Omega\left(\frac{d}{t}\right)}.$$
\end{proposition}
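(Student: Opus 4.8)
\emph{Overview of the plan.} Proposition~\ref{prop:lower bound on top fanin low t} splits into a one-line reduction using the two preceding propositions, and then a purely combinatorial estimate of the resulting ratio of binomial coefficients under the chosen parameters $k=\lceil\delta d/t\rceil$ with $\delta=\tfrac1{3e}$, $n_0=\lceil n^{k/d}\rceil$, $m=\lfloor d/t\rfloor+1$.

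\emph{Stage 1: the reduction.} Since the homogeneous $\Sigma\Pi\Sigma\Pi^{[t]}(s)$ circuit computes $f_{n,d,t}$, Proposition~\ref{prop:circuit upper bound} gives $\APP_{k,n_0}(f_{n,d,t})\le s\cdot\binom{m}{k}\binom{n_0+2kt}{n_0}$, while Proposition~\ref{prop:hard poly lower bound} states that $\APP_{k,n_0}(f_{n,d,t})$ meets the trivial bound $\binom{d-k+n_0-1}{n_0-1}$ of Observation~\ref{obs:trivial upper bound}. Chaining these, $\binom{d-k+n_0-1}{n_0-1}\le s\binom{m}{k}\binom{n_0+2kt}{n_0}$, which is exactly the first equality of the proposition; the task is to show this ratio is $n^{\Omega(d/t)}$.

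\emph{Stage 2: the estimate.} I would argue on a logarithmic scale, first recording that $n_0$ is large: from $t\le\tfrac{\ln n}{6e\ln d}$ one gets $\tfrac kd\ge\tfrac\delta t\ge\tfrac{2\ln d}{\ln n}$ (using $6e\delta=2$), hence $n_0\ge n^{k/d}\ge d^2$; in particular $n_0\gg 2kt$ and $n_0\gg d-k$, since $2kt,\,d-k=O(d)$. Using $\binom{a+b}{b}\ge(\tfrac{a+b}{b})^b$ with $b=d-k$ and $\ln n_0\ge\tfrac kd\ln n\ge 2\ln d\ge 2\ln(d-k)$, the numerator obeys $\ln\binom{d-k+n_0-1}{n_0-1}=\ln\binom{d-k+n_0-1}{d-k}\ge(d-k)\ln\tfrac{n_0}{d-k}\ge\tfrac12(d-k)\ln n_0\ge\tfrac{(d-k)k}{2d}\ln n$. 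For the denominator, $\ln\binom mk\le m\le\tfrac dt+1$, and since $n_0\ge 2kt$, $\ln\binom{n_0+2kt}{n_0}=\ln\binom{n_0+2kt}{2kt}\le 2kt\ln\tfrac{e(n_0+2kt)}{2kt}\le 2kt(\ln n_0+O(1))\le\tfrac{2k^2t}{d}\ln n+O(d)$ (using $kt=O(d)$). Subtracting, the logarithm of the ratio is at least
\[
\frac{k\ln n}{2d}\bigl(d-k-4kt\bigr)-O(d).
\]
Now $k\le\tfrac{\delta d}{t}+1$ and $t\ge1$ give $k+4kt\le(5\delta+o(1))d$, and $5\delta=\tfrac{5}{3e}<1$, so $d-k-4kt=\Omega(d)$; thus the quantity above is $\Omega(k\ln n)-O(d)=\Omega\bigl(\tfrac dt\ln n\bigr)-O(d)$. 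Finally $\tfrac dt\ln n=\omega(d)$ — either $d\to\infty$, in which case $\tfrac{\ln n}{t}\ge 6e\ln d\to\infty$, or $d=O(1)$, in which case $n\ge d^{20}$ forces $\ln n\to\infty$ while the $O(d)$ term stays bounded — so the logarithm of the ratio is $\Omega\bigl(\tfrac dt\ln n\bigr)$, i.e.\ $s\ge n^{\Omega(d/t)}$.

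\emph{Main obstacle.} The delicate point is that the numerator and denominator each contribute a term of the same order $\tfrac dt\ln n$, so constants cannot be handled loosely: the argument really turns on $1-5\delta>0$, i.e.\ on the specific value $\delta=\tfrac1{3e}$. A secondary (routine) check is that the auxiliary estimates $d-k=\Omega(d)$, $n_0\ge d^2$ and $n_0\ge 2kt$ hold uniformly over the whole stated range $n\ge d^{20}$, $1\le t\le\min\{\tfrac{\ln n}{6e\ln d},\,d\}$, including the boundary $t=\Theta(\ln n/\ln d)$ and the corner where $t$ is comparable to $d$ and $k=1$.
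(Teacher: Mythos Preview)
Your proposal is correct and takes essentially the same approach as the paper. The one cosmetic difference is that the paper telescopes the quotient $\binom{n_0+d-k}{n_0}\big/\binom{n_0+2kt}{n_0}$ directly into a product of $d-k-2kt$ factors, each at least $n_0/d\ge\sqrt{n_0}\ge n^{k/(2d)}$, whereas you bound numerator and denominator separately, lose a factor of~$2$ in the numerator estimate, and land on the exponent $d-k-4kt$ (so your argument needs $5\delta<1$ rather than the paper's implicit $3\delta<1$, both of which hold for $\delta=\tfrac1{3e}$).
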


\subsection{The hard polynomial} \label{sec:hard polynomial}
Let the parameters $n,d,t,k,n_0$ be as in either Section \ref{sec:high t} or Section \ref{sec:low t}. In this section, we describe the construction of the hard polynomial $f_{n,d,t}$. Let $n_2 := n_0(d-k)$ and $n_1 := n - n_2$. Polynomial $f_{n,d,t}$ is a homogeneous, multilinear polynomial in two sets of variables $\vecy$ and $\vecu$ such that $|\vecy| = n_1$ and $|\vecu| = n_2$. Further, $\vecu = \vecu_1 \uplus \ldots \uplus \vecu_{d-k}$, where each set $\vecu_i$ has $n_0$ variables $\{u_{i,1}, \ldots, u_{i,n_0}\}$. \\

Consider all degree-$(d-k)$ set-multilinear monomials in the $\vecu$-variables with respect to the partition $\vecu = \vecu_1 \uplus \ldots \uplus \vecu_{d-k}$. Such a set-mulilinear monomial $\beta = u_{1,j_1}u_{2,j_2} \cdots u_{d-k,j_{d-k}}$ can be naturally identified with a function 
\begin{eqnarray*}
\phi_{\beta}: [d-k] &\rightarrow& [n_0] \\
							i &\mapsto& j_i.
\end{eqnarray*} 
We say $\phi_{\beta}$ is \emph{non-decreasing} if $\phi_{\beta}(i) \leq \phi_{\beta}(i+1)$ for all $i \in [d-k-1]$. Let $B := \{\beta: \phi_{\beta} \text{ is non-decreasing}\}$ and $\vecz = \{z_1, \ldots, z_{n_0}\}$ be a set of $n_0$ variables. Observe that there is a one-to-one correspondence between monomials in $B$ and $\vecz$-monomials of degree $d-k$ which is given by the projection map
\begin{eqnarray} \label{eqn:projection from u to z}
\pi: \vecu &\rightarrow& \vecz \nonumber \\
	 u_{i,j}	&\mapsto& z_j.
\end{eqnarray}
Hence, $\pi(B) = \vecz^{d-k}$ and $|B| = {d-k + n_0 -1 \choose n_0 -1}$. Order the monomials in $B$ lexicographically and call them $\left(\beta_1, \ldots, \beta_{d-k + n_0 -1 \choose n_0 -1} \right)$. There are ${n_1 \choose k}$ multilinear monomials in $\vecy$-variables of degree $k$.

\begin{proposition} \label{prop:number of y monomials is greater}
${n_1 \choose k} \geq {d-k+n_0-1 \choose n_0-1}$.
\end{proposition}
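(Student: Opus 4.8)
The plan is to verify the inequality separately in the two regimes of Theorem \ref{thm:lower bound}, using only the elementary estimates ${a\choose b}\ge (a/b)^b$ and ${a\choose b}\le (ea/b)^b$ together with the explicit choices of $k$ and $n_0$ made in Sections \ref{sec:high t} and \ref{sec:low t}. Throughout I will silently use that the hypotheses force $d$ to exceed an absolute constant (in the high-$t$ case, $t\ge 1$ and $t\le \frac{d}{4e^{10}\ln d}$ force $d/\ln d\ge 4e^{10}$; for very small $d$ the hard polynomial is of no interest anyway).

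\textbf{Step 1 (a clean lower bound on $n_1$).} I would first show $n_2 := n_0(d-k)\le n/2$, i.e. $n_1\ge n/2$. In the high-$t$ case Observation \ref{obs:parameters} gives $n_0\le d/\ln\ln d$, so $n_2\le d^2/\ln\ln d\le n/\ln\ln d\le n/2$ (using $n\ge d^2$ and $\ln\ln d\ge 2$). In the low-$t$ case $n_0=\lceil n^{k/d}\rceil\le 2n^{k/d}$ with $k/d\le \delta+1/d$ bounded safely below $1/2$, and $d\le n^{1/20}$, so $n_2\le 2n^{k/d}\cdot n^{1/20}\le n/2$. Hence ${n_1\choose k}\ge (n_1/k)^k\ge (n/(2k))^k$.

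\textbf{Step 2 (upper bound on the right side).} I would bound ${d-k+n_0-1\choose n_0-1}\le\bigl(e\,(d-k+n_0-1)/b\bigr)^b$ with $b=\min(n_0-1,\,d-k)$ — this choice is essential, since the naive choice $b=n_0-1$ is far too lossy when $n_0$ and $d$ differ greatly. In the high-$t$ case $n_0\le d/\ln\ln d<d-k$, so $b=n_0-1$; using $n_0-1\le ck$, $d-k+n_0-1<2d$, and the monotonicity of $x\mapsto(2ed/x)^x$ on $(0,2d)$, this gives ${d-k+n_0-1\choose n_0-1}\le (2ed/(ck))^{ck}$. In the low-$t$ case $n_0\ge n^{k/d}\ge d^2>d-k$ (because $k/d\ge\delta/t\ge 2\ln d/\ln n$ by the bound $t\le\frac{\ln n}{6e\ln d}$, whence $n^{k/d}\ge d^2$), so $b=d-k$; using $d-k+n_0-1<2n_0\le 4n^{k/d}$ gives ${d-k+n_0-1\choose n_0-1}\le \bigl(4e\,n^{k/d}/(d-k)\bigr)^{d-k}$.

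\textbf{Step 3 (the logarithmic inequality).} Comparing the two sides after taking logarithms: in the high-$t$ case it suffices, taking $k$-th roots, to check $\ln\tfrac{n}{2k}\ge c\ln\tfrac{2e}{c}+c\ln\tfrac{d}{k}$; the defining relation $c=\tfrac34\,\ln(n/k)/\ln(d/k)$ makes the last term equal to $\tfrac34\ln(n/k)$, and $c\ln(2e/c)\le 2$ uniformly in $c$, so it reduces to $\ln(n/k)\ge 8+4\ln 2$, which holds since $k\le\delta d\le d/(4e^{10})$ gives $n/k\ge 4e^{10}d$. In the low-$t$ case, dividing by $k$ and discarding the negative term $(d-k)\ln\tfrac{4e}{d-k}\le 0$ (as $d-k\ge 11$ for $d$ past a constant), the inequality reduces to $\tfrac{k}{d}\ln n\ge\ln(2k)$, which holds because $\tfrac{k}{d}\ln n\ge 2\ln d$ (from $k/d\ge 2\ln d/\ln n$) while $\ln(2k)\le\ln(2d)<2\ln d$.

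\textbf{Main obstacle.} There is no conceptual difficulty; the proof is pure bookkeeping. The only place requiring care is Step 2: one must estimate the binomial coefficient as ${N\choose \min(n_0-1,\,d-k)}$ rather than ${N\choose n_0-1}$, since the lossy choice fails badly in the low-$t$ regime where $n_0\ge d^2$. Once that is done, tracking floors and ceilings and confirming that the various ``$n,d$ large enough'' side conditions follow from $n\ge d^2$ (resp. $n\ge d^{20}$) and the stated ranges of $t$ is routine.
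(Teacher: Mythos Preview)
Your proof is correct and follows essentially the same route as the paper: split into the high-$t$ and low-$t$ regimes, use the elementary estimates $(a/b)^b\le\binom{a}{b}\le(ea/b)^b$, and in the low-$t$ case read $\binom{d-k+n_0-1}{n_0-1}$ as $\binom{d-k+n_0-1}{d-k}$ (the ``switch'' you flag in Step~2), since $n_0\ge d^2>d-k$. The only cosmetic difference is that you first establish $n_1\ge n/2$ uniformly and then compare logarithms, whereas the paper works directly with $n_1=n-n_0(d-k)$ in each case and matches both sides to an intermediate quantity (e.g.\ $(n/k)^{0.76k}$ in the high-$t$ case); neither choice buys anything substantive over the other.
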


Order the multilinear degree-$k$ $\vecy$-monomials lexicographically and call the first ${d-k+n_0-1 \choose n_0-1}$ of them $\left(\mu_1, \ldots, \mu_{d-k + n_0 -1 \choose n_0 -1} \right)$. Define
$$f_{n,d,t}(\vecy,\vecu) := \sum_{i \in {d-k + n_0 -1 \choose n_0 -1}} {\mu_i \cdot \beta_i}.$$

It is an easy exercise to show that the family of polynomials defined by $f_{n,d,t}$ is in $\VNP$ as the coefficient of any given monomial in $f_{n,d,t}$ can be computed efficiently.

\begin{proposition} \label{prop:hard poly lower bound}
$\APP_{k,n_0}(f_{n,d,t}) = {d-k+n_0-1 \choose n_0-1}$.
\end{proposition}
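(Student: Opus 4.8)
The plan is to combine the trivial upper bound with a matching lower bound witnessed by a single, carefully chosen affine projection. The upper bound $\APP_{k,n_0}(f_{n,d,t}) \le \binom{d-k+n_0-1}{n_0-1}$ is exactly Observation \ref{obs:trivial upper bound}: since $f_{n,d,t}$ is homogeneous of degree $d$, after taking any $k$-th order derivative and projecting to the $n_0$ variables $\vecz$ we land inside the space of degree-$(d-k)$ polynomials in $\vecz$, whose dimension is $\binom{d-k+n_0-1}{n_0-1}$. So the entire content is the reverse inequality, i.e., producing one $n$-tuple $L$ of linear forms in $\vecz = (z_1, \ldots, z_{n_0})$ with $\dim \app{L}{\vecx}{k}{f_{n,d,t}} \ge \binom{d-k+n_0-1}{n_0-1}$.

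First I would write down the right $L$: send every $\vecy$-variable to $0$, and send $u_{i,j} \mapsto z_j$ for all $i \in [d-k]$ and $j \in [n_0]$ — that is, on the $\vecu$-variables, $L$ agrees with the projection $\pi$ of \eqref{eqn:projection from u to z}. Set $N := \binom{d-k+n_0-1}{n_0-1}$, so that $f_{n,d,t} = \sum_{i \in [N]} \mu_i \beta_i$. The key local computation: for each $j \in [N]$, let $\partial_{\mu_j}$ be the first-order derivative taken with respect to each of the $k$ distinct $\vecy$-variables occurring in $\mu_j$; this operator belongs to $\der{\vecx}{k}{f_{n,d,t}}$. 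Applying it and using that the $\mu_i$ are pairwise distinct multilinear monomials all of degree $k$ (hence $\partial_{\mu_j}\mu_i = 0$ for $i \ne j$, $\partial_{\mu_j}\mu_j = 1$) while each $\beta_i$ is $\vecy$-free, I get $\partial_{\mu_j} f_{n,d,t} = \beta_j$. Thus $\beta_1, \ldots, \beta_N$ all lie in $\der{\vecx}{k}{f_{n,d,t}}$.

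Now I would push these through $\pi_L$: since $\beta_j$ is $\vecy$-free and set-multilinear in the $\vecu$-variables, $\pi_L(\beta_j) = \pi(\beta_j)$, and by the facts recorded just before the statement ($\pi(B) = \vecz^{d-k}$ with $|B| = N$), the polynomials $\pi(\beta_1), \ldots, \pi(\beta_N)$ are exactly the $N$ distinct monomials of degree $d-k$ in $\vecz$. Hence $\app{L}{\vecx}{k}{f_{n,d,t}} \supseteq \spa\{\pi(\beta_j) : j \in [N]\}$, which has dimension $N$; therefore $\APP_{k,n_0}(f_{n,d,t}) \ge N$, and together with the upper bound this gives the claimed equality.

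There is essentially no hard step: the polynomial $f_{n,d,t}$ was engineered precisely so that one well-chosen $k$-th order derivative isolates each $\beta_i$, and so that the non-decreasing-index condition defining $B$ makes $\pi$ a bijection from $B$ onto $\vecz^{d-k}$, which is exactly what prevents the $n_0$-variable projection from collapsing the span. The only two points I would write out carefully are (i) $\partial_{\mu_j} f_{n,d,t} = \beta_j$ exactly, which needs distinctness of the $\mu_i$ together with multilinearity, and (ii) $\pi_L$ acts on each $\beta_j$ as the bijection $\pi$; both are routine.
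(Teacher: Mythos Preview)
Your proof is correct and follows essentially the same approach as the paper's: use Observation \ref{obs:trivial upper bound} for the upper bound, and for the lower bound take derivatives with respect to the $\vecy$-monomials $\mu_j$ to isolate the $\beta_j$'s, then apply the projection $\pi$ (your $L$) to obtain all of $\vecz^{d-k}$. The paper's proof compresses this into the single line ``$\der{\vecy}{k}{f_{n,d,t}} = B$ and $\pi(B) = \vecz^{d-k}$'', which is exactly what you have unpacked.
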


\section{Conclusion and open problems} \label{sec:open_problems}

We develop a meta framework for turning lower bounds for arithmetic circuit classes into learning algorithms for the circuits classes in the \emph{non-degenerate} case. A rudimentary form of this framework was first used in \cite{KayalS19} to design learning algorithms for learning homogeneous depth three circuits in the average case. We use the framework to design learning algorithms for sums of powers of low degree polynomials. The problem of learning sums of powers of linear polynomials (aka symmetric tensor decomposition) has been extensively studied in areas across science and many algorithms have been developed for it (again in the non-degenerate case; in the worst case it is NP-hard \cite{Hastad90, Shitov16}). However, even for learning sums of quadratic polynomials, we are not aware of any algorithm in the literature, except for an algorithm implicit in \cite{GeHK15} which works in a limited range of parameters. The problem of learning sums of powers of quadratics has an intimate connection to the well known problem of mixtures of Gaussians (Sections \ref{sec:gaussians_intro} and \ref{sec:gaussians}). We hope that our paper will lead to further algorithms for learning arithmetic circuits and also new connections between learning arithmetic circuits and machine learning problems, which is promising since tensor decomposition (aka learning depth three set-multilinear circuits) has found so many applications in ML. We list some of the interesting open problems below.

\begin{itemize}
\item \textbf{Smoothed analysis of mixtures of (general) Gaussians.} One immediate open problem is to make our algorithm resilient to noise. This is relevant  to mixtures of Gaussians since given samples from the mixture, we can only estimate its moments (upto $1/\poly(n)$ error using $\poly(n)$ samples). We are hopeful that an appropriate modification of our algorithm will lead to polynomial time algorithm for mixtures of general Gaussians in the \emph{smoothed} setting and when the number of components $s \le \poly(n)$.
\item \textbf{Learning other arithmetic circuit classes.} It is natural to implement our framework for other arithmetic circuit classes for which we have lower bounds e.g. set-multilinear circuits, multilinear formulas, regular formulas etc. \cite{Raz09, KayalSS14}.
\item \textbf{More connections between learning arithmetic circuits and ML.} As already mentioned, tensor decomposition finds multiple applications in ML (e.g. see \cite{anandkumar2014tensor}). It is natural to wonder if algorithms for learning more general classes of arithmetic circuits will also find applications in ML. For example, if we had learning algorithms for higher depth set-multilinear circuits (say depth-$4$), can this be utilized to solve problems in ML which tensor decomposition couldn't solve?
\item \textbf{Combining SoS and our techniques.} One of the algorithmic techniques which is very successfully used to design algorithms for tensor decomposition is the Sum of Squares (SoS) method \cite{barak2015dictionary, ge2015decomposing, hopkins2016fast, ma2016polynomial, raghavendra2018high}. Can SoS be also used to design learning algorithms for sums of powers of low degree polynomials (these algorithms might also be more robust to noise)? Perhaps combining SoS with our techniques might help?
\item \textbf{New lower bounds using $\APP$.} Can the method of affine projections of partials, perhaps also combining with shifts, be used to prove new lower bounds? May be for depth-$5$ circuits?
\end{itemize}

\section*{Acknowledgments}
\label{sec:ack}
\addcontentsline{toc}{section}{\nameref{sec:ack}}
We would like to thank Youming Qiao for insightful discussions on simultaneous block-diagonalization of rectangular matrices during the workshop on \emph{Algebraic Methods} held at the Simons Institute for the Theory of Computing in December $2018$. We thank Youming particularly for his suggestion to analyze the adjoint algebra and for referring us to the paper \cite{ChistovIK97}. We thank Navin Goyal for multiple helpful discussions on learning mixtures of Gaussians and related problems and for referring us to the paper \cite{GeHK15}.  We would also like to thank Ravi Kannan for pointing a bug in the statement and proof of Corollary \ref{corr:random_Gaussian_mixture} in an earlier version.

\bibliographystyle{alpha}
\bibliography{references}

\appendix
\section{The adjoint algebra} \label{sec:appendix adjoint}

Let $U$ and $W$ be vector spaces and $\CAL{L}$ a set of linear operators from $U$ to $W$ such that $W = \langle \CAL{L} \circ U \rangle$. Suppose $U$ and $W$ decompose into indecomposable subspaces as:
\begin{equation*}
	U = U_1 \oplus \ldots \oplus U_s  ~~~~\text{and}~~~~  W = W_1 \oplus \ldots \oplus W_s
\end{equation*}
such that $W_i = \langle \CAL{L} \circ U_i \rangle$~ for all $i \in [s]$. In this section, we give a brief overview of the adjoint algebra associated with $\CAL{L}$ and show how analyzing the adjoint provides an avenue to showing uniqueness of decomposition of the above spaces. We will explain this by assuming\footnote{This assumption is without any loss of generality (see Section \ref{sec: space to module decomposition}).} $U = W$ and $U_i = W_i$ for all $i \in [s]$. Let $m = \dim U$. Once a basis of $U$ is fixed, $U$ can be identified with $\F^{m}$ and elements of $\CAL{L}$ are $m \times m$ matrices in $M_{m}(\F)$. Let $\CAL{R} \subseteq M_{m}(\F)$ be the $\F$-algebra generated by\footnote{An $\F$-algebra $\CAL{R}$ has two binary operations $+$ and $\cdot$ defined on its elements such that $(\CAL{R}, +)$ is a $\F$-vector space, $(\CAL{R}, +,\cdot)$ is an associative ring, and for every $a, b \in \F$ and $B, C \in \CAL{R}$ it holds that $(aB)C = B(aC) = a(BC)$. The $\F$-algebra $\CAL{R} \subseteq M_{m}(\F)$ generated by $\CAL{L}\subseteq M_{m}(\F)$ is the set of all finite $\F$-linear sums of finite products of elements of $\CAL{L}$.} $\CAL{L}\cup \{I_{m}\}$, where $I_{m}$ is the $m \times m$ identity matrix. As $U = \langle \CAL{L} \circ U \rangle$ and $U_i = \langle \CAL{L} \circ U_i \rangle$, we have $L\vecu \in U$ and $L\vecu_i \in U_i$ for all $L \in \CAL{L}$, $\vecu \in U$ and $\vecu_i \in U_i$. This gives $U$ an $\CAL{R}$-module\footnote{Let $\CAL{R}$ be an $\F$-algebra with a multiplicative identity $I$. A vector space $U$ is an $\CAL{R}$-module if there is a bilinear map $\circ$ from $\CAL{R} \times U$ to $U$ such that $I \circ \vecu = \vecu$ and $(RS) \circ \vecu = R \circ (S \circ \vecu)$ for all $\vecu \in U$ and $R,S \in \CAL{R}$. In our case, $\circ$ is simply the matrix-vector multiplication operation.} structure and $U_1, \ldots, U_s$ are $\CAL{R}$-submodules of $U$. We say $U_i$ is an \emph{indecomposable} $\CAL{R}$-module if there are no proper $\CAL{R}$-submodules $U_{i1}$ and $U_{i2}$ of $U_i$ such that $U_i = U_{i1} \oplus U_{i2}$. A decomposition of an $\CAL{R}$-module $U$ as
$$U = U_1 \oplus \ldots \oplus U_s, $$
where $U_1, \ldots, U_s$ are indecomposable $\CAL{R}$-submodules of $U$, is \emph{unique} if it is the only possible decomposition of $U$ into indecomposable $\CAL{R}$-submodules (up to reordering of the $U_i$'s).

\subsection{Module homomorphisms} \label{sec:module homomorphisms}
A map $\phi$ from an $\CAL{R}$-module $U$ to another $\CAL{R}$-module $V$ is an $\CAL{R}$-module \emph{homomorphism} from $U$ to $V$ if $\phi(R\vecu + S\vecv) = R \phi(\vecu) + S \phi(\vecv)$ for all $R,S \in \CAL{R}$ and $\vecu, \vecv \in U$. Such a $\phi$ is an $\CAL{R}$-module \emph{isomorphism} from $U$ to $V$ if it is a bijection. An $\CAL{R}$-module homomorphism from $U$ to $U$ is called an $\CAL{R}$-module \emph{endomorphism} of $U$, and an $\CAL{R}$-module isomorphism from $U$ to $U$ is called an $\CAL{R}$-module \emph{automorphism} of $U$. It turns out that the set of $\CAL{R}$-module endomorphisms of $U$ can be computed efficiently as follows: Recall that in our case, $U = \F^{m}$ and $\CAL{R} \subseteq M_{m}(\F)$. Define the \emph{adjoint} of $\CAL{R}$ as
\begin{equation} \label{eqn:adjoint}
\adj(\CAL{R}) := \{D \in M_{m}(\F) ~:~ LD = DL \text{ for all } L \in \CAL{L}\}.
\end{equation} 
Observe that $\adj(\CAL{R})$ is an $\F$-subalgebra of $M_{m}(\F)$.
\begin{proposition} \label{prop:adjoint same as endomorphisms}
The adjoint $\adj(\CAL{R})$ is precisely the set of all $\CAL{R}$-module endomorphism of $U$.
\end{proposition}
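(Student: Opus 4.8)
The plan is to prove the two inclusions by unwinding the definition of an $\CAL{R}$-module endomorphism and exploiting the fact that $\CAL{R}$ is \emph{generated} as an $\F$-algebra by $\CAL{L} \cup \{I_{m}\}$. First I would note that since $\F \cdot I_{m} \subseteq \CAL{R}$, every $\CAL{R}$-module endomorphism $\phi$ of $U = \F^{m}$ is in particular $\F$-linear, hence of the form $\phi(\vecu) = D\vecu$ for a unique $D \in M_{m}(\F)$. So both $\adj(\CAL{R})$ and the set of $\CAL{R}$-module endomorphisms of $U$ sit naturally inside $M_{m}(\F)$, and the task reduces to identifying which matrices $D$ induce a module homomorphism. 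Using $\F$-bilinearity, the two-argument condition $\phi(R\vecu + S\vecv) = R\phi(\vecu) + S\phi(\vecv)$ is equivalent to the one-argument condition $\phi(R\vecu) = R\phi(\vecu)$ for all $R \in \CAL{R}$ and $\vecu \in U$; and since this must hold for every $\vecu \in \F^{m}$, it is equivalent to the matrix identity $DR = RD$ for all $R \in \CAL{R}$.

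The remaining step is to show that $DR = RD$ holds for all $R \in \CAL{R}$ if and only if $DL = LD$ holds for all $L \in \CAL{L}$, i.e. if and only if $D \in \adj(\CAL{R})$. The ``only if'' direction is immediate from $\CAL{L} \subseteq \CAL{R}$. For the ``if'' direction, I would use that $\CAL{R}$ consists precisely of finite $\F$-linear combinations of finite products of elements of $\CAL{L} \cup \{I_{m}\}$: if $D$ commutes with each $L \in \CAL{L}$ (and trivially with $I_{m}$), then an easy induction on the length of a product gives $D(L_1 L_2 \cdots L_k) = L_1 D(L_2 \cdots L_k) = \cdots = (L_1 L_2 \cdots L_k)D$, and commutation with each summand is preserved under $\F$-linear combination, so $D$ commutes with every element of $\CAL{R}$. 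Putting the two steps together proves the proposition.

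There is no serious obstacle here; the statement is essentially a repackaging of the definitions together with the standard observation that a matrix commuting with a generating set of an algebra commutes with the whole algebra. The only place that warrants a moment's care is the reduction of the pointwise homomorphism condition $\phi(R\vecu)=R\phi(\vecu)$ (which quantifies over all $\vecu$) to the single matrix equation $DR = RD$ — this uses both that $U = \F^{m}$, so agreement on all vectors forces equality of matrices, and $\F$-bilinearity to collapse the two-argument homomorphism axiom to the one-argument one.
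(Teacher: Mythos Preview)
Your proof is correct and follows essentially the same approach as the paper's own proof: both identify an $\CAL{R}$-module endomorphism with a matrix $D$ via $\F$-linearity (using $\F \cdot I_m \subseteq \CAL{R}$), and then reduce the homomorphism condition to the commutation relation $DR = RD$. You are slightly more explicit than the paper in spelling out why commuting with the generating set $\CAL{L}$ suffices to commute with all of $\CAL{R}$, which the paper leaves implicit.
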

\begin{proof}
Let $\phi$ be an $\CAL{R}$-module endomorphism of $U$. As $\CAL{R}$ contains the identity matrix $I_{m}$, $\F \subseteq \CAL{R}$ and so $\phi$ is a linear transformation from $U$ to $U$. Let $D_{\phi} \in M_{m}(\F)$ be the matrix corresponding to $\phi$. Since $\phi(R\vecu) = R \phi(\vecu)$, we have $D_{\phi}R\vecu = RD_{\phi}\vecu$ for all $R \in \CAL{R}$ and $\vecu \in U$. Hence, $D_{\phi}R = RD_{\phi}$ for all $R \in \CAL{R}$ implying $D_{\phi} \in \adj(\CAL{R})$. On the other hand, if $D \in \adj(\CAL{R})$ then the map $\phi_D: U \rightarrow U$ defined as $\phi_D(\vecu) := D\vecu$ satisfies $\phi_D(R\vecu + S\vecv) = R \phi_D(\vecu) + S \phi_D(\vecv)$ for all $R,S \in \CAL{R}$ and $\vecu, \vecv \in U$. So, $\phi_D$ is an $\CAL{R}$-module endomorphism of $U$. 
\end{proof}
A basis of the adjoint can be computed efficiently by solving a system of linear equations arising from the equation $LD = DL$ for all $L \in \CAL{L}$.

\subsection{Module decomposition} \label{sec:module decomposition}
Let $U = \F^m$ be an $\CAL{R}$-module, where $\CAL{R} \subseteq M_m(\F)$. By Proposition \ref{prop:adjoint same as endomorphisms}, the invertible elements of $\adj(\CAL{R})$ are the $\CAL{R}$-module automorphisms of $U$ and these can be used to describe all possible decomposition of $U$ into indecomposable $\CAL{R}$-modules. 

\begin{proposition} \label{prop:all possible decomposition}
\begin{enumerate} [(a)]
\item \label{D gives new decomposition} If $U = U_1 \oplus \ldots \oplus U_s$ is a decomposition of $U$ into indecomposable $\CAL{R}$-submodules and $D\in \adj(\CAL{R})$ is invertible then 
$$U = DU_1 \oplus \ldots \oplus DU_s$$
is another decomposition of $U$ into indecomposable $\CAL{R}$-submodules.
\item \label{new decomposition gives D} If $U = U_1' \oplus \ldots \oplus U_l'$ is any other decomposition of $U$ into indecomposable $\CAL{R}$-submodules then $l=s$ and there is an invertible $D\in \adj(\CAL{R})$ and a permutation $\sigma$ of $[s]$ such that 
$$U_i' = D U_{\sigma(i)} ~~~\text{for all } i \in [s].$$
\end{enumerate}
\end{proposition}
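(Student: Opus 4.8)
The plan is to treat the two parts separately: part (\ref{D gives new decomposition}) is immediate from Proposition \ref{prop:adjoint same as endomorphisms}, while part (\ref{new decomposition gives D}) will follow from the classical Krull--Schmidt theorem together with a short bookkeeping step that repackages the resulting abstract isomorphisms as a single invertible element of $\adj(\CAL{R})$. For part (\ref{D gives new decomposition}): by Proposition \ref{prop:adjoint same as endomorphisms} an invertible $D \in \adj(\CAL{R})$ is exactly an $\CAL{R}$-module automorphism of $U$. Such a $D$ carries each $\CAL{R}$-submodule to an $\CAL{R}$-submodule; surjectivity gives $DU_1 + \cdots + DU_s = DU = U$ and injectivity gives independence of the $DU_i$, so $U = DU_1 \oplus \cdots \oplus DU_s$; and $D$ restricts to an isomorphism $U_i \xrightarrow{\sim} DU_i$, so each $DU_i$ is indecomposable since isomorphic modules are simultaneously (in)decomposable. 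That is all of part (a).

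For part (\ref{new decomposition gives D}), the starting observation is that $U = \F^m$ has finite length as an $\CAL{R}$-module, so the Krull--Schmidt theorem (Theorem \ref{thm:Krull-Schmidt}) applies. Its proof rests on two classical facts: Fitting's lemma, which shows that for an indecomposable finite-length module $U_i$ every $\CAL{R}$-endomorphism is either nilpotent or an automorphism, so that $\mathrm{End}_{\CAL{R}}(U_i)$ is a local ring; and the exchange argument, which uses locality to match up the summands of any two decompositions into indecomposables one at a time. Applying this to our two decompositions $U = U_1 \oplus \cdots \oplus U_s = U_1' \oplus \cdots \oplus U_l'$ yields $l = s$, a permutation $\sigma$ of $[s]$, and $\CAL{R}$-module isomorphisms $\psi_i \colon U_{\sigma(i)} \xrightarrow{\sim} U_i'$ for every $i \in [s]$.

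It remains to glue the $\psi_i$ into an element of $\adj(\CAL{R})$. Let $p_j \colon U \to U_j$ be the projection attached to the first decomposition and $\iota_i \colon U_i' \hookrightarrow U$ the inclusion of the $i$-th summand of the second; both are $\CAL{R}$-module homomorphisms. Define $D := \sum_{i \in [s]} \iota_i \circ \psi_i \circ p_{\sigma(i)}$. Then $D$ is an $\CAL{R}$-module endomorphism of $U$, hence $D \in \adj(\CAL{R})$ by Proposition \ref{prop:adjoint same as endomorphisms}. On the summand $U_{\sigma(i)}$ the map $D$ coincides with $\iota_i \circ \psi_i$, so $D U_{\sigma(i)} = U_i'$; thus $D$ sends the independent family $\{U_j\}_j$ isomorphically onto the independent family $\{U_i'\}_i$, and since both families sum to all of $U$ this forces $D$ to be a bijection, i.e. invertible in $\adj(\CAL{R})$. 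Therefore $U_i' = D U_{\sigma(i)}$ for all $i$, completing part (b).

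The only genuinely hard step is the Krull--Schmidt input (locality of $\mathrm{End}_{\CAL{R}}(U_i)$ and the exchange argument), which is entirely standard for finite-dimensional modules and can simply be cited via Theorem \ref{thm:Krull-Schmidt}; the contribution specific to this setting is the last paragraph, namely observing that the abstract isomorphisms produced by Krull--Schmidt, assembled through the projections and inclusions of two decompositions of the \emph{same} space $U$, automatically constitute an invertible element of the adjoint algebra rather than merely a family of abstract module isomorphisms.
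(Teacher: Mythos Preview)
Your proof is correct and follows essentially the same approach as the paper: for part (a) you spell out a bit more detail (directness of the sum, preservation of indecomposability) than the paper's one-line remark, and for part (b) both you and the paper invoke the Krull--Schmidt theorem to obtain $l=s$, a permutation $\sigma$, and $\CAL{R}$-isomorphisms $U_{\sigma(i)} \to U_i'$, then assemble these into a single $\CAL{R}$-automorphism of $U$ (you via $\sum_i \iota_i \circ \psi_i \circ p_{\sigma(i)}$, the paper by describing the same map on elements) and identify it with an invertible element of $\adj(\CAL{R})$ via Proposition \ref{prop:adjoint same as endomorphisms}.
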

\begin{proof}
The proof of \eqref{D gives new decomposition} follows from the easy observation that $DU_i$ is an $\CAL{R}$-submodule of $U$. \\

To prove \eqref{new decomposition gives D} we will make use of the Krull-Schmidt theorem for modules (\cite{Jacobson89} p. 110, \cite{KSBlog15}).
\begin{theorem} [Krull-Schmidt] \label{thm:Krull-Schmidt}
Let $\CAL{R}$ be an $\F$-algebra and $U$ a finite dimensional vector space that is also an $\CAL{R}$-module. If 
$$U = U_1 \oplus \ldots \oplus U_s ~~~\text{and}~~~ U = U_1' \oplus \ldots \oplus U_l'$$
are two decomposition of $U$ into indecomposable $\CAL{R}$-submodules then $l=s$ and there is a permutation $\sigma$ of $s$ such that $U_i'$ and $U_{\sigma(i)}$ are isomorphic as $\CAL{R}$-modules for all $i \in [s]$. 
\end{theorem}
The theorem holds for any module that is both Noetherian and Artinian -- a finite dimensional module is trivially Noetherian and Artinian. Applying the Krull-Schmidt theorem to our setting, we get $l = s$ and that there is a permutation $\sigma$ of $[s]$ such that $U_i' \cong U_{\sigma(i)}$ as $\CAL{R}$-modules for all $i \in [s]$. Let these ismorphisms be $\phi_1, \ldots, \phi_s$, i.e.,
$$U_i' = \phi_i(U_{\sigma(i)}) ~~~\text{for all } i\in [s].$$
Define a map $\phi$ from $U$ to $U$ as follows: Let $\vecu\in U$. If $\vecu = \vecu_1 + \ldots + \vecu_s$, where $\vecu_i \in U_i$, then $\phi(\vecu) := \phi_1(\vecu_{\sigma(1)}) + \ldots + \phi_s(\vecu_{\sigma(s)})$. Observe that $\phi$ restricted to $U_{\sigma(i)}$ is just $\phi_i$. It is easy to verify that $\phi$ is an $\CAL{R}$-module automorphism of $U$. Hence, by Proposition \ref{prop:adjoint same as endomorphisms}, there is an invertible $D \in \adj(\CAL{R})$ such that $\phi(\vecu) = D \vecu$ and so $U_i' = D U_{\sigma(i)}$ for all $i \in [s]$.  
\end{proof}
Proposition \ref{prop:all possible decomposition} implies that the invertible elements of $\adj(\CAL{R})$ exactly capture the various possible decompositions of $U$ into indecomposable $\CAL{R}$-modules. So, analyzing the adjoint becomes vital in showing uniqueness of a module decomposition.

\subsection{Uniqueness of decomposition} \label{sec:uniqueness of decomposition}
It turns out that showing uniqueness of module decomposition is essentially equivalent to showing that the elements of the adjoint are simultaneously block-diagonalizable. As before, let $U_1 \oplus \ldots \oplus U_s$ be a decomposition of the $\CAL{R}$-module $U=\F^m$ into indecomposable $\CAL{R}$-submodules. For simplicity, assume $\dim U_i = r$ for all $i \in [s]$. Let $(\vecu_{i1}, \ldots, \vecu_{ir})$ be a basis of $U_i$, and $A \in \GL_m(\F)$ be the basis change matrix from the standard basis of $\F^m$ to $(\vecu_{11}, \vecu_{12}, \ldots, \vecu_{1r},~ \ldots~, \vecu_{s1}, \vecu_{s2}, \ldots, \vecu_{sr})$. 

\begin{proposition} \label{prop:uniqueness of decomposition}
\begin{enumerate} [(a)]
\item \label{uniqueness implies block-diagonalization} If $U = U_1 \oplus \ldots \oplus U_s$ is the unique decomposition of $U$ into indecomposable $\CAL{R}$-submodules, and $U_i \ncong U_j$ as $\CAL{R}$-modules for $i \neq j$, then $A \cdot \adj(\CAL{R}) \cdot A^{-1}$ consists of block-diagonal matrices (with block size $r$). 
\item \label{block-diagonalization implies uniqueness} If $A \cdot \adj(\CAL{R}) \cdot A^{-1}$ consists of block-diagonal matrices (with block size $r$) then $U = U_1 \oplus \ldots \oplus U_s$ is the unique decomposition of $U$ into indecomposable $\CAL{R}$-submodules.
\end{enumerate} 
\end{proposition}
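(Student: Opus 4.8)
The plan is to route both implications through Proposition~\ref{prop:adjoint same as endomorphisms}, which identifies $\adj(\CAL{R})$ with the endomorphism algebra $\mathrm{End}_{\CAL{R}}(U)$. Relative to the direct sum $U = U_1 \oplus \cdots \oplus U_s$ of $\CAL{R}$-submodules, every $D \in \adj(\CAL{R})$ splits into blocks $D_{ij} := \pi_i \circ D \circ \iota_j \in \mathrm{Hom}_{\CAL{R}}(U_j, U_i)$, where $\pi_i$ and $\iota_j$ are the ($\CAL{R}$-linear) projection onto and inclusion of the summands. In the basis adapted to this decomposition --- the one defining $A$ --- the condition ``$A \cdot \adj(\CAL{R}) \cdot A^{-1}$ is block-diagonal'' is then exactly the condition ``$\mathrm{Hom}_{\CAL{R}}(U_j, U_i) = 0$ for all $i \neq j$''. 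So I would reduce the proposition to showing that this Hom-vanishing condition is equivalent to uniqueness of the indecomposable decomposition, invoking Proposition~\ref{prop:all possible decomposition} in each direction.

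For part~\ref{block-diagonalization implies uniqueness} I would assume all cross-Homs vanish and take an arbitrary decomposition $U = U_1' \oplus \cdots \oplus U_l'$ into indecomposable $\CAL{R}$-submodules. By the second part of Proposition~\ref{prop:all possible decomposition} one has $l = s$ and there are an invertible $D \in \adj(\CAL{R})$ and a permutation $\sigma$ of $[s]$ with $U_i' = D U_{\sigma(i)}$. Since $D$ is block-diagonal it maps each $U_k$ into itself, and invertibility then forces $D U_k = U_k$; hence $U_i' = U_{\sigma(i)}$, so the new decomposition is merely a reordering of the original, and uniqueness follows.

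For part~\ref{uniqueness implies block-diagonalization} I would argue contrapositively: assuming $\mathrm{Hom}_{\CAL{R}}(U_j, U_i) \neq 0$ for some $i \neq j$, I would manufacture a genuinely new indecomposable decomposition. Pick a nonzero $\phi \in \mathrm{Hom}_{\CAL{R}}(U_j, U_i)$ and put $N := \iota_i \circ \phi \circ \pi_j$, an $\CAL{R}$-module endomorphism of $U$ and hence an element of $\adj(\CAL{R})$; since $i \neq j$ one has $N^2 = 0$, so $D := I + N$ is invertible with $D^{-1} = I - N$. By the first part of Proposition~\ref{prop:all possible decomposition}, $U = D U_1 \oplus \cdots \oplus D U_s$ is again a decomposition into indecomposables, and here $D U_k = U_k$ for every $k \neq j$ (as $N$ vanishes on those $U_k$), whereas $D U_j = \{ \vecu + \phi(\vecu) : \vecu \in U_j \} \neq U_j$: for $\vecu_0$ with $\phi(\vecu_0) \neq 0$, membership of $\vecu_0 + \phi(\vecu_0)$ in $U_j$ would place $\phi(\vecu_0)$ in $U_i \cap U_j = 0$. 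Since also $D U_j \cap U_l = D U_j \cap D U_l = 0$ for $l \neq j$, $D U_j$ coincides with no $U_l$, so $\{D U_1, \ldots, D U_s\} \neq \{U_1, \ldots, U_s\}$, contradicting uniqueness. This yields the Hom-vanishing, hence block-diagonality; the stated hypothesis $U_i \ncong U_j$ is in fact subsumed by the conclusion, since an isomorphism $U_j \to U_i$ would be a nonzero element of $\mathrm{Hom}_{\CAL{R}}(U_j, U_i)$.

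I expect the only step needing real care to be the construction in part~\ref{uniqueness implies block-diagonalization}: one must check that $N$ is genuinely $\CAL{R}$-linear (true, being a composite of $\CAL{R}$-module maps --- which is precisely why one needs the $U_i$ to be \emph{submodules}), that $I + N$ is invertible via nilpotence of $N$, and --- the subtlest point --- that $\{D U_k\}_k$ is not just a reshuffling of $\{U_k\}_k$, which is the short direct-sum/intersection computation indicated above. The rest is a direct appeal to Propositions~\ref{prop:adjoint same as endomorphisms} and~\ref{prop:all possible decomposition}, so I anticipate no serious obstacle.
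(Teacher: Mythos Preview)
Your proof is correct. Part~\ref{block-diagonalization implies uniqueness} is essentially identical to the paper's argument.

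For part~\ref{uniqueness implies block-diagonalization} you take a genuinely different route. The paper argues directly: for \emph{invertible} $D \in \adj(\CAL{R})$, Proposition~\ref{prop:all possible decomposition}\,(a) gives a decomposition $U = DU_1 \oplus \cdots \oplus DU_s$; uniqueness forces $DU_i = U_{\tau(i)}$ for some permutation $\tau$, and then the hypothesis $U_i \ncong U_j$ pins down $\tau = \mathrm{id}$, so $A D A^{-1}$ is block-diagonal. To pass from invertible $D$ to all of $\adj(\CAL{R})$, the paper invokes a Schwartz--Zippel / density argument (requiring $|\F| > 3sr$). Your contrapositive construction via the nilpotent $N = \iota_i \circ \phi \circ \pi_j$ and $D = I + N$ avoids this entirely: it works over any field with no size restriction, it never separates the invertible case from the general case, and --- as you correctly observe --- it shows that the non-isomorphism hypothesis $U_i \ncong U_j$ is actually redundant given uniqueness (a nonzero cross-$\mathrm{Hom}$ already violates uniqueness, whether or not it is an isomorphism). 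The paper's argument, by contrast, genuinely uses that hypothesis to rule out the permutation ambiguity. So your approach is both more elementary and yields a slightly sharper statement.
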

\begin{proof}
Let $D \in \adj(\CAL{R})$ be invertible. By Proposition \ref{prop:all possible decomposition} (\ref{D gives new decomposition}), 
$$U = DU_1 \oplus \ldots \oplus DU_s$$
is another decomposition of $U$ into indecomposable $\CAL{R}$-submodules. If $U = U_1 \oplus \ldots \oplus U_s$ is the unique decomposition and $U_i \ncong U_j$ as $\CAL{R}$-modules for $i \neq j$, then $DU_i = U_i$ for all $i \in [s]$. In other words, $A \cdot D \cdot A^{-1}$ is block-diagonal for every invertible $D \in \adj(\CAL{R})$. Now, a simple application of the Schwartz-Zippel lemma implies $A \cdot \adj(\CAL{R}) \cdot A^{-1}$ consists of block-diagonal matrices if $|\F| > 3sr$. This completes the proof of part (\ref{uniqueness implies block-diagonalization}). \\

Suppose $U = U_1' \oplus \ldots \oplus U_s'$ be another decomposition of $U$ into indecomposable $\CAL{R}$-submodules. By Proposition \ref{prop:all possible decomposition} (\ref{new decomposition gives D}), there is an invertible $D \in \adj(\CAL{R})$ and a permutation $\sigma$ of $[s]$ such that 
$$U_i' = DU_{\sigma(i)} ~~~\text{for all } i \in [s].$$
If $A \cdot \adj(\CAL{R}) \cdot A^{-1}$ consists of only block-diagonal matrices then
$$D U_{\sigma(i)} \subseteq U_{\sigma(i)}, ~~~\text{implying}~~~ U_i' \subseteq U_{\sigma(i)} ~~~\text{for all } i \in [s].$$
This further implies $U_i' = U_{\sigma(i)}$ for all $i \in [s]$ as $\sum_{i \in [s]}{\dim U_i'} = \sum_{i \in [s]}{\dim U_{\sigma(i)}}$. Thus, the decomposition $U = U_1 \oplus \ldots \oplus U_s$ is unique.   
\end{proof}

\subsubsection*{The adjoint algebra arising in our case}
As briefed in Section \ref{sec:learning from lower bound}, our learning problem is essentially reduced to the following module decomposition problem: We are given a basis of an appropriate $\CAL{R}$-module $U$ that decomposes as 
\begin{equation} \label{eqn:our module decomposition} 
U = U_1 \oplus \ldots \oplus U_s,
\end{equation}
where $U_i$ is an $\CAL{R}$-submodule of $U$ that is not guaranteed to be indecomposable and $\dim U_i = r$ for all $i \in [s]$. We are required to 
\begin{itemize}
\item show that each $U_i$ is an indecomposable $\CAL{R}$-module,
\item show that the above decompostion is unique,
\item find the decomposition, i.e., compute bases of $U_1, \ldots U_s$.
\end{itemize} 
Here, $\CAL{R}$ is the $\F$-algebra generated by a set of linear operators $\CAL{L}$ on $U$. Guided by Proposition \ref{prop:uniqueness of decomposition}, we analyze the adjoint $\adj(\CAL{R})$. It turns out that the ``richness'' of the carefully chosen set of linear operators $\CAL{L}$ implies that 
$$A \cdot \adj(\CAL{R}) \cdot A^{-1} ~=~ \CAL{D} ~:=~ \{\diag(a_1, \ldots, a_s) \otimes I_r ~:~ a_i \in \F \text{ for all } i \in [s]\},$$
where $A$ is as defined at the beginning of this section. In other words, elements of the adjoint are simultaneously diagonalizable. The following is an easy corollary of Proposition \ref{prop:uniqueness of decomposition}.

\begin{corollary} \label{cor:diagonalizable adjoint}
If $A \cdot \adj(\CAL{R}) \cdot A^{-1} = \CAL{D}$ then the $\CAL{R}$-modules $U_1, \ldots, U_s$ (in Equation \eqref{eqn:our module decomposition}) are indecomposable and $U = U_1 \oplus \ldots \oplus U_s$ is the unique decomposition of $U$ into indecomposable $\CAL{R}$-submodules.
\end{corollary}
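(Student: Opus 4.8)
The plan is to derive the corollary from Proposition~\ref{prop:uniqueness of decomposition}(\ref{block-diagonalization implies uniqueness}). First note that the hypothesis $A\cdot\adj(\CAL{R})\cdot A^{-1}=\CAL{D}$ does exhibit $A\cdot\adj(\CAL{R})\cdot A^{-1}$ as a set of block-diagonal matrices of block size $r$, since every element of $\CAL{D}$ has $i$-th diagonal block equal to the scalar matrix $a_iI_r$. However, Proposition~\ref{prop:uniqueness of decomposition} presupposes that $U_1\oplus\cdots\oplus U_s$ is already a decomposition of $U$ into \emph{indecomposable} $\CAL{R}$-submodules, so the part of the corollary carrying genuine content is the claim that each $U_i$ is indecomposable; I would establish that first and then invoke the proposition.

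To show $U_i$ is indecomposable I would argue by contradiction. Suppose $U_i=U_{i1}\oplus U_{i2}$ with $U_{i1},U_{i2}$ nonzero $\CAL{R}$-submodules, and let $e\colon U_i\to U_i$ be the projection onto $U_{i1}$ along $U_{i2}$; this is an $\CAL{R}$-module endomorphism of $U_i$ and a nontrivial idempotent. Extend $e$ to $D\colon U\to U$ by setting $D|_{U_i}=e$ and $D|_{U_j}=0$ for $j\neq i$. Writing $\vecu=\sum_j\vecu_j$ with $\vecu_j\in U_j$, and using that each $U_j$ is an $\CAL{R}$-submodule (so $R\vecu_j\in U_j$ for every $R\in\CAL{R}$), one checks that $D(R\vecu)=e(R\vecu_i)=R\,e(\vecu_i)=R\,D(\vecu)$, so $D$ is an $\CAL{R}$-module endomorphism of $U$, whence $D\in\adj(\CAL{R})$ by Proposition~\ref{prop:adjoint same as endomorphisms}. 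With respect to the basis used to define $A$, the matrix $ADA^{-1}$ is block-diagonal with the $j$-th block zero for $j\neq i$ and the $i$-th block equal to the matrix of $e$ in the chosen basis of $U_i$. Since $ADA^{-1}\in\CAL{D}$, that $i$-th block must equal $a_iI_r$ for some $a_i\in\F$, forcing $e=a_i\,\mathrm{id}_{U_i}$; then $e^2=e$ gives $a_i\in\{0,1\}$, i.e.\ $e\in\{0,\mathrm{id}_{U_i}\}$, contradicting nontriviality. Hence every $U_i$ is an indecomposable $\CAL{R}$-module.

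Having established indecomposability, the conclusion is immediate: $A\cdot\adj(\CAL{R})\cdot A^{-1}=\CAL{D}$ consists of block-diagonal matrices of block size $r$, so Proposition~\ref{prop:uniqueness of decomposition}(\ref{block-diagonalization implies uniqueness}) gives that $U=U_1\oplus\cdots\oplus U_s$ is the unique decomposition of $U$ into indecomposable $\CAL{R}$-submodules. I do not anticipate a real obstacle here; the only point needing some care is checking that the zero-extension $D$ of $e$ actually lands in $\adj(\CAL{R})$ — equivalently, that it commutes with every operator in $\CAL{L}$ rather than merely preserving $U_i$ — and this is exactly where one uses that all the $U_j$ are $\CAL{R}$-submodules.
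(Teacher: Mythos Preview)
Your proof is correct and is exactly the natural way to flesh out what the paper leaves implicit: the paper simply declares this an ``easy corollary of Proposition~\ref{prop:uniqueness of decomposition}'' without further detail, and your argument---first establishing indecomposability of each $U_i$ via the idempotent-extension trick, then invoking part~(\ref{block-diagonalization implies uniqueness}) of that proposition---is the standard route. The verification that the zero-extension $D$ of $e$ lies in $\adj(\CAL{R})$ is the one nontrivial point, and you handle it correctly using that every $U_j$ is an $\CAL{R}$-submodule.
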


Finally, we find the decomposition by simultaneously diagonalizing the basis elements of $\adj(\CAL{R})$. 
\section{Reducing vector space decomposition to module decomposition} \label{sec: space to module decomposition}

In this section, we reduce the vector space decomposition problem to the module decomposition problem. In fact, our reduction works for a more general problem, which we call generalized vector space decomposition. We describe this setting below.\\

Suppose we have a directed graph $G = (V,E)$. At each vertex $v \in V$, we have a vector space $U_v$ and each edge $(v,w) \in E$ carriers a set of linear maps $\cL_{v,w}$ from $U_v$ to $U_w$. A vector space decomposition of the collection of vector spaces $(U_v)_{v \in V}$ is a collection of decompositions
$$
U_v = U_{v,1} \oplus \cdots \oplus U_{v,s}
$$
such that $ \langle \cL_{v,w} \circ U_{v,i} \rangle \subseteq U_{w,i}$ for all $i \in [s]$ and $(v,w) \in E$. The collection of decompositions is indecomposable if there are no \emph{finer} decompositions, i.e., there are no proper subspaces $U'_{v, i}, U''_{v, i}$ of $U_{v, i}$ (and $U'_{w, i}, U''_{w, i}$ of $U_{w, i}$) such that $U_{v, i} = U'_{v, i} \oplus U''_{v, i}$ (and $U_{w, i} = U'_{w, i} \oplus U''_{w, i}$), and $\langle \cL_{v,w} \circ U'_{v,i} \rangle \subseteq U'_{w,i}$, $ \langle \cL_{v,w} \circ U''_{v,i} \rangle \subseteq U''_{w,i}$ for all $i \in [s]$ and $(v,w) \in E$. The generalized vector space decomposition problem is the task of computing a collection of indecomposable decompositions of the spaces $(U_v)_{v \in V}$ from the graph $G$. 
Note that the module isomorphism problem corresponds to a single loop on one vertex and the vector space decomposition problem corresponds to two vertices and a single edge between them.\\

There is a simple reduction from the generalized vector space decomposition problem to the module decomposition problem. Given the above instance, we consider the vector space $U = \oplus_{v \in V} U_v$. We define some special linear maps from $U$ to $U$ which will be central to the reduction. We note that we just need to describe the behaviour of the linear maps on each of the $U_v$'s, as we can extend the maps linearly to the whole space $U$. The first set of linear maps are projections onto $U_v$'s.
\[   
\Pi_{v}(u) = 
     \begin{cases}
      u &\text{~~if~~} u \in U_v\\
      0  &\text{~~if~~} u \in U_{v'} \text{~~for~~} v' \neq v.
     \end{cases}
\]
That is, $\Pi_v$ is the projector onto $U_v$. The second set of linear maps are the natural extensions of $\cL_{v,w}$'s to the whole space. Given $L \in \cL_{v,w}$, we define the extension of $L$ as
\[   
\ext(L)(u) = 
     \begin{cases}
      L(u) &\text{~~if~~} u \in U_v\\
      0  &\text{~~if~~} u \in U_{v'} \text{~~for~~} v' \neq v.
     \end{cases}
\]
Then, we can define $\widetilde{\cL}_{v,w} = \{\ext(L): L \in \cL_{v,w}\}$. Let $\cR$ be the algebra generated by $\{I_m\} \cup \{\Pi_v\}_{v \in V} \cup \{\widetilde{\cL}_{v,w}\}_{(v,w) \in E}$, where $m := \dim(U)$. Observe that $U$ can be naturally treated as an $\cR$-module.
Now, we have the following elementary proposition which characterizes $\cR$-submodules of $U$ (i.e., subspaces of $U$ that are invariant with respect to $\cR$). 

\begin{proposition}\label{prop:invariant_reduction}
A subspace $U' \subseteq U$ is an $\cR$-submodule of $U$ (i.e., $\langle \cR \circ U' \rangle \subseteq U'$) if and only if it is of the form $\oplus_{v \in V} U_v'$ such that $U_v' \subseteq U_v$ for all $v \in V$ and $\langle \cL_{v,w} \circ U_v' \rangle \subseteq U_w'$ for all $(v,w) \in E$.
\end{proposition}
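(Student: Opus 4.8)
The plan is to prove the two directions of the "if and only if" separately, using the defining generators of $\cR$ — namely $\{I_m\} \cup \{\Pi_v\}_{v \in V} \cup \{\widetilde{\cL}_{v,w}\}_{(v,w)\in E}$ — since a subspace is an $\cR$-submodule exactly when it is closed under the action of these generators (the full algebra is generated by them under sums, scalar multiples and products, and closure under the generators propagates to all of $\cR$). The key structural observation I would establish first is a decomposition lemma: \emph{every} subspace $U' \subseteq U$ that is closed under the projectors $\Pi_v$ automatically splits as $U' = \oplus_{v\in V} (U' \cap U_v)$. This is because for $u \in U'$, writing $u = \sum_{v} u_v$ with $u_v \in U_v$ (the unique decomposition coming from $U = \oplus_v U_v$), we have $u_v = \Pi_v(u) \in U'$; hence $U' = \sum_v (U'\cap U_v)$, and the sum is direct because it is a subsum of the direct sum $\oplus_v U_v$. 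Setting $U_v' := U' \cap U_v \subseteq U_v$ gives the claimed form.

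For the forward direction, suppose $U'$ is an $\cR$-submodule. Then $U'$ is closed under each $\Pi_v$, so by the decomposition lemma $U' = \oplus_{v} U_v'$ with $U_v' = U'\cap U_v \subseteq U_v$. It remains to check $\langle \cL_{v,w}\circ U_v'\rangle \subseteq U_w'$ for each edge $(v,w)\in E$. Take $L \in \cL_{v,w}$ and $u \in U_v'$. Since $U_v' \subseteq U_v$, the definition of $\ext(L)$ gives $\ext(L)(u) = L(u) \in U_w$. Also $\ext(L)(u) \in U'$ because $u \in U'$ and $U'$ is $\cR$-invariant, hence $\ext(L)$-invariant. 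Therefore $L(u) = \ext(L)(u) \in U' \cap U_w = U_w'$. As this holds for all $L$ and all $u \in U_v'$, we get $\langle \cL_{v,w}\circ U_v'\rangle \subseteq U_w'$, as required.

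For the converse, suppose $U' = \oplus_{v} U_v'$ with $U_v' \subseteq U_v$ and $\langle \cL_{v,w}\circ U_v'\rangle \subseteq U_w'$ for all edges. I must show $U'$ is invariant under every generator, and it suffices to check invariance on a spanning set of $U'$, for which we may take $\bigcup_v U_v'$. Invariance under $I_m$ is trivial. For $\Pi_v$ and $u \in U_{v'}'$ with $v' \in V$: if $v' = v$ then $\Pi_v(u) = u \in U_v' \subseteq U'$, and if $v' \neq v$ then $\Pi_v(u) = 0 \in U'$; either way $\Pi_v(u)\in U'$. For $\ext(L)$ with $L \in \cL_{v,w}$ and $u \in U_{v'}'$: if $v' \neq v$ then $\ext(L)(u) = 0 \in U'$; if $v' = v$ then $\ext(L)(u) = L(u) \in \langle \cL_{v,w}\circ U_v'\rangle \subseteq U_w' \subseteq U'$. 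Thus $U'$ is closed under all generators, hence under all of $\cR$, so $U'$ is an $\cR$-submodule.

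I do not anticipate a genuine obstacle here; the proposition is elementary once the projector-decomposition lemma is in hand. The only point requiring mild care is the reduction of "$\cR$-invariance" to "invariance under the generating set" (closure under products and scalar combinations is immediate) and the symmetric observation that it suffices to verify invariance on the spanning set $\bigcup_v U_v'$ of $U'$; both are routine but should be stated explicitly to keep the argument clean.
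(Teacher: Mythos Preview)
Your proof is correct and follows essentially the same approach as the paper: both use closure under the projectors $\Pi_v$ to split $U'$ as $\oplus_v U_v'$ (you write $U_v' = U'\cap U_v$, the paper writes $U_v' = \Pi_v\circ U'$, which coincide), and both deduce the edge condition from $\ext(L)(U_v') \subseteq U' \cap U_w$. Your converse direction is spelled out in more detail than the paper's ``one direction is clear,'' but the argument is the same.
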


\begin{proof}
One direction is clear. If $U'$ is of the form $\oplus_{v \in V} U_v'$ such that $U_v' \subseteq U_v$ for all $v \in V$ and $\langle \cL_{v,w} \circ U_v' \rangle \subseteq U_w'$ for all $(v,w) \in E$, then $U'$ is an $\cR$-submodule of $U$. In the other direction, suppose $U'$ is an $\cR$-submodule of $U$. Let $U_v' = \Pi_v \circ U'$. Then, $U_v' \subseteq U'$ for all $v \in V$ since $U'$ is an $\cR$-module. On the other hand, $U' \subseteq \oplus_{v \in V} U_v'$, hence $U' = \oplus_{v \in V} U_v'$. Another consequence of $U'$ being an $\cR$-module is that 
$$
\langle \widetilde{\cL}_{v,w} \circ U_v'\rangle  = \langle \widetilde{\cL}_{v,w} \circ \Pi_v \circ U'\rangle  \subseteq U'
$$
Since every map in $\widetilde{\cL}_{v,w}$ maps $U$ to $U_w$, we have that,
$$
\langle \widetilde{\cL}_{v,w} \circ U_v'\rangle  \subseteq U' \cap U_w = U_w'
$$
which is the same as $\langle \cL_{v,w} \circ U_v'\rangle  \subseteq  U_w'$. This completes the proof.

\end{proof}

This yields the following corollary which characterizes decomposition of $U$ into $\cR$-submodules.

\begin{corollary}[Reduction to module decomposition] \label{corr:decomp_reduction} 
$U = U_1 \oplus \cdots \oplus U_s$ is a decomposition of $U$ into $\cR$-submodules if and only if each $U_i$ is of the form $\oplus_{v \in V} U_{v,i}$ such that $U_v = U_{v,1} \oplus \cdots \oplus U_{v,s}$ for all $v \in V$ and $\langle \cL_{v,w} \circ U_{v,i} \rangle \subseteq U_{w,i}$ for all $i \in [s]$, $(v,w) \in E$.
\end{corollary}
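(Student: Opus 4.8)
The plan is to derive the corollary as an almost immediate consequence of Proposition \ref{prop:invariant_reduction}, which already characterizes the individual $\cR$-submodules of $U$. The only extra work is to show that a \emph{direct sum} decomposition of $U$ into $\cR$-submodules induces, vertex by vertex, a direct sum decomposition of each $U_v$ (and conversely). So the two directions of the iff are: (i) given $U = U_1 \oplus \cdots \oplus U_s$ with each $U_i$ an $\cR$-submodule, show that setting $U_{v,i} := \Pi_v \circ U_i$ yields $U_i = \oplus_{v} U_{v,i}$, that $U_v = U_{v,1} \oplus \cdots \oplus U_{v,s}$, and that the maps $\cL_{v,w}$ respect the indexing; (ii) conversely, given compatible decompositions $U_v = U_{v,1} \oplus \cdots \oplus U_{v,s}$ for all $v$, show that $U_i := \oplus_v U_{v,i}$ are $\cR$-submodules whose direct sum is $U$.

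For direction (i): by Proposition \ref{prop:invariant_reduction}, each $U_i$ is of the form $\oplus_{v \in V} U_{v,i}$ with $U_{v,i} = \Pi_v \circ U_i \subseteq U_v$ and $\langle \cL_{v,w} \circ U_{v,i}\rangle \subseteq U_{w,i}$ for all $(v,w) \in E$. It remains to check $U_v = U_{v,1} \oplus \cdots \oplus U_{v,s}$. Applying the projector $\Pi_v$ to the identity $U = U_1 \oplus \cdots \oplus U_s$ gives $U_v = \Pi_v(U) = \Pi_v(U_1) + \cdots + \Pi_v(U_s) = U_{v,1} + \cdots + U_{v,s}$, so the spans match; and the sum is direct because $U_{v,i} \subseteq U_i$ and the $U_i$ are in direct sum inside $U$, so a nontrivial linear dependence among the $U_{v,i}$ would give one among the $U_i$. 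For direction (ii): given the compatible vertexwise decompositions, $U_i := \oplus_{v} U_{v,i}$ satisfies the hypotheses of Proposition \ref{prop:invariant_reduction} (namely $U_{v,i} \subseteq U_v$ and $\langle \cL_{v,w}\circ U_{v,i}\rangle \subseteq U_{w,i}$), hence each $U_i$ is an $\cR$-submodule of $U$; and $U_1 + \cdots + U_s = \oplus_v (U_{v,1} + \cdots + U_{v,s}) = \oplus_v U_v = U$ with the sum direct by a componentwise argument identical to the one above.

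There is no real obstacle here — the statement is essentially bookkeeping built on top of Proposition \ref{prop:invariant_reduction}. The only point that needs a moment's care is the interchange of ``$\oplus$ over $V$'' with ``$\oplus$ over $[s]$'': one must verify that decomposing $U = \oplus_v U_v$ and then each $U_v = \oplus_i U_{v,i}$ produces exactly the same collection of $sm$-dimensional (in total) coordinate blocks as decomposing $U = \oplus_i U_i$ and then each $U_i = \oplus_v U_{v,i}$, and that directness in one grouping is equivalent to directness in the other. This follows from the fact that for a finite family of subspaces indexed by a product set, being in direct sum can be checked by dimension count, $\sum_{v,i} \dim U_{v,i} = \dim U$, which is symmetric in the two indices; alternatively one argues directly that any relation $\sum_{v,i} x_{v,i} = 0$ with $x_{v,i} \in U_{v,i}$ forces all $x_{v,i} = 0$ by first projecting onto each $U_v$ and then using directness within $U_v$.
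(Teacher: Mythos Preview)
Your proposal is correct and follows essentially the same route as the paper: both directions rest on Proposition~\ref{prop:invariant_reduction}, and the only nontrivial point is checking that $U_v = U_{v,1} \oplus \cdots \oplus U_{v,s}$. The paper argues this by contradiction (if some $U_{w,1} \oplus \cdots \oplus U_{w,s} \subsetneq U_w$ then regrouping gives $U \subsetneq U$), whereas you obtain spanning directly by applying $\Pi_v$ to $U = U_1 \oplus \cdots \oplus U_s$; both arguments are equally valid and the rest of the bookkeeping is identical.
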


\begin{proof}
Again one direction is clear. In the other direction, suppose $U = U_1 \oplus \cdots \oplus U_s$ is a decomposition of $U$ into $\cR$-submodules. This means that each of the $U_i$'s is an $\cR$-submodule of $U$. By Proposition \ref{prop:invariant_reduction}, each $U_i$ is of the form $\oplus_{v \in V} U_{v,i}$ such that $U_{v,i} \subseteq U_v$ for all $i \in [s], v \in V$ and $\langle \cL_{v,w} \circ U_{v,i} \rangle \subseteq U_{w,i}$ for all $i \in [s]$, $(v,w) \in E$. Now $U = U_1 \oplus \cdots \oplus U_s$ and $U_{v, i} \subseteq U_v, U_i$, hence $U_{v,1},\ldots, U_{v,s}$ form a direct sum and
$$
 U_{v,1} \oplus \cdots \oplus U_{v,s} \subseteq U_v
$$
for all $v \in V$. What remains to prove is that
$$
U_v = U_{v,1} \oplus \cdots \oplus U_{v,s} 
$$
for all $v \in V$. Suppose there is some $w \in V$ such that 
$$
U_{w,1} \oplus \cdots \oplus U_{w,s} \subsetneq U_w.
$$
Then
$$
U = \oplus_{i \in [s]} U_i = \oplus_{i \in [s]} \oplus_{v \in V} U_{v,i} = \oplus_{v \in V} \oplus_{i \in [s]} U_{v,i} ~\subsetneq~ \oplus_{v \in V} U_v = U
$$
which is a contradiction. Hence,
$$
U_v = U_{v,1} \oplus \cdots \oplus U_{v,s} 
$$
for all $v \in V$. This completes the proof.
\end{proof}

The Krull-Schmidt theorem for module decomposition and the above reduction allows one to obtain a uniqueness theorem for generalized vector space decomposition.

\begin{theorem}[Generalized vector space decomposition: uniqueness]
Suppose $U_v = U_{v,1} \oplus \cdots \oplus U_{v,s}$ and $U_v = U'_{v,1} \oplus \cdots \oplus U'_{v,s'}$ are two collection of decompositions (which are further indecomposable) for the generalized vector space decomposition problem. Then $s = s'$. Furthermore, there exist linear maps $L_v: U_v \rightarrow U_v$ and a permutation $\sigma: [s] \rightarrow [s]$ such that $U'_{v,i} = L_v \circ U_{v,\sigma(i)}$ for all $i \in [s]$ and $v \in V$. Also, $L_{w} \circ L_{v,w} = L_{v,w} \circ L_v$ for all $L_{v,w} \in \cL_{v,w}$ and $v,w \in V$.
\end{theorem}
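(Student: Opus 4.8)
The plan is to transport the statement through the reduction to module decomposition of Section~\ref{sec: space to module decomposition} and then invoke the Krull--Schmidt machinery of Appendix~\ref{sec:appendix adjoint}. Form $U = \oplus_{v \in V} U_v$ and let $\cR$ be the $\F$-algebra generated by $\{I_m\} \cup \{\Pi_v\}_{v \in V} \cup \{\widetilde{\cL}_{v,w}\}_{(v,w) \in E}$, so that $U$ is a (finite-dimensional) $\cR$-module, exactly as in that section. By Corollary~\ref{corr:decomp_reduction}, the two given collections of decompositions correspond to two decompositions of $U$ into $\cR$-submodules,
$$ U = U_1 \oplus \cdots \oplus U_s, \qquad U = U_1' \oplus \cdots \oplus U_{s'}', $$
with $U_i = \oplus_{v \in V} U_{v,i}$ and $U_i' = \oplus_{v \in V} U'_{v,i}$. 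The first step is to check that each $U_i$ (and each $U_i'$) is an \emph{indecomposable} $\cR$-module: by Proposition~\ref{prop:invariant_reduction}, any splitting $U_i = U_i^{(1)} \oplus U_i^{(2)}$ into $\cR$-submodules has $U_i^{(j)} = \oplus_v U_{v,i}^{(j)}$ with $\langle \cL_{v,w} \circ U_{v,i}^{(j)} \rangle \subseteq U_{w,i}^{(j)}$, which would be a proper refinement of the $i$-th slice of the collection $(U_{v,i})_v$, contradicting its assumed indecomposability.

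Now apply Proposition~\ref{prop:all possible decomposition}, part (b) (equivalently, Theorem~\ref{thm:Krull-Schmidt} together with Proposition~\ref{prop:adjoint same as endomorphisms}): we get $s = s'$, together with an invertible $D \in \adj(\cR)$ and a permutation $\sigma$ of $[s]$ such that $U_i' = D\, U_{\sigma(i)}$ for every $i \in [s]$. Since $D \in \adj(\cR)$, it commutes with each generator of $\cR$; in particular $D \Pi_v = \Pi_v D$, so for $u \in U_v$ we have $\Pi_v(Du) = D(\Pi_v u) = Du$, i.e.\ $D$ maps $U_v$ into $U_v$. Define $L_v := D|_{U_v} : U_v \to U_v$; it is invertible because $D$ is invertible and preserves the decomposition $U = \oplus_v U_v$.

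Applying $\Pi_v$ to $U_i' = D\,U_{\sigma(i)}$ and using $\Pi_v D = D \Pi_v$, $\Pi_v(U_i') = U'_{v,i}$ and $\Pi_v(U_{\sigma(i)}) = U_{v,\sigma(i)}$ gives $U'_{v,i} = D\,U_{v,\sigma(i)} = L_v(U_{v,\sigma(i)})$, which is the asserted identity. For the compatibility with the edge maps, recall that $D$ also commutes with $\ext(L_{v,w})$ for every $L_{v,w} \in \cL_{v,w}$. For $u \in U_v$ we have $\ext(L_{v,w})(u) = L_{v,w}(u) \in U_w$, hence $D\,\ext(L_{v,w})(u) = L_w(L_{v,w}(u))$; on the other hand $Du = L_v(u) \in U_v$, so $\ext(L_{v,w})(Du) = L_{v,w}(L_v(u))$. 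Equating the two expressions, which are equal since $D$ commutes with $\ext(L_{v,w})$, yields $L_w \circ L_{v,w} = L_{v,w} \circ L_v$, completing the argument.

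The only delicate point I foresee is the translation in the first paragraph --- making sure that ``the collection of decompositions is (further) indecomposable'' corresponds precisely to ``each $U_i$ is an indecomposable $\cR$-module'' --- which is exactly what Proposition~\ref{prop:invariant_reduction} is designed to supply; after that, everything is routine bookkeeping with the projectors $\Pi_v$ and the extensions $\ext(\cdot)$. Finite-dimensionality of $U$, needed for Krull--Schmidt, holds since each $U_v$ is finite-dimensional.
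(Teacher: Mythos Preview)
Your proposal is correct and follows essentially the same route as the paper's proof: reduce to the $\cR$-module $U=\oplus_v U_v$, invoke Krull--Schmidt/Proposition~\ref{prop:all possible decomposition}(b) to obtain an invertible $D\in\adj(\cR)$ and a permutation $\sigma$, then use commutation with the projectors $\Pi_v$ and with $\ext(L_{v,w})$ to extract the maps $L_v$ and the stated identities. If anything, you are slightly more careful than the paper in spelling out, via Proposition~\ref{prop:invariant_reduction}, why indecomposability of the collection implies indecomposability of each $U_i$ as an $\cR$-module.
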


\begin{proof}
We look at the reduction to module decomposition, the algebra $\cR$ discussed above and the vector space $U = \oplus_{v \in V} U_v$. Let us define $U_i = \oplus_{v \in V} U_{v,i}$ and $U'_j = \oplus_{v \in V} U_{v,j}'$. Then $U = U_1 \oplus \cdots \oplus U_s$ and $U = U_1' \oplus \cdots \oplus U_{s'}'$ are two decompositions of $U$ into indecomposable $\cR$-submodules (because of Corollary \ref{corr:decomp_reduction}). Hence by Theorem \ref{thm:Krull-Schmidt}, $s = s'$, and there exist a permutation $\sigma: [s] \rightarrow [s]$ and a linear map $L: U \rightarrow U$  such that $U_i' = L \circ U_{\sigma(i)}$ and $L \circ R = R \circ L$ for every $R \in \cR$.
\\
\\
Now for every $v \in V$, $L \circ \Pi_v = \Pi_v \circ L$ which implies that $L \circ U_v \subseteq U_v$. We can call the restriction of $L$ to $U_v$ as the map $L_v: U_v \rightarrow U_v$. Now take an operator $L_{v,w} \in \cL_{v,w}$ and $\ext(M) \in \widetilde{\cL}_{v,w}$. We have that $L \circ \ext(L_{v,w}) = \ext(L_{v,w}) \circ L$. This implies that $L_w \circ L_{v,w} = L_{v,w} \circ L_v$ for all $L_{v,w} \in \cL_{v,w}$. Also,
$$
U_{v,i}' = \Pi_v \circ U_i' = \Pi_v \circ L \circ U_{\sigma(i)} = L \circ \Pi_v \circ U_{\sigma(i)} = L \circ U_{v, \sigma(i)}.
$$
This completes the proof.
\end{proof}

As a corollary, we get a uniqueness theorem for vector space decomposition.

\begin{corollary}[Vector space decomposition: uniqueness]\label{corrLuniqueness_vector_space_decomp}
Suppose $\cL$ is a set of linear maps between vector spaces $U$ and $W$. Suppose $U = U_1 \oplus \cdots \oplus U_s, W = W_1 \oplus \cdots \oplus W_s$ and $U = U_1' \oplus \cdots \oplus U_{s'}', W = W_1' \oplus \cdots \oplus W_{s'}'$ are two indecomposable decompositions with respect to $\cL$. Then $s = s'$. Furthermore, there exist linear maps $D: U \rightarrow U$ and $E: W \rightarrow W$ and a permutation $\sigma: [s] \rightarrow [s]$ such that $U_i' = D \circ U_{\sigma(i)}$, $W_i' = E \circ W_{\sigma(i)}$ for all $i \in [s]$ and $E \circ L = L \circ D$ for all $L \in \cL$.
\end{corollary}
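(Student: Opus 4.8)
The plan is to derive this as a direct instantiation of the uniqueness theorem for generalized vector space decomposition proved just above. Concretely, I would form the directed graph $G = (V,E)$ with two vertices $V = \{v,w\}$ and a single edge $E = \{(v,w)\}$, attach the vector space $U_v := U$ to $v$ and $U_w := W$ to $w$, and let the edge carry the linear maps $\cL_{v,w} := \cL$. With this dictionary, a decomposition of $U$ and $W$ that is indecomposable with respect to $\cL$ (i.e. $\langle \cL \circ U_i \rangle \subseteq W_i$ for all $i$, with no compatible proper refinement) is precisely a collection of indecomposable decompositions for the generalized vector space decomposition instance $G$, since the edge condition $\langle \cL_{v,w} \circ U_{v,i}\rangle \subseteq U_{w,i}$ becomes $\langle \cL \circ U_i\rangle \subseteq W_i$ and the notion of indecomposability (Definition \ref{defn: invariant and indecomposable subspaces}) is the same in both formulations.

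Next I would feed the two given indecomposable decompositions, $U = U_1 \oplus \cdots \oplus U_s$, $W = W_1 \oplus \cdots \oplus W_s$ and $U = U_1' \oplus \cdots \oplus U_{s'}'$, $W = W_1' \oplus \cdots \oplus W_{s'}'$, into that theorem. It returns $s = s'$, a permutation $\sigma : [s] \to [s]$, and linear maps $L_v : U_v \to U_v$, $L_w : U_w \to U_w$ with $U_{v,i}' = L_v \circ U_{v,\sigma(i)}$ and $U_{w,i}' = L_w \circ U_{w,\sigma(i)}$ for all $i \in [s]$, together with $L_w \circ L_{v,w} = L_{v,w} \circ L_v$ for every $L_{v,w} \in \cL_{v,w}$. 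Setting $D := L_v : U \to U$ and $E := L_w : W \to W$ and unwinding the dictionary, these statements read exactly $U_i' = D \circ U_{\sigma(i)}$, $W_i' = E \circ W_{\sigma(i)}$ for all $i \in [s]$, and $E \circ L = L \circ D$ for all $L \in \cL$, which is the claim.

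Since the substantive content — the reduction to module decomposition via the algebra $\cR$, Proposition \ref{prop:invariant_reduction}, Corollary \ref{corr:decomp_reduction}, and the Krull--Schmidt theorem (Theorem \ref{thm:Krull-Schmidt}) — has already been carried out in the proof of the generalized theorem, the only thing to verify here is that vector space decomposition is literally the two-vertex, one-edge special case of the generalized problem, which is immediate from the definitions. Thus there is no real obstacle; the work is purely bookkeeping. If desired, one could additionally note that $D$ and $E$ may be taken invertible: in the proof of the generalized theorem the map on $U_v \oplus U_w$ arises from an invertible element of $\adj(\cR)$ and hence restricts to a bijection on each $U_v$ and $U_w$ (as it commutes with the projectors $\Pi_v, \Pi_w$); but this strengthening is not required for the statement as given.
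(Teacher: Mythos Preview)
Your proposal is correct and matches the paper's approach exactly: the paper explicitly notes that ``the vector space decomposition problem corresponds to two vertices and a single edge between them,'' and presents this corollary immediately after the generalized uniqueness theorem with the words ``As a corollary, we get a uniqueness theorem for vector space decomposition,'' leaving the specialization implicit. Your write-up simply spells out that specialization, which is all that is needed.
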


We also mention that via the above reduction, we get a polynomial time algorithm for generalized vector space decomposition (over finite fields, reals and complex numbers) using the polynomial time algorithm for module decomposition in \cite{ChistovIK97}. However, we do not use this algorithm for our learning problem since we also want the algorithm to work over rationals which is possible to do in our setting with a simpler specialized algorithm.
\section{Why doesn't the shifted partials measure work?} \label{sec: limitation of SP}
In this section, we explain why the shifted partials measure (as it is) is unlikely to satisfy the basic non-degeneracy condition given by Equation \eqref{eqn:direct_sum} in Section \ref{sec:intro}, if $n \geq d^2$. The shifted partials measure ($\mathsf{SP}$), introduced in \cite{Kayal12eccc}, is defined as follows: Let $f \in \F[\vecx]$ be an $n$-variate degree-$d$ homogeneous polynomial and $k, \ell \in \N$. Then,
$$\mathsf{SP}_{k, \ell}(f) := \dim \left \langle \vecx^{\ell} \cdot \partial^k_{\vecx}~ f\right \rangle.$$ 
Clearly, $\mathsf{SP}_{k, \ell}(f)$ is upper bounded by $\min \left({n+k-1 \choose k} \cdot {n+\ell-1 \choose \ell}, {n+d-k+\ell-1 \choose d-k+\ell} \right)$. Suppose
\begin{equation*}
f = c_1Q_1^m + \ldots + c_s Q_s^m,
\end{equation*}
where each $c_i\in \F^{\times}$, $Q_i$ is a homogeneous polynomial of degree $t$, and $tm = d$. Let $U(f) := \left \langle \vecx^{\ell} \cdot \partial^k_{\vecx}~ f\right \rangle$. We wish to satisfy the main non-degeneracy condition
\begin{equation} \label{eqn: main non-degeneracy again} 
U(f) = U(Q_1^m) \oplus \ldots \oplus U(Q_s^m),  
\end{equation}
for random $Q_1, \ldots, Q_s$. This imposes the restriction $\ell < d$, as otherwise $U(Q_i^m) \cap U(Q_j^m) \neq \{0\}$ for $i \neq j$. On the other hand, $\mathsf{SP}_{k, \ell}(Q_i^m)$ is upper bounded by ${n + k(t-1) + \ell -1 \choose k(t-1) + \ell }$. If
$$s \cdot {n + k(t-1) + \ell -1 \choose k(t-1) + \ell} \leq \min \left({n+k-1 \choose k} \cdot {n+\ell-1 \choose \ell}, {n+d-k+\ell-1 \choose d-k+\ell} \right),$$
then we may be able to satisfy the direct sum given by Equation \eqref{eqn: main non-degeneracy again}. For this, we need $kt \leq d$. But, with both $\ell$ and $kt$ upper bounded by $d$, ${n + k(t-1) + \ell -1 \choose k(t-1) + \ell}$ cannot be less that ${n+k-1 \choose k} \cdot {n+\ell-1 \choose \ell}$ with growing $t$, if $n \geq d^2$. \\

Thus, it seems difficult to satisfy the direct sum condition using the shifted partials measure if $n \geq d^2$. However, if $n$ is much smaller than $d$ then it may be possible to achieve the same. This is what spurred us to think in the direction of reducing the number of variables to below $d$ using affine projections. Indeed, we have shown in this work that such affine projections do work (for both lower bound and learning) even \emph{without shifts by monomials}. But, shifts may play a crucial role if $n$ is much smaller than $d$ to begin with (say, if $n$ is a constant), in which case doing affine projections does not seem to help.  
\section{Proofs from Section \ref{sec:learning}} \label{sec:proofs from sec learning}

\subsection*{Proof of Observation \ref{obs: random L gives U a direct sum structure}}
As $C$ is non-degenerate, Condition \ref{cond1} of Definition \ref{defn: non-degeneracy} implies, 
$$\APP_{k,n_0}(f) = s \cdot {n_0 + k(t-1) -1 \choose k(t-1)}.$$
By Proposition \ref{prop:setting parameters to satisfy a bunch of conditions}, $|\F| \gg (d-k) \cdot {n+k \choose k}$. Arguing as in Observation \ref{obs:random L}, with probability $1 - o(1)$,
$$\APP_{k,n_0}(f) = \dim U = s \cdot {n_0 + k(t-1) -1 \choose k(t-1)},$$
which implies $U = U_1 \oplus \ldots \oplus U_s$ and $\dim U_i = {n_0 + k(t-1) -1 \choose k(t-1)}$ for all $i \in [s]$.

\subsection*{Proof of Observation \ref{obs: random L gives tilde U a direct sum structure}}
As $C$ is non-degenerate, Condition \ref{cond3} of Definition \ref{defn: non-degeneracy} implies that there exists an $L$ such that 
$$\left \langle \vecz^{2k(t-1)} \cdot \pi_L(Q_1)^{e} \right \rangle + \ldots + \left \langle \vecz^{2k(t-1)} \cdot \pi_L(Q_s)^{e} \right \rangle = \left \langle \vecz^{2k(t-1)} \cdot \pi_L(Q_1)^{e} \right \rangle \oplus \ldots \oplus \left \langle \vecz^{2k(t-1)} \cdot \pi_L(Q_s)^{e} \right \rangle.$$
For any tuple of $n$ linear forms $L$, the degree of a polynomial in $\vecz^{2k(t-1)} \cdot \pi_L(Q_i)^{e}$ is at most $2d$ as $kt \leq d$ (by Proposition \ref{prop:setting parameters to satisfy a bunch of conditions}). If $L$ is a tuple of random linear forms then the polynomials in the set
$$\vecz^{2k(t-1)} \cdot \pi_L(Q_1)^{e} ~\bigcup~ \ldots ~\bigcup~ \vecz^{2k(t-1)} \cdot \pi_L(Q_s)^{e}$$
are $\F$-linearly independent with probability $1 - o(1)$ if $|\F| \gg 2ds \cdot {n_0 + 2k(t-1) -1 \choose 2k(t-1)}$, which is ensured by Proposition \ref{prop:setting parameters to satisfy a bunch of conditions}. 

\subsection*{Proof of Proposition \ref{prop: multi-gcd}}
Recall that $U_i = \left \langle \vecz^{k(t-1)} \cdot G_i^e \right \rangle$. If $g \in V = \left\langle G_1^{e}, \ldots, G_s^{e}\right\rangle$ then $z_1^{k(t-1)} \cdot g$ and $z_2^{k(t-1)} \cdot g$ belong to $U_1 + \ldots + U_s = U$. Hence, there are $a_1, \ldots a_{sr}, b_1, \ldots, b_{sr} \in \F$ such that
\begin{eqnarray*}
a_1f_1 + \ldots + a_{sr}f_{sr} &=& z_1^{k(t-1)} \cdot g ~~~~\text{and}\\
b_1f_1 + \ldots + b_{sr}f_{sr} &=& z_2^{k(t-1)} \cdot g.
\end{eqnarray*} 
On the other hand, suppose that there exist $a_1, \ldots a_{sr}, b_1, \ldots, b_{sr} \in \F$ such that 
\begin{equation} \label{eqn:ratio}
\frac{a_1f_1 + \ldots + a_{sr}f_{sr}}{z_1^{k(t-1)}} = \frac{b_1f_1 + \ldots + b_{sr}f_{sr}}{z_2^{k(t-1)}}.
\end{equation}
As $f_1, \ldots, f_{sr}$ is a basis of $U$, there are polynomials $P_1, \ldots, P_s, P'_1, \ldots, P'_s \in \left \langle \vecz^{k(t-1)} \right \rangle$ that satisfy 
\begin{eqnarray*}
a_1f_1 + \ldots + a_{sr}f_{sr} &=& P_1G_1^e + \ldots + P_sG_s^e ~~~~\text{and} \\
b_1f_1 + \ldots + b_{sr}f_{sr} &=& P'_1G_1^e + \ldots + P'_sG_s^e.
\end{eqnarray*} 
From Equation \eqref{eqn:ratio}, we have
$$\left(z_2^{k(t-1)}P_1 - z_1^{k(t-1)}P'_1 \right) \cdot G_1^e + \ldots + \left(z_2^{k(t-1)}P_s - z_1^{k(t-1)}P'_s \right) \cdot G_s^e = 0.$$
As $\left(z_2^{k(t-1)}P_i - z_1^{k(t-1)}P'_i \right) \in \left \langle \vecz^{2k(t-1)} \right \rangle$, by Observation \ref{obs: random L gives tilde U a direct sum structure}, $z_2^{k(t-1)}P_i - z_1^{k(t-1)}P'_i = 0$ for all $i \in [s]$. Hence, $z_1^{k(t-1)}$ divides $P_i$ and $z_2^{k(t-1)}$ divides $P'_i$ for all $i \in [s]$. But, $\deg(P_i) = \deg(P'_i) = k(t-1)$. Therefore, there are $\hat{a}_1, \ldots, \hat{a}_s \in \F$ such that
$$\frac{a_1f_1 + \ldots + a_{sr}f_{sr}}{z_1^{k(t-1)}} = \frac{b_1f_1 + \ldots + b_{sr}f_{sr}}{z_2^{k(t-1)}} = \hat{a}_1G_1^e + \ldots + \hat{a}_s G_s^e =: g(\vecz) \in V.$$

\subsection*{Proof of Proposition \ref{prop: decomposition of W}}
As $C$ is non-degenerate, Condition \ref{cond2} of Definition \ref{defn: non-degeneracy} implies that there exist $L$ and $P$ such that
\begin{eqnarray} \label{eqn: non-degeneracy condition 2} 
\app{P}{\vecz}{k}{(G_1^{e} + \ldots + G_s^{e})} &=& \app{P}{\vecz}{k}{G_1^{e}} \oplus \ldots \oplus \app{P}{\vecz}{k}{G_s^{e}}, ~~~~\text{and} \nonumber \\
\dim \app{P}{\vecz}{k}{G_i^{e}} &=&  {m_0 + k(t-1) -1 \choose k(t-1)} ~~\text{ for all }~~ i \in [s],
\end{eqnarray}
where $G_i = \pi_L(Q_i)$ and $e = m-k$. If $L$ and $P$ are tuples of random linear forms (as in Step \ref{mainalgo:step1} and \ref{mainalgo:step4} of Algorithm \ref{alg:learning sums of powers}) then the above equation holds with probability $1 - o(1)$ provided $|\F| \gg 2ds \cdot {m_0 + k(t-1) -1 \choose k(t-1)}$ (which is ensured by Proposition \ref{prop:setting parameters to satisfy a bunch of conditions}). Let $g_0 = G_1^{e} + \ldots + G_s^{e}$. By Equation \eqref{eqn: non-degeneracy condition 2},
\begin{equation} \label{eqn: non-degeneracy condition 2 applied to our setting}
\app{P}{\vecz}{k}{g_0} = W_1 \oplus \ldots \oplus W_s ~\text{ and }~ \dim W_i = {m_0 + k(t-1) -1 \choose k(t-1)} ~\text{ for all }~ i \in [s],
\end{equation}
where $W_i = \app{P}{\vecz}{k}{G_i^e}$. Recall that $W = \app{P}{\vecz}{k}{g}$, where $g$ is a random element of $V$. As $G_1^{e}, \ldots, G_s^{e}$ is a basis of $V$, we have $g = b_1G_1^{e} + \ldots + b_sG_s^{e}$, where $b_1, \ldots, b_s \in_r \F$. The next claim completes the proof of the proposition.

\begin{claim}
If $g = b_1G_1^{e} + \ldots + b_sG_s^{e}$ such that $b_1, \ldots, b_s \in_r \F^{\times}$, then $\app{P}{\vecz}{k}{g} = \app{P}{\vecz}{k}{g_0}$ with probability $1 - o(1)$.
\end{claim}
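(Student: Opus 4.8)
The plan is to sandwich $\app{P}{\vecz}{k}{g}$ between two subspaces that coincide for generic coefficients, and then let the non-degeneracy hypothesis supply the required ``generic witness.'' Write $g_b := b_1 G_1^{e} + \ldots + b_s G_s^{e}$ for a vector of indeterminates $b = (b_1, \ldots, b_s)$, so that $g = g_b$ for the random $b$ in the statement while $g_0 = g_{\mathbf{1}}$ with $\mathbf{1} = (1, \ldots, 1)$. First I would record the elementary identity $\pi_P(\partial_{\vecz}^{\alpha} g_b) = \sum_{i=1}^s b_i \cdot \pi_P(\partial_{\vecz}^{\alpha} G_i^{e})$, valid for every $\alpha$ with $|\alpha| = k$ by linearity of $\partial_{\vecz}^{\alpha}$ and of $\pi_P$. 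This already gives the inclusion $\app{P}{\vecz}{k}{g_b} \subseteq W_1 + \ldots + W_s$ for \emph{every} $b$, and by Equation~\eqref{eqn: non-degeneracy condition 2 applied to our setting} the right-hand side is exactly $W_1 \oplus \ldots \oplus W_s = \app{P}{\vecz}{k}{g_0}$, a space of dimension $N := s \cdot {m_0 + k(t-1) -1 \choose k(t-1)}$. So the whole claim reduces to showing $\dim \app{P}{\vecz}{k}{g_b} = N$ for a random $b$.

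Next I would phrase this dimension as a rank. Fix a basis of $\app{P}{\vecz}{k}{g_0}$ and let $M(b)$ be the matrix whose rows are the coordinate vectors of the polynomials $\{\pi_P(\partial_{\vecz}^{\alpha} g_b) : |\alpha| = k\}$ in that basis (well-defined since, by the first step, each such polynomial lies in $\app{P}{\vecz}{k}{g_0}$). By the identity above, every entry of $M(b)$ is an $\F$-linear form in $b_1, \ldots, b_s$, and $\rank M(b) = \dim \app{P}{\vecz}{k}{g_b} \le N$ for all $b$. Hence the locus $\{b : \rank M(b) < N\}$ is contained in the common zero set of the (finitely many) $N \times N$ minors of $M(b)$, each of which is a polynomial of degree at most $N$ in $b$.

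The key point --- and the place where non-degeneracy enters --- is that a point of full rank is handed to us for free: at $b = \mathbf{1}$ we have $g_{\mathbf{1}} = g_0$, so $\rank M(\mathbf{1}) = \dim \app{P}{\vecz}{k}{g_0} = N$ by Equation~\eqref{eqn: non-degeneracy condition 2 applied to our setting}. Therefore some $N \times N$ minor of $M(b)$ is a \emph{non-zero} polynomial of degree at most $N$. Since $N = \poly(n, s^t)$ is tiny compared with $|\F| \ge (ns)^{150 \cdot t}$ (Proposition~\ref{prop:setting parameters to satisfy a bunch of conditions}), the Schwartz--Zippel lemma shows that a uniformly random $b \in (\F^{\times})^s$ keeps that minor non-zero, hence $\rank M(b) = N$, with probability $1 - o(1)$. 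On this event $\dim \app{P}{\vecz}{k}{g} = N = \dim \app{P}{\vecz}{k}{g_0}$, which together with the inclusion from the first step forces $\app{P}{\vecz}{k}{g} = \app{P}{\vecz}{k}{g_0}$, as claimed.

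I do not expect a serious obstacle: this is a one-witness rank argument, and the witness is exactly the non-degeneracy instance $b = \mathbf{1}$. The only routine care needed is to confirm that the random $g \in V$ produced in Step~\ref{mainalgo:step3} does have uniformly random coefficients when expressed in the basis $(G_1^{e}, \ldots, G_s^{e})$ --- which holds because the change of basis from $(g_1, \ldots, g_s)$ to $(G_1^{e}, \ldots, G_s^{e})$ is invertible --- and to note that every field-size requirement invoked here is subsumed by Proposition~\ref{prop:setting parameters to satisfy a bunch of conditions}.
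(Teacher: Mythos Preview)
Your proposal is correct and follows essentially the same approach as the paper: show the inclusion $\app{P}{\vecz}{k}{g_b} \subseteq \app{P}{\vecz}{k}{g_0}$ by linearity, encode the dimension as the rank of a matrix whose entries are linear in $b$, use the witness $b=\mathbf{1}$ to certify that an appropriate $sq\times sq$ determinant is a nonzero polynomial of degree at most $sq$, and finish with Schwartz--Zippel. The paper's version differs only cosmetically---it writes the coefficient matrix in the full monomial basis and then passes to a square submatrix via fixed row/column selectors $R,C$---whereas you work directly in coordinates of a basis of $\app{P}{\vecz}{k}{g_0}$; the underlying rank argument is identical.
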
 
\begin{proof}
With every polynomial $\hat{g} \in V$, associate a ${n_0 + k -1 \choose k} \times {m_0 + et - k - 1 \choose et-k}$ matrix $M(\hat{g})$ as follows: The rows of $M(\hat{g})$ are indexed by all monomials in $\vecz$-variables of degree $k$ and the columns are indexed by all monomials in $\vecw$-variables of degree $et-k$. If $\alpha$ is a $\vecz$-monomial of degree $k$ and $\beta$ is a $\vecw$-monomial of degree $(et-k)$ then the $(\alpha,\beta)$-th entry of $M(\hat{g})$ is the coefficient of $\beta$ in $\pi_P\left(\frac{\partial^k \hat{g}}{\partial \alpha}\right)$. In other words, $M(\hat{g})$ is the \emph{coefficient matrix} consisting of the coefficients of the polynomials in $\pi_{P}(\der{\vecz}{k}{\hat{g}})$. Let $q = {m_0 + k(t-1) -1 \choose k(t-1)}$. Clearly, for every $\hat{g} \in V$,
\begin{eqnarray*}
\app{P}{\vecz}{k}{\hat{g}} &\subseteq& W_1 \oplus \ldots \oplus W_s = \app{P}{\vecz}{k}{g_0}, ~~~~~~\text{by Equation \eqref{eqn: non-degeneracy condition 2 applied to our setting}} \\
\Rightarrow \rank~[M(\hat{g})] &\leq& \rank~[M(g_0)] = s \cdot {m_0 + k(t-1) -1 \choose k(t-1)} = sq. 
\end{eqnarray*}   
There exist a $sq \times {n_0 + k -1 \choose k}$ matrix $R$ and a ${m_0 + et - k - 1 \choose et-k} \times sq$ matrix $C$ such that
\begin{equation} \label{eqn: rank of N0} 
\rank~[ R \cdot M(g_0) \cdot C] = \rank~[ M(g_0)] = sq.
\end{equation} 
For any $\hat{g} \in V$, denote the $sq \times sq$ matrix $R \cdot M(\hat{g}) \cdot C$ by $N(\hat{g})$ and $R \cdot M(g_0) \cdot C$ by $N(g_0)$. Let $\hat{g} = y_1G_1^e + \ldots + y_sG_s^e$ be an arbitrary element of $V$, where $y_1, \ldots, y_s \in \F$. Then
\begin{eqnarray*}
M(\hat{g}) &=& y_1\cdot M(G_1^e) + \ldots + y_s\cdot M(G_s^e) \\
\Rightarrow N(\hat{g}) &=& y_1\cdot N(G_1^e) + \ldots + y_s\cdot N(G_s^e).
\end{eqnarray*}
Treating $y_1, \ldots, y_s$ as formal variables, we can infer that $\det(N(\hat{g}))$ is a non-zero polynomial in $y_1, \ldots, y_s$ of degree at most $sq$. This is because, by setting $y_1 = \ldots = y_s = 1$ we get $\hat{g} = g_0$, and we already know that $\det(N(g_0)) \neq 0$ from Equation \eqref{eqn: rank of N0}. Thus, if $g = b_1G_1^{e} + \ldots + b_sG_s^{e}$ such that $b_1, \ldots, b_s \in_r \F^{\times}$, then with probability $1-o(1)$ we have $\det(N(g)) \neq 0$ as $|\F| \gg sq$ (by Proposition \ref{prop:setting parameters to satisfy a bunch of conditions}). That is, $\rank~[ N(g)] = sq = \rank~[ M(g)]$ which implies $\app{P}{\vecz}{k}{g} = \app{P}{\vecz}{k}{g_0}$.
\end{proof}

\subsection*{Proof of Proposition \ref{prop: adjoint is diagonalizable}}
Treat $\vecw^{k(t-1)}$ as an ordered set and let $B \in \GL_{sq}(\F)$ be the basis change matrix from $(h_1, \ldots, h_{sq})$ to $(\vecw^{k(t-1)} \cdot \pi_P(G_1^{e-k}), \ldots, \vecw^{k(t-1)} \cdot \pi_P(G_s^{e-k}))$. Let $K$ be an arbitrary element of $\left \langle \CAL{L}_2 \right \rangle$. As $V = V_1 \oplus \ldots \oplus V_s,~ W = W_1 \oplus \ldots \oplus W_s$ is an indecomposable decomposition of $V$ and $W$ under the action of $\CAL{L}_2$, the matrix $B K A^{-1}$ has the following structure: The columns of $BKA^{-1}$ are indexed by $(G_1^e, \ldots, G_s^e)$ and the rows are indexed by $(\vecw^{k(t-1)} \cdot \pi_P(G_1^{e-k}), \ldots, \vecw^{k(t-1)} \cdot \pi_P(G_s^{e-k}))$. The $G_j^e$-th column of $BKA^{-1}$ has its non-zero entries confined to the $q$ rows indexed by $\vecw^{k(t-1)} \cdot \pi_P(G_j^{e-k})$. \\

By definition of the adjoint, $(D,E) \in \adj(\CAL{L}_2)$ if and only if
\begin{equation} \label{eqn: adjoint equation in the canonical basis}
BKA^{-1} \cdot ADA^{-1} ~=~ BEB^{-1} \cdot BKA^{-1} ~~~~~\text{ for all }~ K \in \left \langle \CAL{L}_2 \right \rangle. 
\end{equation}

Expressed in the basis $(G_1^e, \ldots, G_s^e)$ of $V$, the element $g_0 = G_1^e + \ldots + G_s^e$ is the all-one vector $\mathbf{1} \in \F^s$. Let $\beta \in \vecw^{k(t-1)}$ and $j \in [s]$ be arbitrarily chosen. From the proof of Proposition \ref{prop: decomposition of W}, it follows that there is a $K \in \left \langle \CAL{L}_2 \right \rangle$ such that $BKA^{-1} \cdot \mathbf{1}$ is the unit vector whose $(\beta \cdot \pi_P(G_j^{e-k}))$-th entry is one and all other entries are zero. In other words, all but the $(G_j^e,~ \beta \cdot \pi_P(G_j^{e-k}))$-th entry of $BKA^{-1}$ is zero, and the $(G_j^e,~ \beta \cdot \pi_P(G_j^{e-k}))$-th entry is $1$. As $ADA^{-1}$ and $BEB^{-1}$ satisfy Equation \eqref{eqn: adjoint equation in the canonical basis} for every such $K$ (as we vary $\beta \in \vecw^{k(t-1)}$ and $j \in [s]$), both $ADA^{-1}$ and $BEB^{-1}$ are diagonal matrices, i.e., $A \cdot \adj(\CAL{L}_2)_1 \cdot A^{-1} \subseteq \CAL{D}$. \\

Using Equation \eqref{eqn: adjoint equation in the canonical basis}, it is an easy exercise to show that $\CAL{D} \subseteq A \cdot \adj(\CAL{L}_2)_1 \cdot A^{-1}$. Therefore, $A \cdot \adj(\CAL{L}_2)_1 \cdot A^{-1} = \CAL{D}$.

\subsection*{Proof of Proposition \ref{prop: equality of cond1}} 
Clearly, $U \subseteq U_1 + \ldots + U_s$. We will prove that $U_i \subset U$ for every $i \in [s]$. Observe that 
$$U_i = \app{L}{\vecx}{k}{Q_i^m} \subseteq \left \langle \vecz^{k(t-1)} \cdot \pi_L(Q_i)^{m-k} \right \rangle.$$
We will now show that $\left \langle \vecz^{k(t-1)} \cdot \pi_L(Q_i)^{m-k} \right \rangle \subset U$. Let $\mu$ be an arbitrary $\vecz$-monomial of degree $k(t-1)$. Then
$$\mu = \gamma_{l_1}^{k_1} \cdot \gamma_{l_2}^{k_2} \cdots \gamma_{l_r}^{k_r}$$
for some distinct $l_1, \ldots, l_r \in [b]$, where $k_1 + \ldots + k_r = k$. Let $\alpha_i := (y_{i1l_1} \cdot y_{12l_1} \cdots y_{ik_1 l_1}) \cdot (y_{i1l_2} \cdot y_{12l_2} \cdots y_{ik_2 l_2}) \cdots (y_{i1l_r} \cdot y_{12l_r} \cdots y_{ik_r l_r})$. Using the combinatorial design of the sets $S_1, \ldots S_s$, we get
$$\frac{\partial^k f}{\partial \alpha_i} = \frac{\partial^k Q_i^m}{\partial \alpha_i} = k! \cdot {m \choose k} \cdot \mu \cdot Q_i^{m-k} .$$ 
As $\mu$ is arbitrary and $\char(\F) \nmid k! \cdot {m \choose k}$, we have $\left \langle \vecz^{k(t-1)} \cdot \pi_L(Q_i)^{m-k} \right \rangle \subset U$. From the above equation, it is also easy to notice that $U_i = \left \langle \vecz^{k(t-1)} \cdot \pi_L(Q_i)^{m-k} \right \rangle$.

\subsection*{Proof of Proposition \ref{prop: direct sum of cond1 and cond2}}
We will show that, for every $i \in [s]$, there is a monomial in $P_i\cdot G_i^{m-k}$ that cannot be generated by any other $P_{j} \cdot G_{j}^{m-k}$ for $i \neq j$. The following observation will be useful.

\begin{observation} \label{obs: all monomials present}
Consider a product $P \cdot \ell^{\hat{d}}$, where $P$ is a non-zero polynomial in $\F[\vecz]$ and $\ell = \sum_{z \in \hat{\vecz}}{z}$ for some $\hat{\vecz} \subseteq \vecz$. Let $\char(\F) > \deg_{\vecz}(P \cdot \ell^{\hat{d}})$. Then, for every monomial $\mu \in \hat{\vecz}^{\hat{d}}$, there is a monomial $\beta$ (with non-zero coefficient) in $P \cdot \ell^{\hat{d}}$ such that $\mu$ divides $\beta$. 
\end{observation}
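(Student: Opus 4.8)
The plan is to prove Observation~\ref{obs: all monomials present} by induction on $\hat d$, peeling off one power of $\ell$ at a time using a single partial derivative. The base case $\hat d = 0$ is immediate: the only monomial in $\hat\vecz^{0}$ is $1$, which divides every monomial, and $P \neq 0$ has a monomial.

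For the inductive step, fix $\mu \in \hat\vecz^{\hat d}$ with $\hat d \geq 1$ and pick a variable $v \in \hat\vecz$ that divides $\mu$ (possible since $\deg_\vecz \mu = \hat d \geq 1$); set $\mu' := \mu/v \in \hat\vecz^{\hat d - 1}$. Since $\partial_v \ell = 1$, we have the identity
$$\partial_v\bigl(P \cdot \ell^{\hat d}\bigr) \;=\; \bigl(\ell \cdot \partial_v P + \hat d \cdot P\bigr)\cdot \ell^{\hat d - 1} \;=:\; Q \cdot \ell^{\hat d - 1},$$
and $\deg_\vecz(Q\cdot\ell^{\hat d-1}) \leq \deg_\vecz(P\cdot\ell^{\hat d}) < \char(\F)$, so the hypothesis on the characteristic is inherited. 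I would first check that $Q \neq 0$: writing $P = \sum_{j=0}^{e^*} v^j P_j$ with $P_j \in \F[\vecz\setminus\{v\}]$, $P_{e^*} \neq 0$, and $e^* := \deg_v P$, one computes that the coefficient of $v^{e^*}$ in $Q$ equals $(e^* + \hat d)\,P_{e^*}$, which is nonzero because $P_{e^*} \neq 0$ and $0 < e^* + \hat d \leq \deg_\vecz(P\cdot\ell^{\hat d}) < \char(\F)$ (this also covers the case $e^* = 0$, where $\partial_v P = 0$ and $Q = \hat d\,P$). Hence the induction hypothesis applies to $Q$, $\ell$, and exponent $\hat d - 1$, producing a monomial $\vecx^{\beta'}$ of $Q\cdot\ell^{\hat d - 1}$ with nonzero coefficient such that $\mu' \mid \vecx^{\beta'}$.

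It then remains to pull this back through $\partial_v$. For any polynomial $g$, the coefficient of $\vecx^{\beta'}$ in $\partial_v g$ equals $(\beta'_v + 1)$ times the coefficient of the monomial $v\cdot\vecx^{\beta'}$ in $g$. Applying this with $g = P\cdot\ell^{\hat d}$ and using $0 < \beta'_v + 1 \leq \deg_\vecz(P\cdot\ell^{\hat d}) < \char(\F)$, I conclude that $v\cdot\vecx^{\beta'}$ is a monomial of $P\cdot\ell^{\hat d}$ with nonzero coefficient; and $\mu = v\cdot\mu'$ divides $v\cdot\vecx^{\beta'}$ since $\mu' \mid \vecx^{\beta'}$. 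This closes the induction. The only step that genuinely uses the characteristic hypothesis, rather than routine bookkeeping with coefficients under a single partial derivative, is the claim $Q \neq 0$: in small positive characteristic the operator $P \mapsto \ell\,\partial_v P + \hat d\,P$ can have a nontrivial kernel, so this is the delicate point, and I expect it to be the main obstacle to watch.
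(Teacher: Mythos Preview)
Your proof is correct, but it proceeds differently from the paper's. The paper argues by contradiction in one shot: it first writes $P\cdot\ell^{\hat d}=P'\cdot\ell^{d'}$ with $P'$ coprime to $\ell$ and $d'\geq\hat d$, then assumes no monomial of $P'\cdot\ell^{d'}$ is divisible by $\mu$, so that $\partial_\mu(P'\cdot\ell^{d'})=0$. Expanding via the product rule, this derivative equals $\tfrac{d'!}{(d'-\hat d)!}\,P'\,\ell^{d'-\hat d}+g\cdot\ell^{d'-\hat d+1}$ for some $g$, and dividing by $\ell^{d'-\hat d}$ forces $\ell\mid P'$, contradicting coprimality (the characteristic assumption is used only to ensure $\tfrac{d'!}{(d'-\hat d)!}\neq 0$).

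Your inductive approach peels off one derivative at a time and replaces the coprimality trick with an explicit leading-coefficient computation in the chosen variable $v$: the identity $\deg_v Q=\deg_v P$ with leading coefficient $(e^*+\hat d)P_{e^*}$ sidesteps any need to factor out $\ell$. Both arguments hinge on the same characteristic bound, but they invoke it at different places: the paper uses it once, to make a falling factorial nonzero after factoring out $\ell$, while you use it at each inductive step both to keep $Q\neq 0$ and to invert $\partial_v$ on the relevant monomial. Your route is slightly longer but more self-contained; the paper's is shorter once one accepts the chain-rule expansion.
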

\begin{proof}
Let $\mu$ be a monomial in $\hat{\vecz}^{\hat{d}}$. Write the product $P \cdot \ell^{\hat{d}}$ as $P' \cdot \ell^{d'}$, where $P'$ is coprime to $\ell$ and $d' \geq \hat{d}$. For contradiction, suppose that there is no monomial in $P' \cdot \ell^{d'}$ that is divisible by $\mu$. Then,
\begin{eqnarray*}
\frac{\partial^{\hat{d}}}{d \mu} (P' \cdot \ell^{d'}) &=& 0 \\
\Rightarrow \frac{d'!}{(d'-\hat{d})!} \cdot P' \cdot \ell^{d'-\hat{d}} + g \cdot \ell^{d'-\hat{d} +1} &=& 0, \quad \quad \text{for some $g \in \F[\vecz]$~~ (by chain rule)} \\
\Rightarrow \frac{d'!}{(d'- \hat{d})!} \cdot P' + g \cdot \ell &=& 0.
\end{eqnarray*}
This gives a contradiction as $\char(\F) > \deg_{\vecz}(P \cdot \ell^{\hat{d}})$ and $P'$ is non-zero and not divisible by $\ell$.
\end{proof}
Let $\vecz_i = \{z_{i,1}, \ldots, z_{i,p}\}$. We do the analysis for the two cases $t(m-k) \leq \sqrt{n_0}$ and $t(m-k) \geq \sqrt{n_0}$. \\

Suppose $t(m-k) \leq \sqrt{n_0}$~ so that $|\vecz_i \cap \vecz_{j}| \leq \left \lfloor \frac{t(m-k)}{10} \right \rfloor$ for $i \neq j$. By Observation \ref{obs: all monomials present}, there is a monomial $\beta_i$ in $P_i \cdot \left(z_{i,1} + \ldots + z_{i,p} \right)^{t(m-k)}$ that is divisible by $z_{i,1} \cdot z_{i,2} \cdots z_{i, \lfloor \frac{t(m-k)}{2} \rfloor}$ as $p = \lfloor \frac{\sqrt{n_0}}{2} \rfloor \geq \lfloor \frac{t(m-k)}{2} \rfloor$. If $\beta_i$ is generated by some other term $P_{j} \cdot \left(z_{j,1} + \ldots + z_{j,p} \right)^{t(m-k)}$ then there is a monomial in $P_j$ that is divisible by at least $\lfloor \frac{t(m-k)}{2} \rfloor - \lfloor \frac{t(m-k)}{10} \rfloor$ distinct $\vecz$-variables, as $|\vecz_i \cap \vecz_{j}| \leq \left \lfloor \frac{t(m-k)}{10} \right \rfloor$. But this is not possible as $2k(t-1) < \lfloor \frac{t(m-k)}{2} \rfloor - \lfloor \frac{t(m-k)}{10} \rfloor$ for $m > 7k$. \\

Suppose $t(m-k) \geq \sqrt{n_0}$, in which case $\frac{t(m-k)}{p} \geq 2$. By Observation \ref{obs: all monomials present}, there is a monomial $\beta_i$ in $P_i \cdot \left(z_{i,1} + \ldots + z_{i,p} \right)^{t(m-k)}$ that is divisible by
$$z_{i,1}^{\lfloor \frac{t(m-k)}{p} \rfloor} \cdot z_{i,2}^{\lfloor \frac{t(m-k)}{p} \rfloor} \cdots z_{i,p}^{\lfloor \frac{t(m-k)}{p} \rfloor}.$$
If $\beta_i$ is generated by some other term $P_{j} \cdot \left(z_{j,1} + \ldots + z_{j,p} \right)^{t(m-k)}$ then there is a monomial in $P_j$ that is divisible by at least $p - \frac{p+1}{5} = \frac{4p-1}{5}$ distinct $\vecz$-variables each with multiplicity $\lfloor \frac{t(m-k)}{p} \rfloor$ (as $|\vecz_i \cap \vecz_{j}| \leq \frac{p+1}{5}$). But this is not possible as $2k(t-1) < \frac{4p-1}{5} \cdot \lfloor \frac{t(m-k)}{p} \rfloor$ for $m > 7k$. 

\section{Proofs from Section \ref{sec:lowerbound}} \label{sec:proofs from sec lowerbound}

\subsection*{Proof of Observation \ref{obs:parameters}}
As $k = \lfloor \delta \cdot \frac{d}{t} \rfloor$ and $t \leq \frac{\delta \cdot d}{\ln d}$, we have $k \geq \lfloor \ln d \rfloor$. By definition,
$$c ~=~ \frac{3}{4} \cdot \frac{\ln \frac{n}{k}}{\ln \frac{d}{k}} ~=~ \frac{3}{4} \cdot \left[ \frac{\ln \frac{n}{d}}{\ln \frac{d}{k}} + 1 \right] ~\geq~ \frac{3}{4} \cdot 2 ~=~ \frac{3}{2} ~~~~\text{(as $n \geq d^2$)}.$$
By choice, $n_0 = \lfloor c \cdot k \rfloor$. So,
\begin{eqnarray*}
n_0 \leq ck  &\leq& \frac{3}{4} \cdot \left[ \frac{\ln \frac{n}{d}}{\ln \frac{t}{\delta}} + 1 \right] \cdot \frac{\delta d}{t} ~~~~~~~~~~~\text{(as $k \leq \frac{\delta d}{t}$)} \\
						 &\leq& \frac{3}{4} \cdot \left[ \frac{\ln \frac{n}{d}}{\ln \frac{\ln \frac{n}{d}}{\delta}} + 1 \right] \cdot \frac{\delta d}{\ln \frac{n}{d}} ~~~~~\text{(as $\ln \frac{n}{d} \leq t$)} \\
						 &=& \frac{3}{4} \cdot \left[ \frac{1}{\ln \ln \frac{n}{d} + \ln \frac{1}{\delta}} + \frac{1}{\ln \frac{n}{d}} \right] \cdot \delta d \\
						 &\leq& \frac{3}{4} \cdot \frac{2}{\ln \ln \frac{n}{d}} \cdot \delta d ~\leq~ \frac{d}{\ln \ln d}.
\end{eqnarray*}

\subsection*{Proof of Observation \ref{obs:trivial upper bound}} 
Recall from Equation \eqref{eqn:the measure}, 
$$\APP_{k,n_0}(f) = \max_L~ \dim \app{L}{\vecx}{k}{f},$$
where $L = (\ell_1(\vecz), \ldots, \ell_n(\vecz))$ is an $n$-tuple of linear forms in $\F[\vecz]$ and $|\vecz| = n_0$. An element of $\pi_{L}(\der{\vecx}{k}{f})$ is a homogeneous polynomial of degree $d-k$ in $\vecz$-variables, such a polynomial can have at most ${d-k+n_0-1 \choose n_0-1}$ many $\vecz$-monomials. Hence, $\APP_{k,n_0}(f) \leq {d-k+n_0-1 \choose n_0-1}$. 

\subsection*{Proof of Proposition \ref{prop:circuit upper bound}}
For $i\in[s]$, let $T_i = Q_{i1}Q_{i2} \cdots Q_{im_i}$ be a term of the formula $C$ given by Equation \eqref{eqn:Sigma Pi Sigma Pi t formula}, where degree of every $Q_{ij}$ is in $[t,2t]$. Observe that for any $n$-tuple of linear forms $L = (\ell_1(\vecz), \ldots, \ell_n(\vecz))$, 
$$\app{L}{\vecx}{k}{T_i} \subseteq \left\langle \vecz^{\leq 2tk} \cdot \bigcup_{S \in {[m_i] \choose k}} \left\{ \prod_{j\in [m_i] \backslash S}{Q_{ij}} \right\} \right\rangle.$$
Hence, $\APP_{k,n_0}(T_i) \leq {m \choose k} \cdot {n_0 + 2kt \choose n_0}$ as $m_i \leq m$. By subadditivity of the $\APP$ measure, we have
$$\APP_{k,n_0}(T_1 + \ldots + T_s) \leq s \cdot {m \choose k} \cdot {n_0 + 2kt \choose n_0}.$$ 

\subsection*{Proof of Proposition \ref{prop:lower bound on top fanin high t}}
Suppose $C = f_{n,d,t}$ in Equation \eqref{eqn:Sigma Pi Sigma Pi t formula}. Then, by Proposition \ref{prop:circuit upper bound}, $\APP_{k,n_0}(f_{n,d,t}) \leq s \cdot {m \choose k} \cdot {n_0 + 2kt \choose n_0}$. On the other hand, by Proposition \ref{prop:hard poly lower bound}, $\APP_{k,n_0}(f_{n,d,t}) = {d-k+n_0-1 \choose n_0-1}$. Therefore,
\begin{eqnarray*}
s \geq \frac{{d-k+n_0-1 \choose n_0-1}}{{m \choose k} \cdot {n_0 + 2kt \choose n_0}} &=& \frac{n_0}{d-k+n_0} \cdot \frac{{d-k+n_0 \choose n_0}}{{m \choose k} \cdot {n_0 + 2kt \choose n_0}} \\
	&\geq& \frac{k}{d} \cdot \frac{{d \choose n_0}}{{m \choose k} \cdot {n_0 + 2kt \choose n_0}}, \quad \quad \quad \quad \text{(as $n_0 \geq k$)} \\
	&\geq& \frac{k}{d} \cdot \frac{\left(\frac{d}{n_0}\right)^{n_0}}{\left(\frac{em}{k} \right)^k \cdot \left(\frac{e\cdot(n_0 + 2kt)}{n_0} \right)^{n_0}} \\
	&=& \frac{k}{d} \cdot \frac{1}{\left(\frac{em}{k} \right)^k} \cdot \left[ \frac{d}{e\cdot(n_0 + 2kt)} \right]^{n_0} \\
	&\geq& \frac{k}{d} \cdot \frac{1}{\left(\frac{em}{k} \right)^k} \cdot \left[ \frac{d}{4ekt} \right]^{n_0}, \quad \quad ~~\text{(as $n_0 \leq 2kt$)} \\
	&\geq& \frac{k}{d} \cdot \frac{1}{\left(4.01 \cdot e^{11} \right)^k} \cdot e^{9n_0}, \quad \quad ~\text{(plugging in the values of $k, m$ and $\delta$)} \\
	&\geq& \frac{k}{de^9} \cdot \frac{1}{e^{12.4 \cdot k}} \cdot e^{9 \cdot \frac{3}{4} \cdot \left[ \frac{\ln \frac{n}{d}}{\ln \frac{d}{k}} + 1 \right] \cdot k} ~~~\text{(plugging in the values of $n_0, c$, putting $4.01 < e^{1.4}$)} \\
	&\geq& \frac{k}{de^9} \cdot \frac{1}{e^{5.65 \cdot k}} \cdot e^{6.75 \cdot \left[ \frac{\ln \frac{n}{d}}{\ln \frac{d}{k}}\right] \cdot k} \\
	&\geq& \frac{k}{de^9} \cdot e^{1.1 \cdot \left[ \frac{\ln \frac{n}{d}}{\ln \frac{d}{k}}\right] \cdot k} \quad \quad \quad \quad \quad \text{(as $\frac{n}{d} \geq \frac{d}{k}$)} \\
	&\geq& \frac{1}{e^9} \cdot e^{0.1 \cdot \left[ \frac{\ln \frac{n}{d}}{\ln \frac{d}{k}}\right] \cdot k}  ~~\quad \quad \quad \quad \quad \text{(as $\frac{\ln \frac{n}{d}}{\ln \frac{d}{k}} \cdot k \geq \ln \frac{d}{k}$)} \\
	&=& \left(\frac{n}{d}\right)^{\Omega\left(\frac{d}{t \ln t}\right)}  \quad \quad \quad \quad \quad \quad \quad \text{(as $\frac{d}{k} = \Theta(t)$)}.
\end{eqnarray*}

\subsection*{Proof of Proposition \ref{prop:lower bound on top fanin low t}}
As in the proof of Proposition \ref{prop:lower bound on top fanin high t}, we have
\begin{eqnarray*}
s \geq \frac{{d-k+n_0-1 \choose n_0-1}}{{m \choose k} \cdot {n_0 + 2kt \choose n_0}} &=& \frac{n_0}{d-k+n_0} \cdot \frac{{d-k+n_0 \choose n_0}}{{m \choose k} \cdot {n_0 + 2kt \choose n_0}} \\
	&\geq& \frac{1}{2} \cdot \frac{{n_0 + d-k \choose n_0}}{{2^{O(\frac{d}{t})}} \cdot {n_0 + 2kt \choose n_0}}, \quad \quad \quad \quad \text{(as $n_0 \geq d$, $m = \Theta(d/t)$ and $k = \Theta(d/t)$)} \\
	&=& \frac{1}{2^{O(\frac{d}{t})}} \cdot \left( 1+ \frac{n_0}{d-k}\right) \cdots \left( 1 + \frac{n_0}{d-k-(d-k-2kt-1)}\right) \\
	&\geq& \frac{1}{2^{O(\frac{d}{t})}} \cdot \left(\frac{n_0}{d}\right)^{d-k-2kt} \\
	&\geq& \frac{1}{2^{O(\frac{d}{t})}} \cdot n^{\frac{k}{2d}\cdot(d-k-2kt)} \quad \quad \quad \quad \text{(as $\sqrt{n_0} \geq n^{\frac{k}{2d}} \geq d$)} \\
	&=& n^{\Omega(\frac{d}{t})} \quad \quad \quad \quad \quad \quad \quad \quad \quad \quad \text{(plugging in the value of $k$)}.
\end{eqnarray*}

\subsection*{Proof of Proposition \ref{prop:number of y monomials is greater}}
\textbf{High $t$ case.} In this case $n_0 = \lfloor c \cdot k \rfloor$, where $c = \frac{3}{4} \cdot \frac{\ln \frac{n}{k}}{\ln \frac{d}{k}}$. On one hand,
\begin{eqnarray*}
{d-k+n_0-1 \choose n_0-1} \leq {d + n_0 \choose n_0} &\leq& \left(e \cdot \left( \frac{d}{n_0} + 1 \right) \right)^{n_0} \\
																										 &\leq& \left(e \cdot \left( \frac{d}{ck} + 1 \right) \right)^{ck} \quad \quad \text{(as $n_0 \leq ck$)} \\
																										 &\leq& \left(e \cdot \frac{d}{k} \right)^{ck} \quad \quad\quad\quad\quad ~~\text{(as $c \geq \frac{3}{2}$)} \\
																										 &=& e^{ck} \cdot \left(\frac{n}{k}\right)^{0.75 \cdot k} \quad \quad\quad ~~\text{(plugging in the value of $c$)} \\
																										 &\leq& \left(\frac{n}{k}\right)^{0.76 \cdot k} \quad \quad\quad\quad\quad ~~\text{(as $e^{ck} \ll \left(\frac{n}{k}\right)^{0.01 \cdot k}$).}
\end{eqnarray*}
On the other hand, 
\begin{eqnarray*}
{n_1 \choose k} = {n-n_0(d-k) \choose k} &\geq& \left( \frac{n-n_0(d-k)}{k}\right)^k \\
																				 &\geq& \left( \frac{n}{k}-c(d-k)\right)^k		\quad \quad \text{(as $n_0 \leq ck$)} \\
																				 &\geq& \left(\frac{n}{k}\right)^{0.76 \cdot k}  \quad \quad \quad \quad \quad \text{(as $cd \ll \frac{n}{k}$).}
\end{eqnarray*}

\textbf{Low $t$ case.} In this case $n_0 = \lceil n^{\frac{k}{d}} \rceil$. Verify that $n_0 \geq d$.
\begin{eqnarray*}
{d-k+n_0-1 \choose n_0-1} \leq {d + n_0 \choose n_0} &\leq& \left( e \cdot \frac{n_0 + d}{d} \right)^d \\
																										 &\leq& \left( \frac{2e\cdot n_0}{d} \right)^d \quad \text{(as $n_0 \geq d$)}\\	
																										 &\leq& \left( \frac{n_1}{k} \right)^k \quad\quad\quad \text{(verify after putting the values of $n_0$ and $n_1$)} \\
																										 &\leq& {n_1 \choose k}
\end{eqnarray*}

\subsection*{Proof of Proposition \ref{prop:hard poly lower bound}}
Observe that $\der{\vecy}{k}{f_{n,d,t}} = B$ and $\pi(B) = \vecz^{d-k}$. So, $\APP_{k,n_0}(f_{n,d,t}) \geq {d-k+n_0-1 \choose n_0-1}$. On the other hand, by Observation \ref{obs:trivial upper bound}, $\APP_{k,n_0}(f_{n,d,t}) \leq {d-k+n_0-1 \choose n_0-1}$. Hence, we get the equality.

\end{document}